\theoremstyle{definition}
\newtheorem{definition}{Definition}[section]
\newtheorem{example}{Example}[section]
\newtheorem{theorem}{Theorem}[section]
\newtheorem{prop}{Proposition}[section]
\newtheorem{lemma}{Lemma}[section]
\newtheorem{remark}{Remark}[section]
\newcommand{\beq}{\begin{equation}}
\newcommand{\eeq}{\end{equation}}
\newcommand{\bqa} {\begin{eqnarray}}
\newcommand{\eqa} {\end{eqnarray}}
\newcommand{\p}{\partial}
\newcommand{\ups}{\upsilon}
\newcommand{\eps}{\varepsilon}
\def \l {\left(}
\def \r {\right)}
\def \lal {\langle}
\def \ral {\rangle}
\DeclareMathOperator{\Ad}{Ad}
\DeclareMathOperator{\Tr}{Tr}
\newcommand{\CO}{{\mathcal O}}
\newcommand{\CV}{{\mathcal V}}
\newcommand{\NN}{{\mathbb N}}
\newcommand{\ZZ}{{\mathbb Z}}
\newcommand{\RR}{{\mathbb R}}
\newcommand{\CCC}{{\mathbb C}}
\newcommand{\cstar}[1]{{\mathcal{#1}}}
\newcommand{\hilb}[1]{{\mathcal #1}}
\newcommand{\InvPhases}{{\mathcal{I}_{2d}}}
\newcommand{\InvPhasesZN}{{\mathcal{I}^{\ZZ/N}_{2d}}}
\title{Reflection positivity and a refined index for 2d invertible phases}
\author{Nikita Sopenko$\footnote{School of Natural Sciences, Institute for Advanced Study, 1 Einstein Drive, Princeton, NJ 08540 USA}$}
\date{\today}
\begin{document}
\maketitle

\begin{abstract}
We analyze the validity of reflection positivity in the classification of invertible phases of quantum spin systems. We provide a mathematical model in which every 2d invertible state admits a reflection-positive representative. We prove that reflection positivity provides a canonical lift from the set of invertible phases to the set of invertible phases protected by a $\ZZ/N$-rotational symmetry. Using this, we define a refined version of the index recently introduced by the author. This refined version conjecturally provides a microscopic characterization of an invariant that coincides with the chiral central charge $c_-$ when conformal field theory effectively describes the boundary modes.
\end{abstract}

\vspace{.7cm}

\section{Introduction}

Topological phases of quantum many-body systems correspond to equivalence classes of short-range correlated ground states at zero temperature. Two ground states represent the same phase if there is an adiabatic locality-preserving unitary evolution that transforms one state into another, possibly after addition of unentangled degrees of freedom.

It is conjectured that at least a large class of $d$-dimensional topological phases of quantum many-body systems can be classified by field theoretic methods \cite{kapustin2014symmetry, freed2016reflection}. In this approach, one assumes that macroscopic behavior can be effectively described by a $(d+1)$-dimensional unitary quantum field theory with relativistic symmetry. Positivity of energy gives a way to analytically continue such theories to Euclidean $D=d+1$-dimensional space where unitarity manifests itself in the condition called {\it reflection positivity} \cite{osterwalder1973axioms}. It states that for any codimension one hyperplane $\Sigma$ in $\RR^D$ and any observable $\CO$ supported in a half-space on one side of $\Sigma$, correlation functions satisfy\footnote{In the case of systems with fermionic degrees of freedom, a modified inequality holds.} $\lal \CO \overline{\CO}' \ral \geq 0$, where $\overline{\CO}'$ is an observable obtained by complex conjugation and reflection in $\Sigma$ (see Fig. \ref{fig:OsterwalderSchroder}). In the Hamiltonian picture of the original theory in Minkowski signature, it implies that vector spaces associated with constant time slices are equipped with a positive definite inner product. Further, it implies that for any codimension one hyperplane in $\RR^d$ and any element $x$ of the quantum algebra of observables $\cstar{A}$ that is supported on one side of this hyperplane, we have $\omega(x j(x)) \geq 0$, where $j$ is an anti-linear automorphism of $\cstar{A}$ implementing the $CRT$-transformation with respect to the hyperplane, and $\omega: \cstar{A} \to \CCC$ is a positive linear functional on $\cstar{A}$ that corresponds to the ground state. For these reasons, it is sometimes said that unitary relativistic field theories have an extended version of unitarity or reflection positivity.

From the perspective of a microscopic quantum many-body system, the assumptions of relativistic symmetry and extended reflection positivity are not justified. A microscopic model explicitly breaks spatial symmetries, therefore states can not be invariant under spatial deformations on the nose. Even if invariance under spatial isometries somehow emerges at large distances, it is unclear why one should expect an emergence of relativistic invariance. With that, it is unclear why we should expect an extended version of reflection positivity. Of course, since we start with a quantum mechanical system from the beginning, we do have unitarity in time. But a ground state $\omega$ of a generic quantum system would not satisfy the condition $\omega(x j(x)) \geq 0$. At best, one can hope that only some states in a given topological phase are reflection positive. 

This discussion raises natural questions. Is it true that states related by spatial deformations are in the same phase? Is it true that a given class of topological phases has reflection positive representatives? 

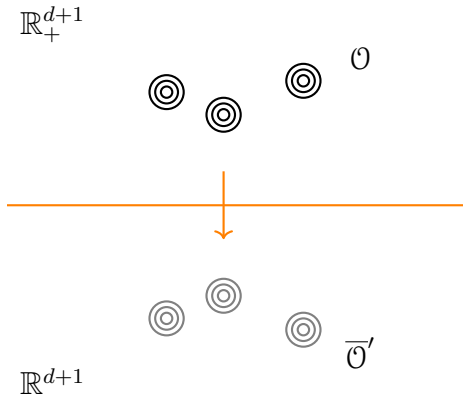
\begin{figure}
    \centering
\begin{tikzpicture}[scale=1.5]

\coordinate (A) at (-2, 2, 0);
\coordinate (B) at (2, 2, 0);
\coordinate (C) at (2, -2, 0);
\coordinate (D) at (-2, -2, 0);


\draw[thick,color=orange] (-2, 0, 0) -- (2, 0, 0);
\node at (-1.6, 1.6, 0) {$\mathbb{R}_+^{d+1}$};
\node at (-1.6, -1.6, 0) {$\mathbb{R}_-^{d+1}$};

\foreach \x/\y in {-0.6/1.0, -0.1/0.8, 0.6/1.1} {
  \foreach \r in {0.05, 0.1, 0.15} {
    \draw[thick] (\x,\y,0) circle (\r);
  }
}
\node at (1.1,1.3,0) {$\CO$};

\foreach \x/\y in {-0.6/-1.0, -0.1/-0.8, 0.6/-1.1} {
  \foreach \r in {0.05, 0.1, 0.15} {
    \draw[thick,gray] (\x,\y,0) circle (\r);
  }
}
\node at (1.1,-1.3,0) {$\overline{\CO}'$};

\draw[->, thick, color=orange] (-0.1,0.3,0) -- (-0.1,-0.3,0) node[midway,left] {};

\end{tikzpicture}
\caption{An observable $\CO$ and its time reversed reflection $\overline{\CO}'$.}
\label{fig:OsterwalderSchroder}
\end{figure}

\subsection{Canonical purification and locality}

Given a state on the algebra of quantum observables $\cstar{A}$ that models a $d$-dimensional quantum many-body system, there is a natural way to construct a reflection positive state that coincides with the original state in a half-space. Such a procedure is called a canonical purification and is well-known in quantum mechanics. In fact, once we fix the $CRT$-action that permutes two half-spaces, a desired reflection positive state is essentially unique and is reconstructed via the canonical purification. Since the phase of a given system is supposed to be determined by a large enough piece of the material, one may expect that by starting with a short-range entangled state, restricting it to a half-space and purifying it, one gets a state in the same phase which is reflection positive. What one needs to check though is that the resulting state still has short-range entanglement.

Topological phases of one-dimensional systems are well-understood by now \cite{ogata2019classification, kapustin2020classification}. They are basically characterized by a finite amount of entanglement between any two half-lines of an infinite system on $\RR$, which is mathematically formalized in terms of the {\it split property} \cite{Matsui}. This class of states is closed under the procedure of the canonical purification, because if we start with a split state, restrict it to a half-line and canonically purify it, the resulting state would still satisfy the split property.

In two dimensions, the situation is already more complicated. To see the nontriviality of the question, let us consider the following example \cite{KitaevBravyiChain}. Let $\hilb{H} = (\CCC^2)^{\otimes 2N}$, $N \in \NN$ be the Hilbert space of a periodic spin chain of length $2 N$ with the corresponding algebra of observables $\cstar{A} = M_2(\CCC)^{\otimes 2N}$. Let $X,Y,Z$ be the Pauli matrices, and let $X_i, Y_i, Z_i \in \cstar{A}$ be Pauli operators acting on site $i$, e.g., $X_i = 1 \otimes 1 \otimes ... \otimes X \otimes ... \otimes 1$, where $X$ appears at position $i$. Let $v \in \hilb{H}$ be a unit vector satisfying $X_i v =v$, $i=1,...,2N$ and $w = U v$, where
$$
U =  \l \prod_{j=1}^{2N-1} e^{\frac{\pi i}{4} Z_j Z_{j+1}} \r e^{\frac{\pi i}{4} Z_{2N} Z_{1}}.
$$
Note that $U$ is a product of strictly local commuting unitary observables. The state corresponding to $w$ is the ground state of the Hamiltonian 
$$H = Z_{2N} X_1 Z_2 + \sum_{i=2}^{2N-1} Z_{i-1} X_{i} Z_{i+1} + Z_{2N-1} X_{2N} Z_{1}$$
and is known as a {\it cluster state}. The restriction of this state to odd spins is described by the density matrix $\rho = 2^{-N}(1+X^{\otimes N})$. If we canonically purify the state in such a way that $i$-th spin corresponds to $(2N-i)$-th spin, then we get a pure state described by a vector $\tilde{w} =\frac{1}{\sqrt{2}} (1 + X_2 X_4...X_{2N}) \xi^{\otimes N}$, where $\xi \in \CCC^2 \otimes \CCC^2$ is the unit vector of an EPR pair: $(Z\otimes Z) \xi = (X \otimes X) \xi = \xi$. The resulting state $\psi$ has long-range correlations: $\psi(Z_1 Z_{2N} Z_{N} Z_{N+1}) = 1$, while $\psi(Z_1 Z_{2N}) = \psi(Z_{N} Z_{N+1}) = 0$.

For a given decomposition of $\RR^2$ into two half-spaces, we can always embed this spin chain into $\RR^2$ such that even and odd sites lie in different half-spaces (see Fig. \ref{fig:BravyiChain}) and are permuted by the corresponding $CRT$-transformation\footnote{A similar embedding appears as a counterexample to topological entanglement entropy formula and is often referred to as the Bravyi's counterexample \cite{BravyiCounterexample}}. Since this embedded chain can be prepared by a finite-time local Hamiltonian evolution and can be made arbitrarily large, it is unreasonable to expect that if we start with an arbitrary short-range entangled state with exponentially decaying correlations in the bulk, then after canonical purification we get a state with the control over its correlation length.

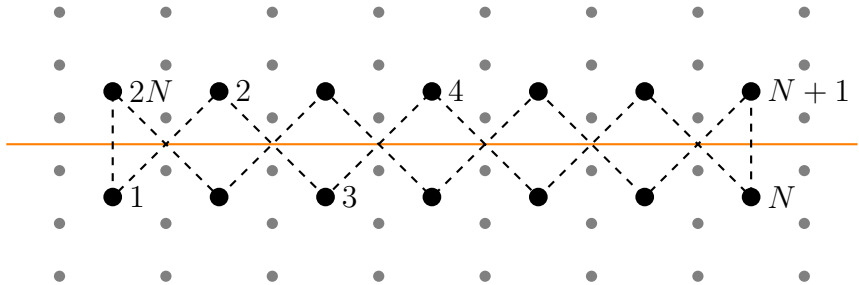
\begin{figure}
    \centering
\begin{tikzpicture}[scale = 0.7]

\draw[orange, thick] (-2, 0) -- (14, 0);

\coordinate (P1) at (0,-1);
\coordinate (P2) at (0,1);   
\coordinate (P3) at (2,-1);
\coordinate (P4) at (2,1);   
\coordinate (P5) at (4,-1);
\coordinate (P6) at (4,1);   
\coordinate (P7) at (6,-1);
\coordinate (P8) at (6,1);   
\coordinate (P9) at (8,-1);
\coordinate (P10) at (8,1);  
\coordinate (P11) at (10,-1);
\coordinate (P12) at (10,1); 
\coordinate (P13) at (12,-1);
\coordinate (P14) at (12,1); 

\draw[dashed, thick]
  (P1) -- (P2) -- (P3) -- (P6) -- (P7) -- (P10) -- (P11) --
  (P14) -- (P13) -- (P12) -- (P9) -- (P8) -- (P5) -- (P4) -- (P1);

\foreach \i in {1,...,14} {
    \fill (P\i) circle (5pt);
}

\node[right=2pt] at (P1) {$1$};
\node[right=2pt] at (P4) {$2$};
\node[right=2pt] at (P5) {$3$};
\node[right=2pt] at (P8) {$4$};
\node[right=2pt] at (P13) {$N$};
\node[right=2pt] at (P14) {$N+1$};
\node[right=2pt] at (P2) {$2N$};

\foreach \x in {-1,1,...,13} {
    \foreach \y in {-2.5,-1.5,...,2.5} {
        \fill[color=gray] ( \x, \y ) circle[radius=3pt];
    }
}

\end{tikzpicture}
    \caption{An embedding of the Bravyi chain into a spin system on $\RR^2$ separated by an orange line into two half-planes.}
    \label{fig:BravyiChain}
\end{figure}

Nevertheless, as we show in this paper, the purified state would still have a finite amount of entanglement in the thermodynamic limit between any two half-lines. More precisely, if we start with a pure state of a one-dimensional system that satisfies the split property, restrict it to any subsystem and then canonically purify it, then we would get a state that still satisfies the split property. Thus, even though we don't have a control over the correlations length, we can still characterize large-scale entanglement structure of purified states. The generalization of canonical purification to infinite-dimensional $C^*$-algebras by Woronowicz \cite{woronowicz1972purification, woronowicz1973purification} plays an essential role in the proof.

In this paper, we describe a mathematical model for invertible phases on $\RR^2$ based on states directly in the thermodynamic limit in which invertible states and states obtained by the purification of their restriction to a half-space are in the same invertible phase. While correlations in a purified state can be strong on an arbitrary large but finite scale, the state on the strip near the separating line satisfies the split property that guarantees finite entanglement between the half-strips and eventual decay of the correlations. Moreover, this model is flexible enough to show other desirable properties. In particular, the phase of an invertible state is invariant under well-behaved orientation-preserving homeomorphisms of $\RR^2$ such as rotations or scaling transformations, and states obtained by orientation-reversing homeomorphisms or by anti-unitary on-site automorphisms (time-reversal transformations) are in the opposite phase.

\subsection{Rotations, twist defects and reflection positivity}
\begin{figure}
\centering
\begin{tikzpicture}[scale=.5]

\filldraw[orange, ultra thick, fill = orange!5] (-5,-2) -- (-5+4,0+4-2) -- (5+4,0+4-2) -- (5,0-2);
\filldraw[orange, ultra thick, fill = orange!5] (-5,-1) -- (-5+4,0+4-1) -- (5+4,0+4-1) -- (5,0-1);
\filldraw[orange, ultra thick, fill = orange!5] (-5,0) -- (-5+4,0+4) -- (5+4,0+4) -- (5,0);

\draw[orange, ultra thick] (-5,-1) .. controls (-1,-1) .. (0,-1/2) .. controls (1,0) .. (5,0);
\filldraw[white, ultra thick] (-5,-2) .. controls (-1,-2) .. (0,-3/2) -- (0,-2);
\filldraw[white, ultra thick] (0,-2) -- (0,-1) .. controls (1,-2) .. (5,-2);
\filldraw[orange!5, ultra thick] (0.4, -1.2) -- (0,-1) -- (0,-3/2);
\draw[orange, ultra thick] (-5,-2) .. controls (-1,-2) .. (0,-3/2) .. controls (1,-1) .. (5,-1);
\draw[orange, ultra thick] (-5,0) .. controls (-1,0) .. (0,-1) .. controls (1,-2) .. (5,-2);

\filldraw [orange] (2,-1+2) circle (5pt);
\draw[rotate around={-45:(0,-1)}, orange, dashed] (1/2,-1) arc(0:180: 14pt and 85pt);

\end{tikzpicture}
\caption{A twist defect for $N=3$.}
\label{fig:twistdefect}
\end{figure}
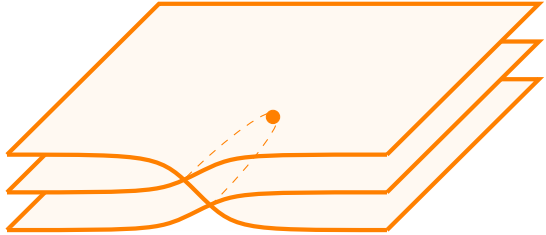

Reflection positivity is deeply tied with invariance under isometries of a Euclidean space. First of all, reflection positivity with respect to a given hyperplane automatically implies invariance under the corresponding $CRT$-symmetry. Further, if we have reflection positivity with respect to two different hyperplanes, then we automatically have an invariance under an (orientation-preserving) isometry, that is the composition of $CRT$-transformations corresponding to these hyperplanes. 

Accidental symmetries often provide us with constraints on the systems and potentially can be used to distinguish different phases. As a manifestation of this idea in this paper, we show that reflection positivity can be used to construct an index of 2d invertible phases that refines the index defined in \cite{sopenko2024}. It is conjectured that there exists a $\ZZ$-valued invariant for 2d invertible phases that under special circumstances of the emergence of conformal symmetry on the boundary is related to the chiral central charge $c_- \in 8 \ZZ$ of the Virasoro algebra \cite{kitaev2006anyons, kim2022chiral}. While it is still not proven, a step in this direction has been done in \cite{sopenko2024} where an index for invertible states has been defined. It was argued that it captures $c_- \bmod 24$. 

Let us recall the basic idea of the construction in \cite{sopenko2024}. Suppose we are given an invertible state $\psi$. By taking $N$ copies of this state, cutting them along a half-line and regluing in a way shown in Fig. \ref{fig:twistdefect}, we can construct a state that models a twist defect for an $N$-layered material. The fact that the state is invertible allows to show that such regluing can be done, and that the resulting state can be made invariant under $\ZZ/N$-group action that cyclically permutes the copies $\{1,2,...,N\} \to \{N,1,...,N-1\}$. Moreover, this regluing is unambiguous far away from the defect. It was argued that in the exchange process of a pair of such defects one gets a nontrivial universal phase factor $\theta_N$ that depends only on the $\ZZ/N$-charge of these defects. Using this idea, it was shown that $\omega_N = (\theta_N)^N$ defines an invariant of a topological phase of the original state $\psi$.

In \cite{sopenko2024}, no canonical choice of a $\ZZ/N$-charge for twist defect has been provided. However, it was argued, that if such a choice exists, then it would allow to define a refined invariant that would be given by the phase factor $\theta_N$ for twist defects with this canonical charge. In this paper, we show that such a canonical choice indeed exists. With an argument from conformal field theory, we conjecture that the resulting invariant $\theta_N$ is fine enough to capture $c_-$ as a real number, not only as $c_- \bmod 24$.

A similar ambiguity appears for systems protected by $\ZZ/N \subset SO(2)$-rotational symmetry, to which twist defect states are related by simply unwinding them onto the plane (see Fig. \ref{fig:TwistDefect2RotationallyInvariantState}). Given such an invertible state on an infinite plane, unless it is in a trivial phase, it is unclear which angular momentum is canonical. We show that if this rotationally invariant state is reflection positive, then its angular momentum is fixed. 

To explain the basic idea why it is true, let us consider a one-dimensional periodic spin chain of length $2N$ with rotational $\ZZ/N$-symmetry and reflection symmetry that permutes sites $i$ and $(2N-i)$. Let us denote by $J$ the anti-unitary operator on the total Hilbert space that implements reflection and complex conjugation. Then pure reflection positive states are represented by $J$-invariant vectors which span a {\it positive self-dual cone} in the Hilbert space. In particular, any two vectors in the interior of this cone must have a positive overlap. Since any two states which have different angular momenta are orthogonal, only one of them can lie in the interior of this cone, and therefore only one of them can be strictly reflection positive. In fact, as we show in Proposition \ref{prop:AngularMomenumofRP1dspinchain}, the angular momentum of a reflection positive state must vanish.

The same idea can be applied to states on an infinite two-dimensional spin system. While we do not have a canonical measure of the angular momentum, we can still consider the span of reflection positive vectors in the associated GNS Hilbert space. Only vectors with a particular angular momentum can lie in the interior of this cone. We show that any invertible state admits a reflection positive $\ZZ/N$-rotationally-invariant representative. Through these states, we define the canonical $\ZZ/N$-charge for twist defects.

{\bf Disclaimer.} This paper is concerned only with bosonic spin systems. The case of fermionic degrees of freedom will be analyzed in a separate publication.
\\

\begin{figure}
\centering




\begin{tikzpicture}[scale=.3]

\foreach \i in {0,...,7}{
  \pgfmathsetmacro{\y}{-5 + 0.5*\i} 
  \filldraw[orange, ultra thick, fill=orange!5]
    (-5,\y) -- (-1,\y+4) -- (9,\y+4) -- (5,\y);
}

\foreach \i in {0,...,6}{
  \pgfmathsetmacro{\y}{-5 + 0.5*\i}
  \draw[orange, ultra thick]
    (-5,\y)
    .. controls (-0.5,\y) ..
    (0,\y+0.20)
    .. controls (0.5,\y+0.40) ..
    (5,\y+0.50);
}

\draw[orange, ultra thick]
  (-5,-1.5)
  .. controls (-0.5,-1.5) ..
  (0,-2.25)
  .. controls (0.5,-5.0) ..
  (5,-5.0);

\draw[orange, ultra thick]
  (-5,-4.5) .. controls (-0.5,-4.5) .. (0,-4.25) .. controls (0.5,-4.0) .. (5,-4.0);
\draw[orange, ultra thick]
  (-5,-5.0) .. controls (-0.5,-5.0) .. (0,-4.75) .. controls (0.5,-4.5) .. (5,-4.5);

\filldraw[orange] (1.85,0.35) circle (5pt);


\end{tikzpicture}
\hspace{.4cm}
\begin{tikzpicture}[scale=.4]


\filldraw[fill=orange!5, draw=orange, ultra thick] (0,0) circle (3.5cm);

\foreach \k in {0,...,7} {
  \coordinate (P\k) at ({3.5*cos(360*\k/8)}, {3.5*sin(360*\k/8)});
}

\foreach \k in {0,...,3} {
  \pgfmathtruncatemacro{\opp}{\k+4}
  \draw[orange, thick, dashed] (P\k) -- (P\opp);
}

\filldraw [orange] (0,0) circle (5pt);

\draw[->, thick] (-9,0) -- (-5,0) node[midway, above] {$z \to z^{1/N}$};;

\end{tikzpicture}
\caption{A twist defect state on an $N$-sheeted branched cover of a complex plane can be transformed into a $\ZZ/N$-rotationally invariant state by a map $z \to z^{1/N}$, $z \in \CCC$.}
\label{fig:TwistDefect2RotationallyInvariantState}
\end{figure}
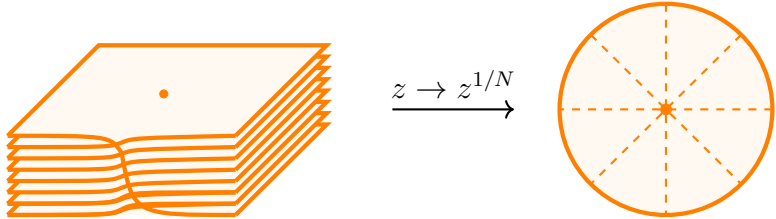

\subsection{Organization of the paper}
This paper is organized as follows.

In Section \ref{sec:generalities}, we give main definitions and describe a mathematical model for invertible phases on $\RR^2$. We show that in this model, states obtained by orientation-preserving homeomorphisms are in the same phase, while orientation-reversing homeomorphisms change the phase to its inverse. We introduce the notion of reflection positivity and prove that any two-dimensional invertible phase admits a reflection positive representative. We also prove that the complex conjugate state provides an inverse.

In Section \ref{sec:AngularMomentum}, we analyze systems with discrete $\ZZ/N$-rotational invariance. We introduce invertible phases protected by $\ZZ/N$-rotational symmetry and prove that reflection positive rotationally invariant states have a canonical angular momentum.

In Section \ref{sec:index}, we define an index for two-dimensional invertible phases by refining the construction from \cite{sopenko2024}. 

In Appendix \ref{app:canonicalpurification}, we review the construction of the canonical purification for an arbitrary $C^*$-algebra. We then consider systems with rotational symmetry and analyze the constraint imposed by reflection positivity on their properties. We describe some elementary applications for finite one-dimensional quantum systems.
\\

\noindent
{\bf Acknowledgements:} I am grateful to Anton Kapustin for his feedback on the draft and to Alexei Kitaev for discussions that brought my attention to the importance of analyzing the behavior of invertible phases under complex conjugation. I would also like to thank Yichul Choi, Dan Freed, Ryohei Kobayashi, Abhinav Prem, Daniel Ranard, Nati Seiberg, Sahand Seifnashri, Shu-Heng Shao, and Edward Witten for helpful conversations. This work is supported by NSF Grant PHY-2207584 and the Ambrose Monell Foundation.
\\

\noindent
{\bf Data availability statement:}
Data sharing not applicable to this article as no datasets were generated or analysed during the current study.
\\

\noindent
{\bf Declarations:}
The authors have no competing interests to declare that are relevant to the content of this article.
\\

\section{Invertible phases of 2d quantum spin systems} \label{sec:generalities}

\subsection{Quantum spin systems}

\begin{definition}
A spin system on a topological space $X$ is defined by the following data:
\begin{itemize}
    \item A countable set $\Lambda$ called a lattice and a map $\lambda: \Lambda \to X$ with locally finite image. The elements of $\Lambda$ are called sites, and $\lambda(j)$ is the location of a site $j \in \Lambda$ on $X$.
    \item A map $n:\Lambda \to \NN$ and an assignment of a finite dimensional Hilbert space $\hilb{V}_j$ of dimension $n(j)$ for each $j \in \Lambda$.
    \item For a finite subset $\Gamma \subset \Lambda$, a $C^*$-algebra $\cstar{A}_{\Gamma}$ of endomorphisms of $\bigotimes_{j \in \Gamma} \CV_{j}$.
    \item A (uniformly hyperfinite) $C^*$-algebra $\cstar{A}$ which is defined as an operator norm completion of the direct limit $\varinjlim_{\Gamma \subset \Lambda} \cstar{A}_{\Gamma}$ over natural inclusions $\cstar{A}_{\Gamma} \subset \cstar{A}_{\Gamma'}$ for $\Gamma \subset \Gamma'$.    
\end{itemize}
For $j \in \Lambda$, we let $\cstar{A}_j := \cstar{A}_{\{j\}}$. For a subset $\Lambda' \subset \Lambda$ we let $\cstar{A}_{\Lambda'}$ be the operator norm completion of the direct limit $\varinjlim_{\Gamma \subset \Lambda'} \cstar{A}_{\Gamma}$ over subsets of $\Lambda'$. For a subset $Y \subset X$, we let $\cstar{A}_{Y}$ be $\cstar{A}_{\Lambda'}$, where $\Lambda'$ is the preimage of $Y$ under $\lambda$. By abuse of notation, we often denote the whole data $(\Lambda, \lambda, n, \cstar{A})$ by a single letter $\cstar{A}$, and say that $\cstar{A}$ is a spin system on $X$.
\end{definition}

Two spin systems are isomorphic if we have a bijection $\varphi:\Lambda_1 \xrightarrow{\sim} \Lambda_2$ between their lattices such that $\lambda_1 = \lambda_2 \circ \varphi$ and $n_1 = n_2 \circ \varphi$. By an automorphism of a spin system $\cstar{A}$ we mean a linear $*$-automorphism of the corresponding $C^*$-algebra. By an anti-linear automorphism of $\cstar{A}$ we mean an anti-linear $*$-automorphism. By an automorphism of a $C^*$-algebra (either linear or anti-linear) we always mean a $*$-automorphism. For a given spin system, we say that $\alpha$ is an automorphism on $U \subset X$ if it acts trivially on observables $a \in \cstar{A}_{U^c}$, where $U^c$ is the complement of $U$. We say that a (possibly anti-linear) automorphism $\alpha$ of a spin system $(\Lambda,\lambda,n,\cstar{A})$ is {\it on-site} if for any $j\in \Lambda$ and $x \in \cstar{A}_j$, we have $\alpha(x) \in \cstar{A}_j$. For a subset $Y \subset X$ and an on-site automorphism $\alpha$, we denote by $\alpha_Y$ the automorphism that acts as $\alpha$ on $\cstar{A}_Y$ and trivially on $\cstar{A}_{Y^c}$.

We have a monoidal structure on spin systems: given two spin systems $(\Lambda_1, \lambda_1, n_1, \cstar{A}_1)$ and $(\Lambda_2, \lambda_2, n_2, \cstar{A}_2)$ on the same space $X$, we can form a new system $\cstar{A}_1 \otimes \cstar{A}_2$ by taking the union of their lattices $\Lambda = \Lambda_1 \cup \Lambda_2$ with $\lambda$ and $n$ being induced by $\lambda_{1,2}$, $n_{1,2}$. We call $\cstar{A}_1 \otimes \cstar{A}_2$ a {\it stack} of $\cstar{A}_1$ and $\cstar{A}_2$. 

A state on a spin system $\cstar{A}$ is a positive normalized linear functional $\cstar{A} \to \CCC$ on the $C^*$-algebra $\cstar{A}$. By abuse of notation, we often denote the whole data $(\Lambda, \lambda,n,\cstar{A},\psi)$ of a spin system and a state on it by a single letter $\psi$. If $\psi_1$, $\psi_2$ are two states on spin systems $\cstar{A}_1$, $\cstar{A}_2$ on the same space $X$, by $\psi_1 \otimes \psi_2$ we mean the tensor product state on $\cstar{A}_1 \otimes \cstar{A}_2$. For a subset $U \subset X$ and a state $\psi$ on a spin system $\cstar{A}$, by $\psi|_Y$ we denote the restriction of a state $\psi$ to $\cstar{A}_Y$. A state on a spin system $(\Lambda,\lambda,n, \cstar{A})$ is called a product state if it is a tensor product of states on $\{\cstar{A}_j\}_{j \in \Lambda}$. A pure product state is called {\it trivial}. We say that $\tilde{\psi}$ is a {\it trivial extension} of a state $\psi$ if $\tilde{\psi} = \psi \otimes \psi_0$ for some trivial state $\psi_0$.

In the paper, we extensively use the notion of a {\it factorial} state and {\it quasi-equivalence} of states. We remind the reader that, as was shown in \cite{powers1967representations}, for a uniformly hyperfinite $C^*$-algebras $\cstar{A}$ that is the limit of matrix algebras $\cstar{M}_{1} \subset \cstar{M}_{2} \subset ... \,$, $\cstar{M}_i \cong M_{l_i}(\CCC)$, a state $\omega$ is factorial if and only if it has decay at infinity, i.e. for any $x \in \cstar{M}_r$ and $\eps>0$ there exists $s > r$ such that for any $y \in \cstar{A}$ that commutes with $\cstar{M}_s$, we have 
$$|\omega(x y) - \omega(x) \omega(y)| < \eps \|x\| \|y\|.
$$
In particular, all pure states on spin systems and their restrictions to subsystems are factorial. Two factorial states $\omega_1, \omega_2$ are quasi-equivalent if and only if they coincide at infinity, i.e. for any $\eps>0$ there exists $r$, such that for any $x \in \cstar{A}$ that commutes with $\cstar{M}_r$ we have 
$$
|\omega_1(x) - \omega_2(x)| < \eps \|x\|.$$

In the following we stick to $X = \RR^2$. We always assume that $\RR^2$ is oriented.

\subsection{Good conical covers}

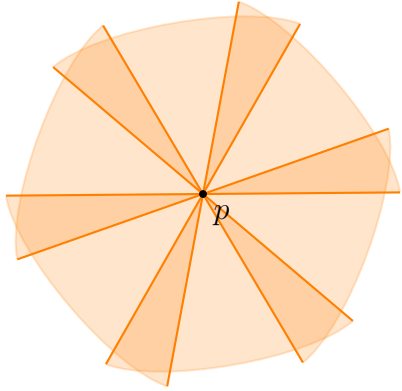
\begin{figure}
\centering
\begin{tikzpicture}[scale=0.5,rotate=10]
    \def\num_cones{6}
    \def\angle_step{360/\num_cones}
    \def\distance{4}   
    \def\width{3.3}    
    \def\bow{0.8}      

    \foreach \i in {1,...,\num_cones} {
        \begin{scope}[rotate = \i * \angle_step]
            \fill[orange, fill opacity=0.2]
                (0,0)
                -- (-\width, \distance)
                .. controls (-0.6*\width, \distance + \bow) and (0.6*\width, \distance + \bow) ..
                   (\width, \distance)
                -- cycle;

            \draw[orange, thick] (0,0) -- (-\width, \distance);
            \draw[orange, thick] (0,0) -- (\width, \distance);

            \draw[orange, opacity=0.2, thick]
                (-\width, \distance)
                .. controls (-0.6*\width, \distance + \bow) and (0.6*\width, \distance + \bow) ..
                   (\width, \distance);
        \end{scope}
    }

    \filldraw [black] (0,0) circle (2.5pt) node[below right] {$p$};
\end{tikzpicture}

\caption{An example of a good conical cover of $\RR^2$ that consists of six cones.}
\label{fig:goodcover}
\end{figure}

An open interval $I \subset S^1$ on a circle $S^1$ at infinity of $\RR^2$ and a point $p \in \RR^2$ defines an open {\it cone} that consists of all the points of $\RR^2 \setminus \{p\}$ whose projection to $S^1$ through the point $p$ lies inside $I$. We say that $I$ is the {\it base} of the cone, and $p$ is the {\it apex}.
\begin{definition}
A {\it good conical cover} of $\RR^2$ is a finite collection of cones $C = \{U_a\}_{a \in J}$, such that they all have the same apex and their bases $\{I_a\}_{a \in J}$ form a good open cover of $S^1$ such that $I_a \cap I_b$ is strictly smaller than both $I_a$, $I_b$ for any $a,b \in J$. 
\end{definition}

A good conical cover $C'$ is said to be {\it finer} than a good conical cover $C$ if for every $U' \in C'$ there exists $U \in C$ such that $U' \subseteq U$.

\begin{definition}
Let $\cstar{A}$ be a spin system on $\RR^2$, and let $C = \{U_a\}_{a \in J}$ be a good conical cover. We say that an automorphism $\alpha$ of $\cstar{A}$ is {\it $C$-local} if it can be written as $\l \prod_{a \in J_0} \alpha_{a} \r \l \prod_{a \in J_1} \alpha_{a} \r \Ad_{u}$, where $J_0$, $J_1 \subset J$ are disjoint subsets of non-overlapping cones with $J_0 \cup J_1 = J$, $\alpha_a$ is an automorphism on $U_a$, and $\Ad_{u}$ is an inner automorphism $\Ad_u(x):= u x u^*$, $x \in \cstar{A}$. We say that $\alpha$ is {\it strictly $C$-local} if the same decomposition exists with $u=1$.
\end{definition}

Loosely speaking, for a sufficiently fine cover $C$, a $C$-local automorphism acts independently on disjoint cones up to an overall inner conjugation. Note that if a good conical cover $C'$ is sufficiently finer than a good conical cover $C$, i.e. if the cones of $C'$ have the bases much smaller compared to the bases of the cones of $C$, then any $C'$-local automorphism is also $C$-local.

\begin{example} \label{example:HamiltonianEvolution}
An on-site automorphism gives an example of an automorphism that is $C$-local for any good conical cover $C$. More generally, the class of automorphisms obtained by integration of a derivation of $\cstar{A}$ exhibiting the Lieb-Robinson bound is also $C$-local for any $C$ \cite{naaijkens2022split}. Such automorphisms physically correspond to adiabatic evolutions by local Hamiltonians with sufficiently fast decay of interactions. In particular, any finite depth quantum circuit that consists of unitary gates with uniformly bounded support is $C$-local for any $C$.
\end{example}

\subsection{Invertible phases}

Topological phases correspond to equivalence classes of pure states of spin systems with respect to a certain equivalence relation. There are two main physical aspects this equivalence relation is supposed to capture: two states should be equivalent if they are related by stacking with trivial states (i.e., we can always add decoupled degrees of freedom without changing the phase) and if they are related by a continuous family of automorphisms of the algebra of observables which preserve some notion of locality (i.e., one state can be obtained from another by an adiabatic locality-preserving evolution).

There are different ways to define locality of automorphisms. We use the following formalization

\begin{definition} \label{def:Phases}
Two pure states $\psi_1$, $\psi_2$ on $\RR^2$ are {\it in the same phase} if for any good conical cover $C$ there exist states $\tilde{\psi}_1$, $\tilde{\psi}_{2}$ on a spin system $\cstar{A}$ such that $\tilde{\psi}_1$ is a trivial extension of $\psi_1$, $\tilde{\psi}_2$ is a trivial extension of $\psi_2$, and there exists a $C$-local automorphism $\alpha$ of $\cstar{A}$ such that $\tilde{\psi}_1 = \tilde{\psi}_2 \alpha$. Being in the same phase is an equivalence relation. Equivalence classes are called {\it phases}. If $\psi$, $\phi$ are representatives of two phases, then a {\it stack} of these phases is a phase represented by $\psi \otimes \phi$. {\it Stacking} defines a commutative monoidal operation on the set of phases, with a unit being the phase represented by a trivial state. The unit is also called {\it a trivial phase}. Invertible elements of the monoid are called {\it invertible phases} and form an Abelian group $\InvPhases$. The states representing them are called invertible. A state $\psi'$ is called an inverse for an invertible state $\psi$ if $\psi \otimes \psi'$ is in a trivial phase.
\end{definition}

\begin{remark}
We note that the definition we use is, a priori, weaker than what is usually proposed in the literature (see e.g. \cite{bachmann2012automorphic,naaijkens2022split}). Indeed, any local Hamiltonian evolution with sufficiently fast decay of interactions provides an example of an automorphism that is $C$-local for any good conical cover $C$ \cite{naaijkens2022split}. Therefore, the invariants of phases we construct would be automatically invariants for a stronger equivalence relation.
\end{remark}

The following lemma shows that in order to check whether two states are in the same phase, it is enough to consider conical covers with a fixed apex.

\begin{lemma} \label{lma:movingconicalcover}
Let $\psi_1, \psi_2$ be invertible states on $\RR^2$, and let $p \in \RR^2$. Suppose for any good conical cover $C$ with an apex at $p$ there exist states $\tilde{\psi}_1$, $\tilde{\psi}_{2}$ on a spin system $\cstar{A}$ such that $\tilde{\psi}_1$ is a trivial extension of $\psi_1$, $\tilde{\psi}_2$ is a trivial extension of $\psi_2$, and there exists a $C$-local automorphism $\alpha$ of $\cstar{A}$ such that $\tilde{\psi}_1 = \tilde{\psi}_2 \alpha$. Then the states $\psi_1$, $\psi_2$ are in the same phase.
\end{lemma}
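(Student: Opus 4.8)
The plan is to verify the defining condition of Definition~\ref{def:Phases} directly for an arbitrary good conical cover. So let $C'$ be a good conical cover, with apex $p'$; if $p'=p$ the conclusion is exactly the hypothesis, so I will assume $p'\neq p$. Write the cones of $C'$ as $U'_1,\dots,U'_m$ with bases $I'_1,\dots,I'_m$, and fix a splitting $K_0\sqcup K_1=\{1,\dots,m\}$ into classes of pairwise non-overlapping cones, as in the definition of $C'$-locality. The geometric input I would use is the elementary fact that, outside a large ball centred at $p'$, a cone with apex $p$ and a cone with apex $p'$ whose bases are nested agree up to a shift of order $|p-p'|$. Concretely, for each $i$ one chooses a base $I_i$ slightly inside $I'_i$ so that, for a sufficiently large radius $R$, the cone $\tilde U_i$ with apex $p$ and base $I_i$ satisfies $\tilde U_i\setminus B\subseteq U'_i$, where $B$ is the closed ball of radius $R$ around $p'$; shrinking the $I'_i$ only slightly keeps $\{I_i\}$ a good open cover of $S^1$ with the same nerve as $\{I'_i\}$, so $C:=\{\tilde U_i\}$ is a good conical cover with apex $p$ which inherits the splitting $J_\epsilon:=K_\epsilon$ into non-overlapping classes. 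Moreover $\tilde U_i\cap B$ is bounded, hence contains only finitely many sites.

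I would then apply the hypothesis to $C$: this gives a spin system $\cstar{A}$, trivial extensions $\tilde\psi_1,\tilde\psi_2$ of $\psi_1,\psi_2$, and a $C$-local automorphism $\alpha=\bigl(\prod_{i\in J_0}\alpha_i\bigr)\bigl(\prod_{i\in J_1}\alpha_i\bigr)\Ad_u$ with $\tilde\psi_1=\tilde\psi_2\alpha$, where $\alpha_i$ is an automorphism on $\tilde U_i$. The key step is to rewrite each $\alpha_i$ as $\Ad_{w_i}\circ\rho_i$ with $w_i$ a unitary supported in $\tilde U_i$ and $\rho_i$ an automorphism on $\tilde U_i\setminus B$; that is, to absorb the part of $\alpha_i$ acting on the bounded wedge $\tilde U_i\cap B$ into an inner automorphism. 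This rests on the following general fact, which I would prove separately: \emph{for a UHF algebra $N$ and $k\ge1$, every automorphism of $M_k\otimes N$ has the form $\Ad_w\circ(\mathrm{id}_{M_k}\otimes\rho)$}. The argument is the standard matrix-unit argument: an automorphism sends the standard matrix units of $M_k\otimes1$ to another system of matrix units, and the minimal projection $e_{11}\otimes1$ and its image have the same class in $K_0(M_k\otimes N)$ — that class satisfies $k[e_{11}\otimes1]=[1]$, such a solution is unique in the torsion-free group $K_0(M_k\otimes N)$, and $[1]$ is automorphism-invariant; since projections in a UHF algebra are classified up to Murray--von Neumann equivalence by $K_0$, a conjugation by a suitable unitary makes the automorphism fix $M_k\otimes1$ pointwise, hence preserve its relative commutant $1\otimes N$, hence equal $\mathrm{id}_{M_k}\otimes\rho$. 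Applying this with $M_k=\cstar{A}_{\tilde U_i\cap B}$ and $N=\cstar{A}_{\tilde U_i\setminus B}$, extended by the identity off $\tilde U_i$, produces the desired $w_i$ and $\rho_i$.

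Since $\tilde U_i\setminus B\subseteq U'_i$, each $\rho_i$ is in particular an automorphism on $U'_i$. Within a class $J_\epsilon$ the cones $\tilde U_i$ are pairwise disjoint, so the $\Ad_{w_i}$ and the $\rho_j$ with $i,j\in J_\epsilon$ all commute, and $\prod_{i\in J_\epsilon}\alpha_i=\Ad_{W_\epsilon}\circ\beta'_\epsilon$ with $W_\epsilon=\prod_{i\in J_\epsilon}w_i$ and $\beta'_\epsilon:=\prod_{i\in J_\epsilon}\rho_i$; because $K_\epsilon=J_\epsilon$ indexes pairwise non-overlapping cones of $C'$, the automorphism $\beta'_\epsilon$ has precisely the form appearing in the definition of $C'$-locality. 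Substituting and moving all inner factors to the right, using $\Ad_W\circ\gamma=\gamma\circ\Ad_{\gamma^{-1}(W)}$, one gets $\alpha=\beta'_0\,\beta'_1\,\Ad_v$ for a single unitary $v$, which exhibits $\alpha$ as a $C'$-local automorphism. Together with $\tilde\psi_1=\tilde\psi_2\alpha$ and the fact that the $\tilde\psi_i$ are trivial extensions of the $\psi_i$, this is exactly the requirement of Definition~\ref{def:Phases} for $C'$; since $C'$ was arbitrary, $\psi_1$ and $\psi_2$ are in the same phase.

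The step I expect to be the main obstacle is the simultaneous bookkeeping of geometry and combinatorics: the cover $C$ must refine $C'$ ``at infinity'' so that the corrected pieces $\rho_i$ land inside cones of $C'$, and at the same time carry a splitting into non-overlapping classes compatible with that of $C'$, so that the reassembled automorphism is genuinely $C'$-local and not merely local for a finer cover — giving $C$ the same nerve as $C'$, with bases nested slightly inside those of $C'$, is what makes these two requirements consistent. The only other point needing care is the UHF classification used to absorb the bounded corrections into inner automorphisms; everything else is routine manipulation of automorphisms with disjoint supports.
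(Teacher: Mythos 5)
Your proof is correct and follows essentially the same route as the paper's: both absorb the action of each cone-automorphism on the bounded region where the two covers' cones disagree into an inner automorphism, via the matrix-unit construction made possible by Murray--von Neumann equivalence of projections in a UHF algebra (the paper invokes the unique trace where you invoke $K_0$, which is the same fact here), and then observe that the remaining pieces are supported in cones of the target cover. Your write-up is merely more explicit about the direction of the nesting at infinity and the compatibility of the splittings into non-overlapping classes, but the underlying argument is identical.
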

\begin{proof}
Let $\cstar{B} = M_n(\CCC) \otimes \cstar{B}'$ be a UHF algebra, $\alpha$ be an automorphism of $\cstar{B}$, and $\{e_{ij}\}_{i,j =1}^N$ be matrix units which span $M_n(\CCC) \otimes 1$. Since $\cstar{B}$ has a unique trace, the projections $e_{11}$, $\alpha(e_{11})$ are Murray–von Neumann equivalent and there exists a partial isometry $w \in \cstar{B}$ satisfying $w^* w = e_{11}$, $w w^* = \alpha(e_{11})$. Then $u=\sum_{i=1}^n \alpha(e_{i 1}) w e_{1 i}$ is unitary, and $u e_{ij} u^* = \alpha(e_{ij})$. Therefore $\alpha$ can be represented as a composition of $\Ad_u$ and an automorphism $\alpha' = \Ad_{u^*} \alpha$ that acts trivially on $M_n(\CCC) \subset \cstar{B}$.

Let $C$ be a good conical cover with an apex different from $p$. We can always find a good conical cover $C'$ with an apex at $p$ and a bijection between the cones $U_a \to U'_a$ of $C$ and $C'$ such that $U_a$ is almost contained in $U'_a$, i.e. the complement of the intersection $U_a \cap U'_a$ in $U_a$ is a bounded region. As explained in the previous paragraph, any automorphism of the algebra on $U_a$ can be represented as a composition of an automorphism of the algebra on $U_a \cap U'_a$ and an inner automorphism. It follows that any $C$-local automorphism is also $C'$-local.
\end{proof}

\subsection{Spatial deformations}

Given a spin system $(\Lambda,\lambda,n,\cstar{A})$ and a homeomorphism $f:\RR^2 \to \RR^2$, we may define a new spin system $f_* \cstar{A}$ by postcomposing $\lambda$ and $n$ with $f$. In this case, given a state $\psi$ on $\cstar{A}$, we get a state $f_* \psi$ on $f_* \cstar{A}$ in a natural way.

We say that a homeomorphism $f$ of $\RR^2$ is {\it good} if it has the form $f = h^{-1} \circ g \circ h$, where $h$ is the radial contraction map $\RR^2 \to D^2 \setminus \p D$, $x \to \tanh(|x|) x$, and $g$ is the homeomorphism of $D^2\setminus \p D^2$ induced by a homeomorphism of a closed unit disk $D^2$. Loosely speaking, it means that $f$ has a good asymptotic behavior at infinity.

For $\eps>0$, we say that a map $f:\RR^2 \to \RR^2$ is {\it asymptotically $\eps$-local} if for some point $p\in\RR^2$, a bounded region $Y \subset \RR^2$ and any $x \in (\RR^2 \setminus Y)$ the angular distance between $f(x)$ and $x$ with respect to $p$ is less than $\eps$.
\begin{prop} \label{prop:InvarianceUnderDiffeomorphisms} 
Let $f$ be a good orientation-preserving homeomorphism of $\RR^2$. Then for any invertible state $\psi$ on $\RR^2$, the states $\psi$, $f_* \psi$ are in the same phase.
\end{prop}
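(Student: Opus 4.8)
The plan is to reduce, via Lemma~\ref{lma:movingconicalcover}, to the following: fix an apex $p$, and for every good conical cover $C$ with apex $p$ produce a $C$-local automorphism (of some spin system, with suitable trivial extensions) identifying $\psi$ with $f_*\psi$; write $\phi_1\sim\phi_2$ for ``$\phi_1$ and $\phi_2$ are in the same phase''. To compare the two states on one spin system I work on $\cstar{B}:=\cstar{A}\otimes f_*\cstar{A}$, on which $\psi$ extends trivially to $\psi\otimes\psi_0$ and $f_*\psi$ extends trivially to $\psi_0'\otimes f_*\psi$. The natural candidate is the ``site-swap'' $s=\prod_j s_j$, where $s_j$ exchanges the copy of $\CV_j$ placed at $\lambda(j)$ in the first factor with the copy placed at $f(\lambda(j))$ in the second (these have equal dimension); for matching trivial states $(\psi_0'\otimes f_*\psi)\circ s=\psi\otimes\psi_0$, so everything reduces to realizing $s$, or a modification of it, $C$-locally. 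It suffices to treat two cases: (a) $f$ is asymptotically $\eps$-local for every $\eps>0$ (equivalently $g|_{\p D^2}$ is the identity on $S^1$); and (b) $f=\rho$ is the radial extension $\rho(r,\vartheta)=(r,\sigma(\vartheta))$, in polar coordinates about $p$, of an orientation-preserving homeomorphism $\sigma$ of $S^1$. The general case then follows: with $\sigma:=g|_{\p D^2}$ and $\rho$ its radial extension (itself good and orientation-preserving), the homeomorphism $k:=\rho^{-1}\circ f$ is good, orientation-preserving, and asymptotically $\eps$-local for all $\eps$; granting $k_*\psi\sim\psi$ by (a) and $\rho_*\phi\sim\phi$ for every invertible $\phi$ by (b), applying (b) to $\phi:=k_*\psi$ (invertible, being $\sim\psi$) gives $f_*\psi=\rho_*(k_*\psi)\sim k_*\psi\sim\psi$.

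Case (a) can be done directly, with no use of invertibility. Given $C$, pass to a finer cover $C'$ whose cones split into two families $J_0,J_1$, each consisting of pairwise non-overlapping cones, and let $\eps$ be smaller than a Lebesgue number of the base cover $\{I_a\}$ of $S^1$. Asymptotic $\eps$-locality furnishes a bounded set $Y$ outside which the directions of $\lambda(j)$ and $f(\lambda(j))$ differ by less than $\eps$, hence both lie in a common cone of $C'$; assign each such site to one such cone and sort the $s_j$ by the $J_0/J_1$ colour. Since $Y$ is bounded and $\lambda(\Lambda)$, $f(\lambda(\Lambda))$ are locally finite, only finitely many $s_j$ remain, supported in a bounded region, so their product is an inner automorphism $\Ad_u$. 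This presents $s=\bigl(\prod_{a\in J_0}\beta_a\bigr)\bigl(\prod_{a\in J_1}\beta_a\bigr)\Ad_u$ as $C'$-local, hence $C$-local, so $f_*\psi\sim\psi$.

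Case (b) is the heart of the matter and the point where invertibility is indispensable. For covers whose overlap width exceeds $\max_\vartheta|\sigma(\vartheta)-\vartheta|$ the argument of case (a) applies verbatim. The main obstacle is fine covers: there $\rho$ drags points of large radius across cone boundaries, and no sorting organizes $s$ into two rounds of disjoint cone-automorphisms followed by an inner automorphism. To get past this I would stack with an inverse $\phi^{-1}$ and use that $\phi\otimes\phi^{-1}$ is in the trivial phase, the essential input being that an invertible state carries no boundary degrees of freedom: $\phi\otimes\phi^{-1}$ restricted to a neighbourhood of any cone boundary is, after an inner automorphism, a product state transverse to that boundary (a conical split property for invertible states / absence of topological order along an edge). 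This should let each site-swap that crosses a boundary be replaced, acting on the $\phi$ and $\phi^{-1}$ layers together, by an operation absorbed into the inner part $\Ad_u$, while the swaps that stay within cones are handled by the $J_0/J_1$ colouring as in case (a); the resulting $C'$-local (hence $C$-local) automorphism relates trivial extensions of $\rho_*\phi$ and $\phi$. The real work is in formulating the ``no boundary modes'' statement precisely and verifying the regluing; it may help to first fragment $\sigma=\sigma^{(1)}\cdots\sigma^{(m)}$ into homeomorphisms of $S^1$ of small angular displacement and chain the equivalences (each intermediate state being invertible, as it is $\sim\phi$), but even the small-displacement case still requires this invertibility input for sufficiently fine covers.
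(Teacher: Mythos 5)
Your reduction to (a) an asymptotically $\eps$-local-for-all-$\eps$ piece and (b) a radial extension of a circle homeomorphism is sound, and your case (a) is correct and in fact more elementary than what the paper does for that situation: the explicit site-swap on $\cstar{A}\otimes f_*\cstar{A}$, sorted by a common cone of a finer cover and with the finitely many exceptional swaps absorbed into $\Ad_u$, genuinely avoids invertibility. The paper's proof also begins by fragmenting $f$ (via Alexander's trick: $g$ is isotopic to the identity, so by compactness $g=g_N\circ\cdots\circ g_1$ with each $|g_i(x)-x|<\eps$, making each $f_i=h^{-1}\circ g_i\circ h$ asymptotically $\eps$-local with $\eps$ chosen \emph{after} the target cover $C$ is fixed). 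So up to this point you are on the paper's track.

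The gap is in case (b), and it is exactly the step you flag as ``the real work.'' Two problems. First, your chaining $\phi\sim_C\sigma^{(1)}_*\phi\sim_C\cdots\sim_C\rho_*\phi$ does not close: ``related by a single $C$-local automorphism after trivial extension'' is not transitive for a \emph{fixed} $C$, since a composition of $m$ $C$-local automorphisms is a depth-$m$ circuit of cone automorphisms, not a $C$-local automorphism. Second, the ``conical split property / no boundary modes'' statement you invoke (that $\phi\otimes\phi^{-1}$ near a cone boundary is, after an inner automorphism, a product state transverse to that boundary) is a strong structural claim that is neither proved nor needed anywhere in the paper, and establishing it would be at least as hard as the proposition itself. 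The paper's resolution uses only the definition of invertibility, via a telescoping trick: set $\psi_0=\psi$, $\psi_i=(f_i\circ\cdots\circ f_1)_*\psi$, choose inverses $\psi'_i$ with $\psi'_i\otimes\psi_i$ trivializable by a $C'$-local automorphism, and observe that $\eps$-locality of $f_{i+1}$ makes $\psi_i\otimes\psi'_{i+1}$ trivializable by a $C$-local automorphism as well. Then the single stacked state $\psi_0\otimes\psi'_1\otimes\psi_1\otimes\cdots\otimes\psi'_n\otimes\psi_n$ can be paired in two ways --- $\psi_0\otimes(\psi'_1\otimes\psi_1)\otimes\cdots$ versus $(\psi_0\otimes\psi'_1)\otimes\cdots\otimes\psi_n$ --- each collapsing in \emph{one} round of $C$-local automorphisms acting on disjoint layers, yielding trivial extensions of $\psi$ and of $f_*\psi$ related by a composition of two $C'$-local automorphisms, which is $C$-local for $C'$ fine enough. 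Replacing your unproven structural input with this parallel-stacking argument completes your proof; without it, the argument does not go through for fine covers.
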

\begin{proof}
Let $g$ be an orientation-preserving homeomorphism of a closed unit disk $D^2$, such that $f = h^{-1} \circ g \circ h$ for a radial contraction map $h:\RR^2 \to D^2\setminus\p D^2$ above. By Alexander's trick and a well-known fact that any two orientation-preserving homeomorphisms of $S^1$ are isotopic, $g$ is isotopic to the identity map. Since $D^2 \times [0,1]$ is compact, for any $\eps>0$, there exist homeomorphisms $g_1,...,g_N$ of $D^2$ such that $g = g_N \circ ... \circ g_1$ and for any $x \in D^2$ we have $|g_i(x)-x| < \eps$. The corresponding homeomorphisms $f_i = h^{-1} \circ g_i \circ h$ of $\RR^2$ are good and asymptotically $\eps$-local.

Let $C$ be a good conical cover. For a sufficiently small $\eps > 0$, we can always find a finer good conical cover $C'$ such that if $\psi \otimes \psi'$ is related to a trivial state by a $C'$-local automorphism, then the state $(\psi \otimes f_{i*} \psi')$ is related to a trivial state by a $C$-local automorphism for any asymptotically $\eps$-local homeomorphism $f_i$.

Choose homeomorphisms $f_1,...,f_N$ for this value of $\eps$ and let $C'$ be the corresponding cover from the previous paragraph. Let $\psi_0 = \psi$, $\psi_i = (f_i \circ ... \circ f_1)_* \psi$, and let $\psi'_i$ be inverses of $\psi_i$ such that $\psi'_i \otimes \psi_i$ are related to a trivial state by a $C'$-local (and therefore by a $C$-local) automorphism. Since $f_i$ are $\eps$-local, the states $\psi_i \otimes \psi'_{i+1}$ are related to a trivial state by a $C$-local automorphism. Thus, there are $C$-local automorphisms $\alpha$, $\alpha'$, such that $(\psi_0 \otimes \psi'_1 \otimes \psi_1 \otimes \psi'_2 \otimes ... \otimes \psi'_n \otimes \psi_n ) \alpha$ is a trivial extension of $\psi_0=\psi$ and $(\psi_0 \otimes \psi'_1 \otimes \psi_1 \otimes \psi'_2 \otimes ... \otimes \psi'_n \otimes \psi_n ) \alpha'$ is a trivial extension of $\psi_n = f_* \psi$. We obtained trivial extensions of $\psi$ and $f_* \psi$ which are related by a $C$-local automorphism for any good conical cover $C$.
\end{proof}

\begin{lemma} \label{lma:stateonahalfplane}
Let $\psi_0$, $\psi_1$ be invertible states such that for a cone $V$, the states $\psi_0|_{V}$, $\psi_1|_{V}$ are quasi-equivalent. Then the states $\psi_0, \psi_1$ are in the same phase.
\end{lemma}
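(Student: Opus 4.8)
The plan is to use the homeomorphism-invariance of Proposition \ref{prop:InvarianceUnderDiffeomorphisms} to put the cone into standard position, then reduce via the group structure on $\InvPhases$ to showing that a single invertible state is trivial, and finally to trivialize that state by peeling it off the boundary of the cone, using that its restriction there carries no entanglement obstruction.

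First, applying a good orientation-preserving homeomorphism of $\RR^2$ that straightens the base of $V$ to a semicircle, Proposition \ref{prop:InvarianceUnderDiffeomorphisms} lets me assume $V$ is a half-plane $H$; this relabels sites only, so it preserves both hypotheses. Next, fix an inverse $\bar\psi_0$ of $\psi_0$; it suffices to prove that $\phi := \psi_1 \otimes \bar\psi_0$ is trivial, since then $\psi_1$ is in the same phase as $\psi_1 \otimes \bar\psi_0 \otimes \psi_0 = \phi \otimes \psi_0$, hence as $\psi_0$. Because restrictions of pure states to subsystems are factorial and quasi-equivalence of factorial states is preserved under tensor products and under restriction to subsystems, $\phi|_H = \psi_1|_H \otimes \bar\psi_0|_H$ is quasi-equivalent to $(\psi_0 \otimes \bar\psi_0)|_H$. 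Since $\psi_0 \otimes \bar\psi_0$ is trivial, unwinding Definition \ref{def:Phases} with the aid of Lemma \ref{lma:movingconicalcover} and the fact that pure product states have the split property across every half-plane shows that $(\psi_0 \otimes \bar\psi_0)|_H$, hence $\phi|_H$, is quasi-equivalent to a pure state; equivalently, $\pi_\phi(\cstar{A}_H)''$ is a type I factor. So it remains to prove: an invertible state $\phi$ on $\RR^2$ with $\pi_\phi(\cstar{A}_H)''$ type I is trivial.

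For this, irreducibility of $\pi_\phi$ together with the type I condition yields a factorization $\hilb{H}_\phi \cong \hilb{H}_1 \otimes \hilb{H}_2$ under which $\cstar{A}_H$ acts on the first factor and $\cstar{A}_{H^c}$ on the second, with $\pi_\phi|_{\cstar{A}_H}$ unitarily equivalent to a GNS representation of a pure state. Given a good conical cover $C$ --- which, by Lemma \ref{lma:movingconicalcover} and refinement, I may take with apex on $\partial H$ and as fine as needed --- I would use this factorization, together with the analogous ones across the boundaries of a family of half-planes nested below $\partial H$, to modify $\phi$, by an automorphism supported in $H$ composed with an inner automorphism, into a state that is a genuine product across one of those boundaries and agrees with $\phi$ elsewhere up to quasi-equivalence. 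Iterating this inward and then over the remaining cones of $C$, and invoking invertibility of $\phi$ to rule out a nontrivial leftover state, one reaches a product state $C$-locally related to a trivial extension of $\phi$. This shows $\phi$ is trivial, completing the proof.

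The step I expect to be hardest is the last one: converting the abstract Hilbert-space factorization furnished by the split condition into an honest $C$-local automorphism --- the disentangling unitary must be realized inside $\cstar{A}$ up to an inner part, with its pieces fitting into the cones of $C$ --- and then arranging the cone-by-cone peeling to close up to a genuine trivialization rather than merely a state quasi-equivalent to a product one; this is precisely where invertibility of $\phi$ is used. A cleaner route to the same state modifications, in the spirit of the rest of the paper, would run through Woronowicz's canonical purification of the restrictions of $\phi$ to the nested half-planes. A secondary difficulty is the reduction in the second paragraph: since a trivial phase is not represented by product states alone, the claim that a trivial-phase state has a quasi-equivalent-to-pure restriction to a half-plane must be teased out of Definition \ref{def:Phases} rather than taken for granted.
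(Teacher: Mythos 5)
Your opening reductions run parallel to the paper's (stack with an inverse of $\psi_0$; invoke Proposition \ref{prop:InvarianceUnderDiffeomorphisms} and Lemma \ref{lma:movingconicalcover}), but the pivot of your argument --- that $(\psi_0\otimes\bar\psi_0)|_H$, and hence $\phi|_H$, is quasi-equivalent to a pure state, i.e.\ that $\pi_\phi(\cstar{A}_H)''$ is type I --- is false, and the difficulty you flag as ``secondary'' is in fact fatal. A trivial-phase state is a product state composed with a $C$-local automorphism, and already a depth-one circuit of two-site gates (which is $C$-local for every $C$, cf.\ Example \ref{example:HamiltonianEvolution}) can create one Bell pair across each unit length of the infinite line $\p H$. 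The restriction of the resulting state to $H$ then contains an infinite tensor product of maximally mixed qubits, whose GNS representation generates the hyperfinite type $\mathrm{II}_1$ factor; it is not quasi-equivalent to any pure state. So the split property across a half-plane fails for generic trivial-phase states in two dimensions, and the statement your third paragraph sets out to prove is being fed a hypothesis that is never satisfied. (That third paragraph is in any case a programme rather than a proof: the step you defer --- realizing the abstract disentangling as a $C$-local automorphism and closing the iteration --- is the entire content.)

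The paper's proof sidesteps this with two moves you are missing. First, rather than straightening $V$ to a half-plane, it uses a good homeomorphism $f$ to squeeze $V^c$ into a \emph{single cone} $U$ of a fine conical cover $C'$, so the region where the states are known to agree is the complement of one cone rather than a half-plane. Second, and essentially, it applies the trivializing $C'$-local automorphism $\alpha$ for $f_*\psi_0$ \emph{before} restricting: $f_*\psi_0\alpha$ is an honest pure product state, whose restriction to $U^c$ is pure; since $(f_*\psi_1\alpha)|_{U^c}$ is quasi-equivalent to it and $f_*\psi_1\alpha$ is pure, the restriction to $U$ is quasi-equivalent to a pure state, and transitivity of automorphisms on pure states plus the Kadison transitivity theorem yield a single corrective automorphism $\beta$ supported on $U$, which fits into the $C'$-local structure. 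The only ``purity of a restriction'' ever used is that of a product state restricted to a region --- no split property across an infinite line is needed anywhere.
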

\begin{proof}
Since we can stack both states with an inverse for $\psi_0$, without loss of generality we can assume that $\psi_0$ is in a trivial phase.

Let $C$ be a good conical cover with the same apex as $V$. By Lemma \ref{lma:movingconicalcover}, it is enough to show that there exists a $C$-local automorphism that transforms $\psi_1$ into a trivial state, possibly after a trivial extension.

Let $C' = \{U_a\}_{a \in J}$ be a good conical cover, and let $f$ be a good orientation-preserving homeomorphism of $\RR^2$ such that $f(V^c)$ overlaps only with a cone $U$ of $C'$. The state $f_* \psi_0$ is in a trivial phase by Proposition \ref{prop:InvarianceUnderDiffeomorphisms}. Hence, possibly after a trivial extension of $\psi_0$, there exists a $C'$-local automorphism $\alpha = \prod_{a \in J_0} \alpha_a \prod_{a \in J_1} \alpha_a \Ad_{u}$ such that $U=U_a$ for $a \in J_1$ and $f_* \psi_0 \alpha$ is trivial. Since $f_*\psi_1|_{U^c}$ is quasi-equivalent to $f_* \psi_0|_{U^c}$, the state $(f_*\psi_1 \alpha)|_{U^c}$ is quasi-equivalent to a trivial state. It follows that $(f_*\psi_1 \alpha)|_U$ is quasi-equivalent to a pure state, and therefore by transitivity of the action of automorphisms on the set of pure states, the state $f_*\psi_1 \alpha \beta$ is quasi-equivalent to a trivial state for some automorphism $\beta$ on $U$. Thus, by Kadison transitivity theorem, $f_* \psi_1$ is related to a trivial state by a $C'$-local automorphism. By Proposition \ref{prop:InvarianceUnderDiffeomorphisms}, $f_*\psi_1$ and $\psi_1$ are also related by a $C'$-local automorphism, possibly after a trivial extension. Since $C'$ can be chosen fine enough so that a composition of any two $C'$-local automorphisms is $C$-local, $\psi_1$ can be related to a trivial state by a $C$-local automorphism.
\end{proof}

\begin{prop} \label{prop:psiisometry}
Let $f$ be a good orientation-reversing homeomorphism of $\RR^2$. Then for any invertible state $\psi$ on $\RR^2$, the states $\psi$, $f_* \psi$ are in the opposite phases.
\end{prop}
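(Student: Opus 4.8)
The plan is to reduce to a single reflection and then to show that stacking an invertible state with its mirror image gives a trivial phase, by a folding argument combined with Lemma~\ref{lma:stateonahalfplane}. Since $f_i = h^{-1}g_i h$ gives $f_1\circ f_2^{-1} = h^{-1}(g_1\circ g_2^{-1})h$, any two good orientation-reversing homeomorphisms of $\RR^2$ differ by composition with a good orientation-preserving one, and the reflection $r$ across a line $L$ through the apex used for $h$ is itself good; so by Proposition~\ref{prop:InvarianceUnderDiffeomorphisms} it suffices to treat $f = r$, i.e.\ to show that $r_*\psi$ is an inverse of $\psi$, equivalently that $\psi \otimes r_*\psi$ lies in the trivial phase.

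\noindent The state $\psi \otimes r_*\psi$ on $\cstar{A}\otimes r_*\cstar{A}$ is the ``fold'' of $\psi$ along $L$ (reflection in $L$ composed with the exchange of the two tensor factors fixes it), and away from $L$ its two layers are $\psi$ and a mirror copy of $\psi$ superposed. I would build, for every good conical cover $C$, a $C$-local automorphism trivializing it: by Lemma~\ref{lma:movingconicalcover} take $C$ with apex on $L$ and invariant under $r$, so $r$ pairs the cones of $C$, and on a mirror pair $\{U,r(U)\}$ the $r_*\psi$-layer seen in $U$ is the mirror image of $\psi|_{r(U)}$; the automorphism then pairs, cone by cone, the $\psi$-layer against the mirrored $\psi$-layer, using the automorphism furnished by invertibility of $\psi$ to cancel the entanglement across cone boundaries. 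To organize this cleanly I would, after suitable trivial extensions, compare $\psi\otimes r_*\psi$ with $\psi\otimes\psi'$, where $\psi'$ is an inverse of $\psi$ placed on $r_*\cstar{A}$, and apply Lemma~\ref{lma:stateonahalfplane}: for a cone $V_0$ with $V_0\cap r(V_0)=\emptyset$ one has $(\psi\otimes r_*\psi)|_{V_0}\cong\psi|_{V_0}\otimes\psi|_{r(V_0)}$ and $(\psi\otimes\psi')|_{V_0}\cong\psi|_{V_0}\otimes\psi'|_{V_0}$, so it would suffice to know that the mirror image of the wedge $\psi|_{r(V_0)}$ is quasi-equivalent to the wedge $\psi'|_{V_0}$ of a suitable inverse. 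I would prove this last point by folding $\psi$ along a ray contained in $L$ and checking that the ``strip of $\psi$'' produced near the fold carries two opposite-facing edges whose contributions cancel, equivalently by constructing from the invertibility of $\psi$ an explicit inverse that (up to trivial extension) agrees with $r_*\psi$ on a half-plane.

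\noindent\textbf{Main obstacle.} The folding operation is not a homeomorphism of $\RR^2$, so Proposition~\ref{prop:InvarianceUnderDiffeomorphisms} does not apply to it directly; its effect has to be reconstructed from trivial extensions, honest good homeomorphisms, and $C$-local automorphisms without leaving the equivalence relation of Definition~\ref{def:Phases}, and the disentangling automorphism must be produced for \emph{every} good conical cover $C$, which is where Lemma~\ref{lma:movingconicalcover} and the precise use of invertibility come in. The sharpest difficulty is the ``chirality-reversal'' step: one must show, using only invertibility, that the reflection of a wedge-edge of $\psi$ is quasi-equivalent to the wedge-edge of an inverse of $\psi$, without presupposing the statement itself; a priori the reflection and the complex conjugate of such an edge carry the same central charge but lie in distinct quasi-equivalence classes, and it is the rigidity forced by gluing an invertible state back together that ultimately makes them interchangeable.
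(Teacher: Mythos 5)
Your first reduction is exactly the paper's: factor $f$ as a good orientation-preserving homeomorphism composed with a reflection $r$ in a line $l$, invoke Proposition \ref{prop:InvarianceUnderDiffeomorphisms}, and aim to show $\psi\otimes r_*\psi$ is trivial. The gap is in the core step. You propose to compare $(\psi\otimes r_*\psi)|_{V_0}$ with $(\psi\otimes\psi')|_{V_0}$ and observe that this requires the mirror image of $\psi|_{r(V_0)}$ to be quasi-equivalent to $\psi'|_{V_0}$ for some inverse $\psi'$. But by Lemma \ref{lma:stateonahalfplane} applied to the two invertible states $r_*\psi$ and $\psi'$, that requirement is \emph{equivalent} to the statement you are trying to prove, so you have only restated the problem; and the two resolutions you sketch (opposite-facing edges ``cancelling'', or ``constructing an inverse that agrees with $r_*\psi$ on a half-plane'') are respectively a heuristic and a restatement of the same circularity. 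You also identify $\psi\otimes r_*\psi$ with the fold of $\psi$ along $l$; these are not the same state. The fold is a \emph{single} copy of $\psi$ with the sites of $H_R$ relocated into $H_L$, so its restriction to a cone $U\subset H_L$ is the joint state $\psi|_{U\cup r(U)}$, whereas $(\psi\otimes r_*\psi)|_U$ is the product $\psi|_U\otimes\psi|_{r(U)}$. The difference between these two is precisely the content that has to be controlled.

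The paper's mechanism, which your proposal is missing, sidesteps the chirality issue entirely. The folded state $\phi$ is supported in the half-plane $H_L$, so $\phi|_V$ is trivial for a cone $V\subset H_R$ and Lemma \ref{lma:stateonahalfplane} immediately gives that $\phi$ is in a trivial phase --- no comparison with an abstract inverse is needed. One then compares $\omega=\psi\otimes r_*\psi$ with $\phi$ on a cone $U$ in the interior of $H_L$: both restrictions live on the same doubled algebra and involve the same mirrored copy of $\psi|_{r(U)}$, so the only question is whether $\psi|_{U\cup r(U)}$ is quasi-equivalent to $\psi|_U\otimes\psi|_{r(U)}$. This factorization across well-separated cones is proved by first assuming $\psi$ is obtained from a trivial state by a strictly $C$-local automorphism with cones thin relative to $U$ (in which case the two restrictions literally coincide), upgrading to unitary equivalence, and then handling a general invertible $\psi$ by stacking with an inverse and cancelling. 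If you want to salvage your outline, replace the comparison target $\psi\otimes\psi'$ by the fold $\phi$ and supply this factorization argument.
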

\begin{proof}
Any good orientation-reversing homeomorphism is a composition of a good orientation-preserving homeomorphism and a reflection in some line. Therefore, by Proposition \ref{prop:InvarianceUnderDiffeomorphisms}, it is enough to consider the case of $f$ being a reflection.

Let $l$ be a line separating $\RR^2$ into two half-spaces $H_L$, $H_R$ that does not pass through the locations of the sites of the lattice and let $f$ be a reflection in $l$. Let $\phi$ be a state obtained from $\psi$ by folding the plane along $l$ so that $H_R$ is mapped to $H_L$ as in the reflection in $l$. Since $\phi|_{H_R}$ is trivial, by Lemma \ref{lma:stateonahalfplane}, $\phi$ is in a trivial phase. Let $U$ be a cone in the interior of $H_L$, and let $\omega = \psi \otimes f_* \psi$. By Lemma \ref{lma:stateonahalfplane}, it is enough to show that $\omega|_{U}$ is quasi-equivalent to $\phi|_{U}$. 

Suppose $\psi$ is in a trivial phase, and let us choose a good conical cover $C$ with sufficiently thin (compared to $U$) cones. If $\psi$ can be obtained from a trivial state by a strictly $C$-local automorphism, then $\omega|_{U}$ and $\phi|_{U}$ are the same. If $\psi$ is unitarily equivalent to a state $\tilde{\psi}$ like this, then $\omega$ and $\phi$ would be unitarily equivalent to states $\tilde{\omega}$ and $\tilde{\phi}$ constructed from $\tilde{\psi}$ in the same way as $\omega$ and $\phi$ are constructed from $\psi$. Therefore, $\omega|_{U}$ and $\phi|_{U}$ are quasi-equivalent.

If $\psi$ is in a nontrivial invertible phase, then we can choose an inverse $\psi'$ and construct $\phi'$ and $\omega'$ from $\psi'$ in the same way as $\phi$ and $\omega$ are constructed from $\psi$. Then, quasi-equivalence of $(\omega \otimes \omega')|_U$ and $(\phi \otimes \phi')|_U$ implies quasi-equivalence of $\omega|_U$ and $\phi|_U$.

\end{proof}

\begin{remark}
While we stick to $X=\RR^2$ in this paper, the definitions of a good conical cover $C$, a $C$-local automorphism and an invertible phase can be generalized to the case of $X = \RR^d$ in a straightforward way. In the definition of a $C$-local automorphism, we just need to decompose the cover into $d$ subsets of non-overlapping cones instead of $2$ for $d=2$, and consider automorphisms which are compositions of $d$ "layers", each of which is a product of automorphisms on non-overlapping cones. Using the same methods and the fact that any orientation-preserving homeomorphism of $D^d$ is isotopic to the identity map, we can deduce that orientation-preserving good homeomorphisms do not change the phase of an invertible state, while orientation-reversing change it to the opposite.
\end{remark}

\subsection{Reflection positivity and complex conjugation}

\begin{prop} \label{prop:psicomplexconjugation}
Let $\psi$ be an invertible state on $\RR^2$. Let $\tau$ be an on-site anti-linear automorphism of the corresponding spin system $\cstar{A}$. Then the state $\bar{\psi}$ defined by $\bar{\psi}(a) := \overline{\psi(\tau(a))}$ is an inverse of $\psi$.
\end{prop}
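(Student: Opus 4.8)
The plan is to prove this via the canonical purification, which is what ties complex conjugation to the orientation structure: I would bridge $\bar\psi$ and the reflected state $f_*\psi$ — which Proposition~\ref{prop:psiisometry} already identifies as an inverse of $\psi$ — through a reflection‑positive state obtained by purifying $\psi$ across a line. Fix a line $l\subset\RR^2$ avoiding all sites, with open half‑planes $H_L,H_R$ and reflection $f$ in $l$. First I would normalize the data. Stacking $\psi$ with a trivial state supported at the reflected sites $f(\Lambda)$ does not change the phase and makes the lattice $f$‑symmetric; on‑site linear automorphisms are $C$‑local for every good conical cover (Example~\ref{example:HamiltonianEvolution}) and conjugating a $C$‑local automorphism by an on‑site anti‑linear automorphism again produces a $C$‑local automorphism, so the phase of $\bar\psi$ is unaffected if $\tau$ is replaced by any other on‑site anti‑linear automorphism. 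Hence it suffices to treat the case in which the lattice is $f$‑symmetric and $\tau$ is chosen with $\tau^2=\Id$, compatible with a reflection automorphism $\sigma_f$ of $\cstar{A}$ (with $\sigma_f^2=\Id$) induced by $f$, so that $j:=\sigma_f\circ\tau$ is an anti‑linear automorphism with $j^2=\Id$ exchanging $\cstar{A}_{H_L}$ and $\cstar{A}_{H_R}$ — a $CRT$‑action along $l$.

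Next I would introduce the bridge state: let $\Phi$ be the canonical purification (Appendix~\ref{app:canonicalpurification}) of the factorial state $\psi|_{H_L}$ with respect to $j$, a pure state on $\cstar{A}=\cstar{A}_{H_L}\otimes\cstar{A}_{H_R}$ that is reflection positive along $l$. From the description of the marginals of the canonical purification one has $\Phi|_{H_L}=\psi|_{H_L}$ and $\Phi|_{H_R}=j_*\bigl(\psi|_{H_L}\bigr)$, and since $j^{-1}=j$ and $\tau,\sigma_f$ commute, $\bigl(j_*(\psi|_{H_L})\bigr)(b)=\overline{\psi(\sigma_f\tau(b))}=\bigl(\bar\psi\circ\sigma_f\bigr)(b)=(f_*\bar\psi)(b)$ for $b\in\cstar{A}_{H_R}$; thus $\Phi|_{H_R}=(f_*\bar\psi)|_{H_R}$. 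I would then invoke that $\Phi$, being reflection positive, lies in the same phase as $\psi$ (this is the reflection‑positive‑representative statement; in particular $\Phi$ is invertible). Applying Lemma~\ref{lma:stateonahalfplane} on a cone $V'\subset H_R$, using that $f_*\bar\psi$ is invertible — conjugation and the orientation‑reversing good homeomorphism $f$ both preserve invertibility — gives $[\Phi]=[f_*\bar\psi]$, hence $[\psi]=[f_*\bar\psi]$. By Proposition~\ref{prop:psiisometry}, $f_*\bar\psi$ lies in the phase opposite to $\bar\psi$, so $\psi\otimes\bar\psi$ is in the trivial phase, i.e. $\bar\psi$ is an inverse of $\psi$; undoing the normalization of the first paragraph (trivial extensions preserve the phase, and the $\bar\psi$ for the enlarged system is a trivial extension of the one for the original system) yields the statement for the original $\tau$.

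The hard part will be the step that $\Phi$ lies in the same phase as $\psi$, equivalently that the canonical purification, taken over a half‑plane and with respect to a $CRT$‑action, of a state in a trivial (or invertible) phase is again in that phase. Locality has to be controlled by hand here: as the Bravyi‑chain example in the introduction shows, purification does not preserve the correlation length, so $\Phi$ cannot be disentangled by a literal quasi‑equivalence to a nice state. Instead one shows that the purified state, restricted to any thin strip around $l$, satisfies the one‑dimensional split property — this is precisely where the canonical‑purification analysis and the stability of the split property under restriction‑and‑purification enter — and then removes the entanglement across $l$ by a $C$‑local automorphism supported near $l$, the state being already of product type over $H_L\sqcup H_R$ away from $l$. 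To pass from the trivial phase to a general invertible $\psi$ one uses that canonical purification is monoidal: choosing an inverse $\psi'$ and the analogous purification $\Phi'$ of $\psi'|_{H_L}$, $\Phi\otimes\Phi'$ is the canonical purification of $(\psi\otimes\psi')|_{H_L}$, the restriction of a trivial‑phase state, hence trivial, so $\Phi$ is invertible and $[\Phi]=[\psi]$ by Lemma~\ref{lma:stateonahalfplane} on a cone in $H_L$. The remaining work is bookkeeping: verifying the marginal formula $\Phi|_{H_R}=(f_*\bar\psi)|_{H_R}$ in Woronowicz's framework and tracking quasi‑equivalences through the $f$‑symmetrization of the lattice.
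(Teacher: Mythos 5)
Your proposal is correct and follows essentially the same route as the paper: purify $\psi|_{H_L}$ across a line with respect to the $CRT$-symmetry $j=\sigma_f\circ\tau$, use the purification lemma (Lemma \ref{lma:psicanonicalpurification}) to place the purified state in the phase of $\psi$, exploit $j$-invariance of the reflection-positive state to trade complex conjugation for the reflection $f$, and close with Proposition \ref{prop:psiisometry} and Lemma \ref{lma:stateonahalfplane}. The only differences are cosmetic (you compare marginals on $H_R$ where the paper compares $\bar\phi$ with the reflected state globally, and you symmetrize the lattice by stacking rather than rebuilding $H_R$ as a mirror of $H_L$), and the ``hard part'' you flag is exactly the paper's Lemma \ref{lma:psicanonicalpurification}.
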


In order to prove this proposition, we use the main ingredient of this paper: canonical purification. We refer the reader to Appendix \ref{app:canonicalpurification} for necessary details and the terminology used herein.

\begin{definition}
Let $\cstar{A}$ be a spin system and let $l$ be a line separating $\RR^2$ into two half-planes $H_L, H_R$ such that $l$ does not pass through the locations of the sites of the lattice and $\cstar{A}$ is invariant under reflection in $l$. By a {\it $CRT$-symmetry in $l$} we mean an anti-linear involutive automorphism $j$ such that for any $x \in \cstar{A}_k$ we have $j(x) \in \cstar{A}_{k'}$, where $k'$ is a reflection of the site $k$ in $l$. We say that a state $\psi$ on $\cstar{A}$ is {\it (strictly) reflection positive} with respect to $j$ if it is (strictly) $j$-positive on $\cstar{A}$ with respect to factorization $\cstar{A} \cong \cstar{A}_{H_L} \otimes \cstar{A}_{H_R}$ (as defined in Appendix \ref{app:canonicalpurification}).
\end{definition}

\begin{lemma} \label{lma:psicanonicalpurification}
Let $\cstar{A}$ be a spin system and $l$ be a line separating $\RR^2$ into two half-planes $H_L$, $H_R$ that does not pass through the locations of the sites of the lattice. Let $\tilde{\cstar{A}}$ be a spin system obtained by reflecting $\cstar{A}_{H_L}$ in the line $l$ and equipped with a $CRT$-symmetry $j$ in $l$. Let $\psi$ be an invertible state on $\cstar{A}$, and let $\phi$ be the state obtained by canonical purification of $\psi|_{H_L}$ with respect to $j$. Then $\phi$ is invertible and is in the same phase as $\psi$.
\end{lemma}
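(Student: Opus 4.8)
The plan is to isolate three ingredients: the structural facts about $\phi$ that are immediate from the construction; the reduction of the statement ``$\phi$ is in the same phase as $\psi$'' to the single fact that $\phi$ is invertible, via Lemma~\ref{lma:stateonahalfplane}; and the genuinely analytic step, invertibility of $\phi$, which is where the properties of canonical purification established in Appendix~\ref{app:canonicalpurification} are used. Throughout, write $\omega^{j}$ for the canonical purification of a factorial state $\omega$ on $\cstar{A}_{H_L}$ with respect to a $CRT$-symmetry $j$: it lives on $\cstar{A}_{H_L}\otimes\cstar{A}_{H_L}^{m}$ (the mirror copy placed on $H_R$), restricts to $\omega$ on $\cstar{A}_{H_L}$, is reflection positive and pure, is monoidal, $(\omega_1\otimes\omega_2)^{j}=\omega_1^{j}\otimes\omega_2^{j}$, sends pure product states to pure product (hence trivial) states, preserves quasi-equivalence, and --- the quasi-one-dimensional input that is the technical heart of the paper --- sends the restriction of a pure split state of a one-dimensional chain again to a split state.

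Since $\psi$ is pure, $\psi|_{H_L}$ is factorial, so $\phi=(\psi|_{H_L})^{j}$ is a pure state on $\tilde{\cstar{A}}$ with $\phi|_{H_L}=\psi|_{H_L}$. Choose a cone $V$ whose closure lies in the interior of $H_L$; then $\tilde{\cstar{A}}_{V}$ is canonically identified with $\cstar{A}_{V}$ (both equal $\cstar{A}_{H_L}$ restricted to $V$, since $V$ misses $l$), and under this identification $\phi|_{V}=\psi|_{V}$. Hence, \emph{granting that $\phi$ is invertible}, Lemma~\ref{lma:stateonahalfplane} applied to $\psi$ and $\phi$ gives that they are in the same phase. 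So the whole statement reduces to invertibility of $\phi$.

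To prove $\phi$ invertible, let $\psi'$ be an inverse of $\psi$ and set $\phi'=(\psi'|_{H_L})^{j}$ on the analogously reflected system, so that by monoidality $\phi\otimes\phi'=\big((\psi\otimes\psi')|_{H_L}\big)^{j}$ is the canonical purification of the $H_L$-restriction of the pure trivial-phase state $\chi:=\psi\otimes\psi'$. It is therefore enough to show: \emph{the canonical purification of the $H_L$-restriction of any pure trivial-phase state $\chi$ is again in the trivial phase} (hence invertible, which then closes the loop with the previous paragraph). For this I would choose a good conical cover $C$ with apex on $l$, so that its cones split into those contained in $H_L$, those contained in $H_R$, and a bounded number of thin ``wedge'' cones straddling $l$, and write $\chi$ (after a trivial extension) as the image of a trivial state under a $C$-local automorphism. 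Restricting to $\cstar{A}_{H_L}$, a cone-by-cone argument in the spirit of Lemma~\ref{lma:stateonahalfplane} shows that $\chi|_{H_L}$ is quasi-equivalent to a trivial state on every cone of $C$ contained in $H_L$ and disjoint from the wedges; the quasi-equivalence and product properties of the purification then make $\big((\psi\otimes\psi')|_{H_L}\big)^{j}$ quasi-equivalent to a trivial state on every cone disjoint from $l$, and, since that state is reflection symmetric, also on the mirror cones in $H_R$. On the remaining thin strip around $l$, $\chi|_{H_L}$ is a quasi-one-dimensional state, and the split-property input guarantees that its purification --- the strip of $\big((\psi\otimes\psi')|_{H_L}\big)^{j}$ around $l$ --- again has the split property, hence is disentangled by an automorphism supported on a slightly wider strip, i.e.\ on the wedge cones. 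Combining the cone-wise disentanglers with this strip disentangler produces a $C$-local automorphism trivializing the purification, proving it lies in the trivial phase.

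I expect the only real difficulty to be this last input: that the split property survives ``restrict to a half-space, then canonically purify'', in the quasi-one-dimensional form needed here and using Woronowicz's canonical purification for general $C^{*}$-algebras rather than the finite-dimensional density-matrix picture. The embedded Bravyi chain of Figure~\ref{fig:BravyiChain} shows that the correlation length of the purified state cannot be controlled uniformly, so there is no correlation-decay estimate to fall back on and the split property is the correct substitute; once it is granted, the remaining bookkeeping (organizing the $C$-local automorphism relative to $l$ and matching the purification of a tensor product with the tensor product of purifications) is routine.
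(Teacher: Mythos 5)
Your overall strategy matches the paper's: reduce the statement to invertibility of $\phi$ via Lemma \ref{lma:stateonahalfplane}, reduce invertibility to the trivial-phase case by stacking with an inverse and using monoidality of the purification, and control the purified state through Woronowicz's Theorem \ref{thm:canonicalpurificationquasiequivalence}. The structural facts you list (the purification restricts to the original state on $H_L$, is monoidal, preserves quasi-equivalence) are exactly the ones the paper uses, and your identification of ``quasi-equivalence survives restrict-then-purify'' as the technical heart is correct --- though in the paper this is not a separate split-property lemma to be proved but an immediate consequence of the quoted Theorem \ref{thm:canonicalpurificationquasiequivalence}.

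The gap is in the final assembly. Knowing that the purified state is quasi-equivalent to a trivial state on each cone separately does not produce a $C$-local disentangler: for that you need the purified state to be quasi-equivalent to the \emph{tensor product} of its cone restrictions, and ``combining the cone-wise disentanglers with this strip disentangler'' silently assumes that global factorization. The paper closes this with two moves you do not make. First, it arranges the trivialization as $\psi\,\alpha_L\,\alpha_R\,\alpha_+\,\alpha_-\,\Ad_u = \text{trivial}$, with $\alpha_L,\alpha_R$ supported in the open half-planes and $\alpha_\pm$ on the two reflection-invariant cones $U'_\pm$ containing the half-lines of $l$; hence $\tilde{\psi}=\psi\,\alpha_L\,\alpha_R$ factorizes, up to an inner automorphism, across $U'_+$ and $U'_-$, and Theorem \ref{thm:canonicalpurificationquasiequivalence} transfers this factorization to the purification $\tilde{\phi}$, which is what licenses the two cone disentanglers $\beta_\pm$. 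Second, and crucially, it uses the identity $\tilde{\phi}=\phi\,\alpha_L\,(j\alpha_L j)$ --- purifying $(\psi|_{H_L})\alpha_L$ is the same as acting on $\phi$ by $\alpha_L$ together with its mirror image --- to pull the conclusion back from $\tilde{\phi}$ to $\phi$: the disentangler for $\phi$ is $\alpha_L\,(j\alpha_L j)\,\beta_+\beta_-$ up to an inner automorphism absorbed by Kadison transitivity, and this composite is manifestly $C'$-local. Without this identity, or some substitute relating the purification of the acted-upon state to the acted-upon purification, your cone-by-cone quasi-equivalences do not assemble into an automorphism of the required layered form, so the argument as written does not close.
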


\begin{proof}
It is enough to show that $\phi$ is invertible. The fact that it is in the same phase as $\psi$ follows from Lemma \ref{lma:stateonahalfplane}. Without loss of generality, we can also assume that $\psi$ is in a trivial phase, because we can always stack it with an inverse $\psi'$ and apply the same procedure to construct a state $\phi \otimes \phi'$.

Let $p \in l$ be a point on $l$, and let $l_+$, $l_-$ be the half-lines separated by $p$. Let $C$ be a good conical cover with an apex at $p$ and such that each $l_{\pm}$ is contained inside a single cone $U_{\pm}$ in $C$. We let $U'_{\pm}$ be the reflection invariant cones which have the same intersection with $H_L$ as $U_{\pm}$. We can construct a new good conical cover $C'$ that consists of $U'_{+}$, $U'_{-}$, the cones of the conical cover $C$ inside $H_L$, and the cones in $H_R$ obtained by reflection in $l$ of the latter. We say that $C'$ is reflection invariant. Given Lemma \ref{lma:movingconicalcover} and the fact that for any good conical cover we can find a finer reflection invariant cover, it is enough to show that if $\psi$ is related to a trivial state by a $C'$-local automorphism, then so is $\phi$.

Suppose there exist $C'$-local automorphisms $\alpha_L$, $\alpha_R$ on $H_L$, $H_R$, automorphisms $\alpha_+, \alpha_-$ on $U'_+$, $U'_-$, and an inner automorphism $\Ad_u$ such that $\psi \alpha_L \alpha_R \alpha_+ \alpha_- \Ad_{u}$ is a trivial state.
Let $\tilde{\psi} = \psi \alpha_L \alpha_R$ and let $\tilde{\phi}$ be the canonical purification of the restriction of $\tilde{\psi}$ to $H_L$. The state $\tilde{\psi}|_{U'_+} \otimes \tilde{\psi}|_{U'^{c}_+}$ is quasi-equivalent to $\tilde{\psi}$. Therefore, by Theorem \ref{thm:canonicalpurificationquasiequivalence}, the state $\tilde{\phi}|_{U'_+} \otimes \tilde{\phi}|_{U'^{c}_+}$ is quasi-equivalent to $\tilde{\phi}$. It follows that $\tilde{\phi}|_{U'_+}$ is quasi-equivalent to a pure state, and by transitivity of the action of automorphisms on the set of pure states, there exist an automorphism $\beta_+$ such that $\tilde{\phi}|_{U'_+} \beta_+$ is quasi-equivalent to a trivial state. Similarly, there exists an automorphism $\beta_-$ such that $\tilde{\phi}|_{U'_-} \beta_-$ is quasi-equivalent to a trivial state. Thus, $\tilde{\phi} \beta_+ \beta_-$ is quasi-equivalent to a trivial state. Since $\tilde{\phi} = \phi \alpha_L (j \alpha_L j)$, the state $\phi \alpha_L (j \alpha_L j) \beta_+ \beta_-$ is unitarily equivalent to a trivial state. Since $(j \alpha_L j)$ is $C'$-local on $H_R$, it follows from Kadison transitivity theorem that $\phi$ can be transformed into a trivial state by a $C'$-local automorphism.
\end{proof}

\begin{proof}[Proof of Proposition \ref{prop:psicomplexconjugation}]
Let us choose a line $l$ not passing through sites of the lattice, and construct a state $\phi$ in the same way as in Lemma \ref{lma:psicanonicalpurification} with a CRT-symmetry $j$. By the same lemma, $\phi$ is in the same phase as $\psi$. The state $\bar{\phi}$ defined via $\bar{\phi}(a) := \overline{\phi(\tau(a))}$ up to an action with an on-site automorphism coincides with the state obtained by a reflection in $l$. By Proposition \ref{prop:psiisometry}, $\bar{\phi}$ is an inverse for $\phi$, and by Lemma \ref{lma:stateonahalfplane} is in the same phase as $\bar{\psi}$. Thus, $\bar{\psi}$ is an inverse for $\psi$.
\end{proof}

\section{Twist defects and canonical angular momentum} \label{sec:AngularMomentum}

In this section, we analyze invertible states which are invariant under discrete $\ZZ/N$-rotational symmetry. We study invertible phases $\InvPhasesZN$ protected by this symmetry and show that the forgetful map $\InvPhasesZN \to \InvPhases$ is surjective with exactly $N$ preimages for any element of $\InvPhases$. These different preimages physically correspond to different $\ZZ/N$ angular momenta for a given $\ZZ/N$-invariant state.

We then show that only one of the preimages can have a reflection positive representative. In other words, reflection positivity provides a canonical choice of the angular momentum for a $\ZZ/N$-rotationally invariant state. Therefore, there is a canonical lift $\InvPhases \to \InvPhasesZN$.

Before discussing rotationally invariant states, we introduce twist defect states from \cite{sopenko2024} which are closely related.

\subsection{Twist defect states} \label{ssec:twistdefect}

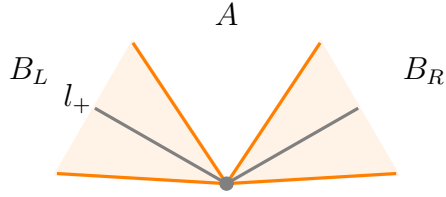
\begin{figure}
\centering
\begin{tikzpicture}[scale=.5]
\filldraw[color=orange!10, fill=orange!10, ultra thick] (0,0) -- (3.4641-1,2+3.4641/2) -- (3.4641+1,2-3.4641/2) -- cycle;
\filldraw[color=orange!10, fill=orange!10, ultra thick] (0,0) -- (-3.4641+1,2+3.4641/2) -- (-3.4641-1,2-3.4641/2) -- cycle;
\draw[orange, very thick] (0,0) -- (3.4641+1,2-3.4641/2);
\draw[orange, very thick] (0,0) -- (3.4641-1,2+3.4641/2);
\draw[orange, very thick] (0,0) -- (-3.4641-1,2-3.4641/2);
\draw[orange, very thick] (0,0) -- (-3.4641+1,2+3.4641/2);

\draw[gray, very thick] (0,0) -- (3.4641,2);
\draw[gray, very thick] (0,0) -- (-3.4641,2);

\node  at (1.5*1.7321*4/2,1.5*4/2) {$B_{R}$};
\node  at (-1.13*1.7321*4/2,1.13*4/2) {$l_{+}$};
\node  at (-1.5*1.7321*4/2,1.5*4/2) {$B_{L}$};
\node  at (0,1.5*3) {$A$}; 

\filldraw [gray] (0,0) circle (5pt);

\end{tikzpicture}
\caption{The cones $A, B_L, B_R$ and the half-line $l_+$. The boundaries of $A$ are depicted by gray lines.}
\label{fig:cones2}
\end{figure}

Let $\omega$ be an invertible state on $\RR^2$, and let $N$ be a positive integer. Let $\cstar{A}$ be the stack of $N$ copies of the spin system and let $\psi = \omega^{\otimes N}$ be the corresponding state. The symmetric group $S_N$ acts on $\cstar{A}$ via on-site automorphisms in a natural way and preserves the state $\psi$. Let $\rho$ be the automorphism corresponding to the generator of $\ZZ/N \subset S_N$ subgroup that is the cyclic permutation $\{1,2,...,N\} \to \{N,1,2,...,N-1\}$.

Let us pick a half-line $l_+$ with an origin at point $p$ that does not pass through locations of the sites of the lattice. Choose cones $A, B_L, B_R$ with an apex at $p$ and such that the boundary of $A$ that lies inside $B_L$ coincides with $l_+$ (see Fig. \ref{fig:cones2}). Let $\rho_A$ be the restriction of $\rho$ to $A$. As argued in Section 2.6 of \cite{sopenko2024}, the absence of non-trivial bosonic $\ZZ/N$-symmetry protected phases in 1d implies that we can effectively undo the action of $\rho_A$ on $\psi$ by automorphisms acting on $B_L$ and $B_R$:
\begin{lemma}
There is a trivial extension $\tilde{\omega}$ of $\omega$ and $\rho$-invariant automorphisms $\kappa_L$, $\kappa_R$ on $B_L$, $B_R$ such that $\tilde{\psi} \rho_A$ is unitarily equivalent to $\tilde{\psi} \kappa_L \kappa^{-1}_R$ for $\tilde{\psi} = \tilde{\omega}^{\otimes N}$.
\end{lemma}
\begin{proof}
Let $\phi = \psi \rho_A$. Let $\omega'$ be an inverse for $\omega$ on a system $\cstar{A'}$ and $\psi' = \omega'^{\otimes N}$. We consider a system $\cstar{A} \otimes \cstar{A}' \otimes \cstar{A}$. We assume that on this system, $\rho$ acts in a natural way by permuting the factors of $N$-th tensor products for both $\cstar{A}$ and $\cstar{A}'$. Let $\phi \otimes \psi'_0 \otimes \psi_0$ be a state on $\cstar{A} \otimes \cstar{A}' \otimes \cstar{A}$ for some trivial states $\psi_0$, $\psi'_0$. For a good conical cover $C$ with an apex at $p$, we can find a strictly $C$-local $\rho$-invariant automorphism $\alpha$, such that $(\psi'_0 \otimes \psi_0)\alpha$ is unitarily equivalent to $\psi' \otimes \psi$. It is enough to show that there exist $\rho$-invariant automorphisms $\kappa_L$, $\kappa_R$ on $B_L$, $B_R$ such that $(\phi \otimes \psi')\kappa^{-1}_L \kappa_R$ is unitarily equivalent to $(\psi \otimes \psi')$. Indeed, in this case, we can take small enough cones $B'_L$, $B'_R$ with the same apex in the interior of $B_L$, $B_R$, find the appropriate $\kappa_L$, $\kappa_R$ on $B'_L$, $B'_R$, respectively, and conjugate them with $\text{id}_{\cstar{A}} \otimes \alpha$ to get the desired $\kappa_L$, $\kappa_R$ from the statement of the lemma.

Let $C$ be a fine enough good conical cover with an apex at $p$ and $\alpha$ be the $N$-th tensor product of a strictly $C$-local automorphism relating $\omega \otimes \omega'$ to a trivial state up to unitary equivalence. By construction, $\alpha$ is $\rho$-invariant. Since $(\psi \otimes \psi')\alpha$ is unitarily equivalent to a trivial state, it is enough to show the existence of $\rho$-invariant automorphisms $\kappa_L$, $\kappa_R$ such that $(\phi \otimes \psi')\alpha \kappa^{-1}_L \kappa_R$ is unitarily equivalent to a trivial state by the same argument as in the previous paragraph.

Note that the restriction of $(\phi \otimes \psi')\alpha$ to the complement of $B_L \cup B_R$ is quasi-equivalent to a trivial state, while the restriction to $B_L \cup B_R$ is quasi-equivalent to the tensor product of the restrictions to $B_L$ and $B_R$. It follows that the restrictions to $B_{L,R}$ are quasi-equivalent to pure states. Since $H^2(\ZZ/N,\RR/\ZZ)$ is trivial, by \cite{ogata2019classification}[Theorem 1.11], there exists a $\rho$-invariant automorphism $\kappa_L$ on $B_L$, such that the restriction of $(\phi \otimes \psi')\alpha \kappa^{-1}_L$ to $B_L$ is quasi-equivalent to a trivial state. Similarly, there exists a $\rho$-invariant automorphism $\kappa_R$ on $B_R$, such that the restriction of $(\phi \otimes \psi')\alpha \kappa_R$ to $B_R$ is quasi-equivalent to a trivial state. It follows that $(\phi \otimes \psi')\alpha \kappa^{-1}_L \kappa_R$ is unitarily equivalent to a trivial state.

\end{proof}

By the lemma above, possibly after replacing $\omega$ with its trivial extension, we can always find $\rho$-invariant automorphisms $\kappa_L$, $\kappa_R$ on $B_L$, $B_R$ such that $\psi \rho_A$ is unitarily equivalent to $\psi \kappa_L \kappa^{-1}_R$. A state $\varphi$ of the form $\psi \kappa_{L} \Ad_v$ for a $\rho$-invariant inner automorphism $\Ad_v$ is called a {\it twist defect state}. Physically, it corresponds to a state obtained by cutting each of $N$ copies of the spin system along $l_+$ and regluing them according to the cyclic permutation (see Fig. \ref{fig:twistdefect}). This operation is performed by the automorphism $\kappa_L$.

While the definition of twist defect state depends on the automorphisms $\kappa_L, \kappa_R$ and the choice of cones $A$, $B_L$, $B_R$, it is easy to see that once we have fixed $l_+$, all twist defects states are unitarily equivalent.
\begin{lemma}
Let $A'$, $B_L'$, $B_R'$, $\kappa'_L$, $\kappa'_R$ be another choice of cones and automorphisms on $B'_L$, $B'_R$ for a given $l_+$ such that $\psi \rho_{A'}$ is unitarily equivalent to $\psi \kappa'_L \kappa'^{-1}_R$. Then $\psi \kappa_L$ and $\psi \kappa'_L$ are unitarily equivalent.    
\end{lemma}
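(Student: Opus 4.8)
The plan is to reduce, by a transitivity argument, to the case in which the two data use the same cones, then to extract the unitary equivalence from the defining relations together with the fact that $\psi$ is invertible.

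First I would pass to a common refinement of the two cone choices. Let $A''$ be a cone with apex $p$ whose left boundary ray is $l_+$ and which contains $A\cup A'$ (since $A$ and $A'$ lie on the same side of $l_+$ we may take $A''=A\cup A'$), and set $W=A''\setminus A$, a region disjoint from $l_+$. Because $\rho$ preserves $\psi$, the automorphism $\rho_W$ obtained by restricting $\rho$ to $W$ is $\rho$-invariant and $\rho_{A''}=\rho_A\,\rho_W$. Composing the given equivalence $\psi\rho_A\simeq\psi\kappa_L\kappa_R^{-1}$ on the right with $\rho_W$ and discarding the resulting inner automorphism, one gets that $\psi\rho_{A''}$ is unitarily equivalent to $\psi\,\kappa_L\,(\rho_W^{-1}\kappa_R)^{-1}$. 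Thus $(\kappa_L,\ \rho_W^{-1}\kappa_R)$ is a valid choice of automorphisms for the cone $A''$ (enlarge $B_L$ to a cone still avoiding the right boundary of $A''$, and enlarge $B_R$ to a cone around that boundary containing $W\cup B_R$ and disjoint from the enlarged $B_L$), and it produces the \emph{same} state $\psi\kappa_L$. Applying the same construction to the primed data yields a valid choice for $A''$ producing $\psi\kappa'_L$, so it suffices to treat the case $A=A'$, $B_L=B'_L$, $B_R=B'_R$ with $B_L\cap B_R=\varnothing$.

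In that case both $\psi\kappa_L\kappa_R^{-1}$ and $\psi\kappa'_L\kappa'^{-1}_R$ are unitarily equivalent to $\psi\rho_A$, hence to each other. Composing this equivalence on the right with $\kappa_R$, discarding the inner part, and using that $\kappa'_L$ commutes with $\kappa_R$ and with $\kappa'_R$ (disjoint supports), we obtain that $\psi\kappa_L$ is unitarily equivalent to $(\psi\kappa'_L)\circ\delta$, where $\delta:=\kappa'^{-1}_R\kappa_R$ is a $\rho$-invariant automorphism supported in $B_R$. Restricting the same equivalence to $\cstar{A}_{B_R}$ — quasi-equivalence of these factorial states passing to this subalgebra, whose restrictions are again factorial — and composing with $\kappa_R$, we also get that $(\psi|_{B_R})\circ\delta$ is quasi-equivalent to $\psi|_{B_R}$. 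It therefore remains to show that $\delta$ preserves the quasi-equivalence class of $\phi:=\psi\kappa'_L$; this yields $\psi\kappa_L\simeq(\psi\kappa'_L)\circ\delta\simeq\psi\kappa'_L$.

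For the last point I would use invertibility of $\psi$. The state $\phi$ is invertible (it is in the same phase as $\psi$ by Lemma~\ref{lma:stateonahalfplane}), so across the cut $\cstar{A}=\cstar{A}_{B_R}\otimes\cstar{A}_{B_R^c}$ its GNS representation has the split structure: $\pi_\phi(\cstar{A}_{B_R})''$ is a type~$I$ factor for which $\Omega_\phi$ is cyclic and separating, and it is identified, via $\pi_\phi|_{\cstar{A}_{B_R}}$, with the GNS von Neumann algebra of $\phi|_{B_R}=\psi|_{B_R}$, compatibly with the action of $\delta$. Since $B_L\cap B_R=\varnothing$, $\delta$ acts as the identity on $\cstar{A}_{B_R^c}$ and $\phi|_{B_R}=\psi|_{B_R}$; from $(\psi|_{B_R})\circ\delta$ quasi-equivalent to $\psi|_{B_R}$, the automorphism $\delta$ induces a normal automorphism of this type~$I$ factor, which is therefore inner, say implemented by a unitary $U\in\pi_\phi(\cstar{A}_{B_R})''\subseteq\pi_\phi(\cstar{A}_{B_R^c})'$. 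Because $\delta$ is trivial on $\cstar{A}_{B_R^c}$, the same $U$ implements $\pi_\phi\circ\delta=\Ad_U\circ\pi_\phi$ on all of $\cstar{A}$, so $(\psi\kappa'_L)\circ\delta=\phi\circ\delta$ is the vector state of the irreducible representation $\pi_\phi$ at $U^{*}\Omega_\phi$, hence unitarily equivalent to $\phi=\psi\kappa'_L$. The routine parts are the transitivity reduction and the commutation manipulations; the step I expect to require the most care is this crux, namely the cone analogue of the split property for invertible two-dimensional states that forces the normal automorphism induced by $\delta$ to be inner — equivalently, the statement that an automorphism supported in a cone $B$ preserving the quasi-equivalence class of $\phi|_{B}$ for an invertible state $\phi$ also preserves the global quasi-equivalence class of $\phi$.
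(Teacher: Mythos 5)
Your reduction to a common choice of cones and the algebraic manipulation producing $\psi\kappa_L \simeq (\psi\kappa'_L)\circ\delta$ with $\delta=\kappa'^{-1}_R\kappa_R$ supported in $B_R$ are fine and broadly parallel to the paper's first step. The gap is in your crux: you assert that for an invertible 2d state $\phi$ the cut $\cstar{A}=\cstar{A}_{B_R}\otimes\cstar{A}_{B_R^c}$ has the split structure, i.e.\ that $\pi_\phi(\cstar{A}_{B_R})''$ is a type $I$ factor. This is false in general. A cone in $\RR^2$ has an unbounded boundary, and a generic invertible state (even one obtained from a product state by a finite-depth circuit) carries area-law entanglement along that boundary --- morally an infinite string of EPR pairs --- so $\phi$ is not quasi-equivalent to $\phi|_{B_R}\otimes\phi|_{B_R^c}$ and the cone algebra need not be type $I$. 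The paper's own introduction emphasizes that the split property is the right tool in 1d (finite boundary) and that the 2d situation is precisely where it breaks down. Without type $I$ (or some Haag-duality substitute), a normal automorphism of $\pi_\phi(\cstar{A}_{B_R})''$ preserving the quasi-equivalence class of $\phi|_{B_R}$ need not be inner, so you cannot produce the unitary $U$ and the argument collapses at the decisive point.

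The paper avoids this entirely by a different mechanism: it sets $\phi=\psi\otimes\psi'$ for an inverse $\psi'$, chooses a sufficiently fine good conical cover $C$ and a $C$-local automorphism $\alpha$ with $\phi\alpha$ trivial, and observes that $\alpha^{-1}\beta_L\alpha$ and $\alpha^{-1}\beta_R\alpha$ (with $\beta_L=\kappa_L\kappa'^{-1}_L$, $\beta_R=\kappa^{-1}_R\kappa'_R$) act on non-overlapping cones. Because the reference state $\phi\alpha$ is an honest product state, its quasi-equivalence class factorizes over disjoint cones, so preservation of the class by the composition forces preservation by each factor separately; pulling back along $\alpha$ gives $\phi\beta_L\simeq\phi$, hence $\psi\kappa_L\simeq\psi\kappa'_L$. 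In other words, the decoupling of the two cone automorphisms is extracted from the product structure of a trivialized stack, not from a split property of the original state across a single cone. If you want to salvage your route, you would need to replace the type $I$ claim by exactly this kind of stack-and-trivialize argument; as written, the proof does not go through.
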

\begin{proof}
Any admissible deformation of a single cone $B_L$, $B_R$, $A$ can be equivalently described by a change of automorphisms $\kappa_L$, $\kappa_R$. We can get any admissible triple of cones from $B_L$, $B_R$, $A$ by deforming them one by one. Therefore, it is enough to consider the case when $A'$, $B_L'$, $B_R'$ coincide with $A$, $B_L$, $B_R$.

Let $\phi$ be a state in a trivial phase. Let $\beta_L$, $\beta_R$ be automorphisms on $B_L$,$B_R$, respectively, such that $\phi \beta_L \beta_R$ is unitarily equivalent to $\phi$. We choose a sufficiently fine good conical cover $C$ and a $C$-local automorphism $\alpha$, such that $\phi \alpha$ is trivial. Since automorphisms $(\alpha^{-1}\beta_L \alpha)$ and $(\alpha^{-1}\beta_R \alpha)$ act on non-overlapping cones, unitary equivalence of $(\phi \alpha)(\alpha^{-1}\beta_L \alpha)(\alpha^{-1}\beta_R \alpha)$ and $\phi \alpha$ implies unitary equivalence of $(\phi \alpha)(\alpha^{-1}\beta_L \alpha)$ and $\phi \alpha$. Thus, $\phi \beta_L$ and $\phi$ are unitarily equivalent.

By taking $\phi = \psi \otimes \psi'$ for some inverse $\psi'$ of $\psi$ and $\beta_{L} = \kappa_{L} \kappa'^{-1}_{L}$, $\beta_{R} = \kappa^{-1}_{R} \kappa'_{R}$, we obtain unitary equivalence between $\psi (\kappa_L \kappa'^{-1}_L)$ and $\psi$.
\end{proof}

This lemma implies that for any cone $U$ that does not intersect $l_+$, the states $\varphi|_U$ and $\psi|_U$ are quasi-equivalent.

\subsection{Rotational invariance} \label{ssec:rotationalinvariance}

\begin{definition}
Let $\cstar{A}$ be a spin system on $\RR^2$. We say that $\cstar{A}$ has a {\it $\ZZ/N$-rotational symmetry} if it is equipped with a linear automorphism $\ups$ of order $N$ such that for any site $j$ and any $x \in \cstar{A}_j$ we have $\ups(x) \in \cstar{A}_{j'}$, where $j'$ is the image of the site $j$ under $2 \pi /N$ rotation around the origin consistent with the orientation of $\RR^2$.
\end{definition}

Let $\omega$ be an invertible state, and let $\varphi$ be a twist defect state for $\omega$ with the cut along a half-line $l_+$. Let us identify $\RR^2$ with the complex plane $\CCC$, and let us consider $N$-sheeted branched cover $\Sigma$ of $\RR^2$ with the cut along $l_+$. We can think of the twist defect state as a state on $\Sigma$. Without loss of generality, we can assume that the cut corresponds to the positive real axis. We let $f:\Sigma \to \RR^2$ be a map that corresponds to the map $z \to z^{1/N}$, $z \in \CCC$ (see Fig. \ref{fig:TwistDefect2RotationallyInvariantState}). Then given a twist defect state $\varphi$ on $\Sigma$, we can construct a $\ZZ/N$-rotationally invariant state $f_* \varphi$ with the rotation automorphism being induced by the automorphism $\rho$ that permutes the layers of $\varphi$.

We can choose an orientation-preserving homeomorphism $g:\RR^2 \to \RR^2$ such that $g_* \omega |_U$ is quasi-equivalent to $f_* \varphi$ for some cone $U$. By Lemma \ref{lma:stateonahalfplane} and Proposition \ref{prop:InvarianceUnderDiffeomorphisms}, $\omega$ and $f_* \varphi$ are in the same phase. We have proven the following

\begin{lemma} \label{lma:ZNInvRotations}
Any invertible phase has a $\ZZ/N$-rotationally invariant representative.
\end{lemma}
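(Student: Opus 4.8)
The plan is to produce a $\ZZ/N$-rotationally invariant representative by ``unwinding'' a twist defect state of $\omega$ from an $N$-sheeted branched cover onto the plane. First I would take the $N$-fold stack, the product state $\psi = \omega^{\otimes N}$, and the cyclic permutation automorphism $\rho$, and build a twist defect state $\varphi = \psi\,\kappa_L\,\Ad_v$ with cut along a half-line $l_+$, as recalled in \S\ref{ssec:twistdefect} (following \cite{sopenko2024}). The structural fact I will lean on is the observation closing \S\ref{ssec:twistdefect}: for every cone $U$ disjoint from $l_+$ the state $\varphi|_U$ is quasi-equivalent to $\psi|_U = (\omega|_U)^{\otimes N}$, so away from the cut the $N$ layers are decoupled and each looks like $\omega$.

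Next I would set up the branched-cover picture precisely. Identify $\RR^2$ with $\CCC$ and take $l_+$ to be the positive real axis; let $\Sigma$ be the $N$-sheeted cover of $\CCC$ branched at the origin, which over $\CCC \setminus l_+$ is $N$ disjoint copies of $\CCC \setminus l_+$ glued cyclically across $l_+$. Using the layer-decoupling quasi-equivalence above, $\varphi$ can be read as a state on a spin system on $\Sigma$ with the $i$-th layer of the stack sitting on the $i$-th sheet, the regluing near $l_+$ carried out by $\kappa_L$, and the permutation $\rho$ identified with the order-$N$ deck transformation of $\Sigma \to \CCC$. I would then push $\varphi$ forward along the homeomorphism $f:\Sigma \to \CCC$ uniformizing $z \mapsto z^{1/N}$, obtaining a state $f_*\varphi$ on a spin system on $\RR^2$; one checks the pushed-forward lattice is still locally finite and good at infinity because $z \mapsto z^{1/N}$ is a good homeomorphism on each sheet. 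Since $f$ conjugates the deck transformation of $\Sigma$ to rotation by $2\pi/N$ of $\CCC$, the automorphism $\ups$ of the new system induced by $\rho$ is an on-site rotation automorphism of order $N$, and $f_*\varphi$ is $\ups$-invariant.

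It remains to place $f_*\varphi$ in the phase of $\omega$. I would pick a cone $U$ disjoint from $l_+$ and thin enough that, in the identification above, its copy on a single sheet is mapped by $f$ into one of the $N$ angular wedges $[2\pi i/N, 2\pi(i{+}1)/N)$ of $\CCC$, giving a cone $V$. On that sheet $f$ restricts to a good orientation-preserving homeomorphism of the cone onto $V$, which extends to a good orientation-preserving homeomorphism $g$ of $\RR^2$; then the layer-decoupling quasi-equivalence gives that $f_*\varphi|_V$ is quasi-equivalent to $f_*(\omega|_U)|_V = (g_*\omega)|_V$. Applying Lemma \ref{lma:stateonahalfplane} to the cone $V$ shows $f_*\varphi$ and $g_*\omega$ are in the same phase (invertibility of $f_*\varphi$ being clear, since $\varphi$, hence $f_*\varphi$, is pure), and Proposition \ref{prop:InvarianceUnderDiffeomorphisms} shows $g_*\omega$ is in the phase of $\omega$. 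Hence $f_*\varphi$ is a $\ZZ/N$-rotationally invariant state in the phase of $\omega$.

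The main obstacle I anticipate is not the final phase-equivalence bookkeeping --- which is a direct combination of Lemma \ref{lma:stateonahalfplane} and Proposition \ref{prop:InvarianceUnderDiffeomorphisms} --- but making the unwinding step rigorous: verifying that reading $\varphi$ ``as a state on $\Sigma$'' via the layer-decoupling quasi-equivalence really yields a well-defined spin system and state, first on $\Sigma$ and then, after $f$, on $\RR^2$, carrying an honest on-site order-$N$ rotation automorphism, and that the quasi-equivalences are compatible with the cut-and-reglue performed by $\kappa_L$ in a neighborhood of $l_+$. Once that translation between the stacked picture and the branched-cover picture is nailed down, the rest follows from results already established.
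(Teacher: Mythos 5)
Your proposal is correct and follows essentially the same route as the paper: build a twist defect state for $\omega$, view it as a state on the $N$-sheeted branched cover $\Sigma$, push it forward along the uniformizing map $z \to z^{1/N}$ to obtain a $\ZZ/N$-rotationally invariant state, and then combine Lemma \ref{lma:stateonahalfplane} with Proposition \ref{prop:InvarianceUnderDiffeomorphisms} via the quasi-equivalence on a cone away from the cut. The paper's own proof is terser and leaves the unwinding step you flag implicit, but the argument is the same.
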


Let $\ups$ be the generator of a $\ZZ/N$ rotational symmetry of a spin system $\cstar{A}$. We say an inner $\ZZ/N$-rotationally invariant automorphism $\Ad_u$ has angular momentum $k \in \ZZ/N$ if $\ups(u) = e^{2 \pi i k/N} u$. For a $\ZZ/N$-rotationally invariant good conical cover $C = \{U_a\}_{a \in J}$, we say that a $\ZZ/N$-rotationally invariant $C$-local automorphism $\alpha$ has angular momentum $k \in \ZZ/N$ if it can be written as $\l \prod_{a \in J_0} \alpha_{a} \r \l \prod_{a \in J_1} \alpha_{a} \r \Ad_{u}$ for $\Ad_u$ with the angular momentum $k$ (where $J_0$, $J_1 \subset J$ are disjoint subsets of non-overlapping cones with $J_0 \cup J_1 = J$, $\alpha_a$ is as automorphism of $\cstar{A}_{U_a}$).

We define {\it $\ZZ/N$-rotational symmetry protected invertible phases} as equivalence classes of $\ZZ/N$-rotationally invariant invertible states on $\RR^2$ in the same way as invertible phases with the exception that now we require good conical covers to be $\ZZ/N$-rotationally invariant and the $C$-local automorphisms relating the states be $\ZZ/N$-rotationally invariant with vanishing angular momentum. Such equivalence classes form an abelian group which we denote by $\InvPhasesZN$.

By Lemma \ref{lma:ZNInvRotations}, the forgetful map $\InvPhasesZN \to \InvPhases$ is surjective. On the other hand, for any two $\ZZ/N$-rotationally invariant states $\psi_1$, $\psi_2$ in the same invertible phase and any $\ZZ/N$-invariant good conical cover $C$ we can always find a strictly $C$-local $\ZZ/N$-invariant automorphism $\alpha$ such that $\psi_1$ and $\psi'_2 = \psi_2 \alpha$ are related by an inner $\ZZ/N$-rotationally invariant automorphism $\Ad_v$. Since there are $N$ different possibilities for the angular momentum of $\Ad_v$, the map $\InvPhasesZN \to \InvPhases$ has exactly $N$ preimages.

\subsection{Rotational invariance and CRT-symmetry}

\begin{definition}
Let $\cstar{A}$ be a spin system on $\RR^2$. We say that $\cstar{A}$ has a {\it spatial $D_N$ symmetry} if it is equipped with a $\ZZ/N$-rotational symmetry $\ups$ and a $CRT$-symmetry $j$ satisfying $j \ups j = \ups^{-1}$, which together generate an action of a dihedral group $D_{N} \subset O(2)$. In this case, we let $\{ j_a \}_{a \in \ZZ}$ be $CRT$-symmetries in lines $\{l_a\}_{a \in \ZZ}$ defined by $j_0 = j$, $j_a = \ups^a j_0 = \ups j_{a-2} \ups^{-1}$, $j_{a+N} = j_a$.
\end{definition}

Suppose we have a spin system with a spatial $D_N$ symmetry generated by $\ups$ and $j$. We will say that a state is {\it (strictly) reflection positive with respect to the spatial $D_N$ symmetry} if it is $\ups$-invariant and (strictly) reflection positive with respect to $j$.

\begin{prop} \label{prop:existenceofRPstate}
Any invertible phase admits a representative which is strictly reflection positive with respect to a spatial $D_N$ symmetry.
\end{prop}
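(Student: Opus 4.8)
The plan is to combine Lemma~\ref{lma:ZNInvRotations} with the canonical purification of Appendix~\ref{app:canonicalpurification}, carrying out the purification in a way that makes the result reflection positive with respect to two axes at once. First I would use Lemma~\ref{lma:ZNInvRotations} to fix a $\ZZ/N$-rotationally invariant representative $\psi$ of the given phase; examining that construction, and if necessary stacking with a trivial $\ZZ/N$-invariant state and enlarging the lattice symmetrically, one may assume $\psi$ lives on a spin system $\cstar{A}$ carrying a spatial $D_N$ symmetry generated by the rotation $\ups$ and a $CRT$-symmetry $j_0$ in a line $l_0$ through the origin that misses every site. This fixes the axes $l_a$, the $CRT$-symmetries $j_a=\ups^a j_0$, and the fundamental domain $S=\{z:0<\arg z<\pi/N\}$ of the $D_N$-action, bounded by $l_0$ and $l_1$. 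As in the proof of Lemma~\ref{lma:psicanonicalpurification}, by stacking $\psi$ with an inverse one may also assume $\psi$ is trivial, so invertibility of the state produced below is automatic.

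The key point is that $j_1 j_0 = \ups j_0^2 = \ups$, and a strictly reflection positive state is automatically invariant under the corresponding $CRT$-symmetry; hence a state that is strictly reflection positive with respect to \emph{both} $j_0$ and $j_1$ is $\ups$-invariant, and therefore strictly reflection positive with respect to the spatial $D_N$ symmetry. So it is enough to produce a representative of the phase that is strictly $j_0$- and $j_1$-positive simultaneously. Rather than purifying $\psi$ across a single line, I would therefore start from $\sigma:=\psi|_{S}$, realize $\cstar{A}$ as the tensor product $\bigotimes_{g\in D_N}\cstar{A}_{gS}$ over the $2N$ translates of the fundamental domain, and build a state $\phi$ on $\cstar{A}$ by a $D_N$-equivariant version of the canonical purification of Appendix~\ref{app:canonicalpurification}, extending $\sigma$ using the family $\{j_a\}$. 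Because Woronowicz's canonical purification is the minimal, natural one, and the ambient algebra together with the $CRT$-symmetries $\{j_a\}$ carries the $D_N$-action, the resulting $\phi$ should come out invariant under all of $D_N$; in particular it is strictly $j_0$- and $j_1$-positive and satisfies $\phi|_S=\sigma$.

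It then remains to identify the phase of $\phi$. Since each canonical purification step preserves quasi-equivalence (Theorem~\ref{thm:canonicalpurificationquasiequivalence}), tracking the construction step by step shows $\phi$ lies in the same (here trivial) phase as $\psi$ and in particular is invertible, exactly as in Lemma~\ref{lma:psicanonicalpurification}; alternatively, since $S$ is a cone and $\phi|_S=\psi|_S$, this also follows from Lemma~\ref{lma:stateonahalfplane}. Strict reflection positivity of $\phi$ comes from feeding the canonical purification with the restriction $\sigma$, which is faithful on its factor, at every stage.

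I expect the main obstacle to be the equivariant purification itself: one must show that the canonical purifications used to assemble the $2N$ translates of $S$ are mutually compatible, so that positivity with respect to an axis used at one stage is not destroyed by the purifications across the remaining axes, and that the outcome is genuinely $D_N$-invariant rather than only invariant under a proper subgroup (such as $\langle \ups^2, j_0\rangle$ when $N$ is even). This is precisely the content of the analysis of canonical purification for rotationally symmetric systems developed in Appendix~\ref{app:canonicalpurification}; the uniqueness of the canonical purification is what ultimately forces the full $D_N$-symmetry, at the cost of a little extra bookkeeping of which axes to use in the cases $N$ even versus $N$ odd.
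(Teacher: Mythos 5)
There are two genuine gaps. First, the reduction ``one may assume $\psi$ is trivial'' is fatal here: unlike in Lemma \ref{lma:psicanonicalpurification}, where stacking with an inverse only serves to verify invertibility of the purified state, the present proposition asks for a strictly reflection positive representative of an \emph{arbitrary} invertible phase, and once you replace $\psi$ by $\psi\otimes\psi'$ you are only constructing a representative of the trivial phase. The paper does pass through the trivial-phase state $\phi=(\psi\otimes\psi')\Ad_u$, but then crucially comes back: it restricts the resulting strictly reflection positive state $\omega$ to the subalgebra carrying the purification $\tilde\psi$ of $\psi$, observes that this restriction is a normal strictly reflection positive density matrix in the GNS representation of $\tilde\psi$, and applies the Krein--Rutman argument of Proposition \ref{prop:perronfrobenius} to extract a \emph{pure} strictly reflection positive state unitarily equivalent to $\tilde\psi$, hence in the phase of $\psi$ by Lemma \ref{lma:psicanonicalpurification}. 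Nothing in your proposal plays this role.

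Second, the ``$D_N$-equivariant canonical purification'' assembling $2N$ copies of $\psi|_S$ is not a defined operation, and you correctly flag it as the main obstacle without resolving it. Woronowicz's purification reconstructs a pure state on the whole plane from its restriction to \emph{one} half-plane; after purifying across $l_0$ the state is completely determined, so there is no remaining freedom with which to enforce $j_1$-positivity, and uniqueness of the purification cannot by itself ``force'' $D_N$-invariance of a state built from a wedge. The paper circumvents this differently: it performs $N$ \emph{sequential} half-plane purifications on a state that is trivial up to a strictly $C$-local automorphism whose cones are thin compared to $2\pi/N$, and it is this near-triviality that makes the outcome reflection positive with respect to every $j_a$ at once. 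Finally, strictness is not automatic from faithfulness of $\sigma$: the paper obtains it by perturbing with a strictly reflection positive $\ups$-invariant observable (Lemmas \ref{lma:existenceofSRPZNinvObservable} and \ref{lma:perturbationisstrictlyRP}, packaged as Proposition \ref{prop:makingstrictlyRP}) before the Krein--Rutman step; this, too, is missing from your sketch.
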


\begin{proof}

Let $\psi$ be an invertible state on $\RR^2$. Let us choose a good conical cover $C$ with cones having angles sufficiently small compared to $2 \pi/N$. Let $\beta$ be a strictly $C$-local automorphism such that $(\psi \otimes \psi') \Ad_u \beta$ is a trivial state for some inner automorphism $\Ad_u$ and an inverse $\psi'$ of $\psi$. We let $\phi = (\psi \otimes \psi')\Ad_{u}$ which is unitarily equivalent to $\psi \otimes \psi'$.

Let $\{l_a\}_{a \in \ZZ}$ be the lines for $CRT$-symmetries $\{j_a\}_{a \in \ZZ}$ and let us orient them so that the orientation vectors for $l_a$ and $l_{a+1}$ have angle $2 \pi/N$ between them. Let us sequentially perform the following $N$ operations with $\phi$. At $n$-th step, we take a restriction of the state to the half-plane to the right of the vector corresponding to $l_{n-1}$ and canonically purify it. Since $\phi \beta$ is trivial and $\beta$ is strictly $C$-local, the resulting state $\tilde{\phi}$ is reflection positive with respect to the spatial $D_{N}$ symmetry. If we apply the same procedure to the state $\psi \otimes \psi'$, by Theorem \ref{thm:canonicalpurificationquasiequivalence} we get a state $\tilde{\psi} \otimes \tilde{\psi}'$ which is unitarily equivalent to $\tilde{\phi}$. However, it might not be reflection positive with respect to the $D_N$ symmetry.

The state $\tilde{\phi}$ is exact and reflection positive with respect to any $j_a$. Hence, by Proposition \ref{prop:makingstrictlyRP}, we can find a pure state $\omega$ which is unitarily equivalent to $\tilde{\phi}$ and is strictly reflection positive with respect to the spatial $D_N$ symmetry. Let $\cstar{A}$ be the spin system on which the state $\tilde{\psi}$ is defined. The state $\omega|_{\cstar{A}}$ is quasi-equivalent to $\tilde{\psi}$ and is a normal state in the GNS representation of $\tilde{\psi}$ satisfying the conditions of Proposition \ref{prop:perronfrobenius}. Thus, there is a pure strictly reflection positive state unitarily equivalent to $\tilde{\psi}$, which by Lemma \ref{lma:psicanonicalpurification}, is in the same phase as $\psi$.
\end{proof}

In the same way as with $\ZZ/N$-rotational symmetry, a given invertible phase has many $D_N$-invariant representatives which differ by their angular momentum with respect to discrete rotations. The theorem below shows that if in addition we require that the state is reflection positive, then up to $\ZZ/N$-rotational symmetry-protected equivalence, there is only one such representative. Therefore, it provides a canonical lift $\InvPhases \to \InvPhasesZN$. We say that representatives of the lifted phase have a {\it canonical angular momentum}.

\begin{theorem} \label{thm:RPfixesAngularMomentum}
Among $N$ preimages of a given invertible phase under the forgetful map $\InvPhasesZN \to \InvPhases$, there is one and only one that admits a representative which is strictly reflection positive with respect to a spatial $D_{N}$ symmetry.
\end{theorem}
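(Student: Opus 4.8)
The existence of at least one strictly reflection positive representative is already Proposition \ref{prop:existenceofRPstate}, so the entire content is \emph{uniqueness}: no two of the $N$ preimages can simultaneously carry such a representative. The plan is to argue by contradiction. Suppose $\omega_0$ and $\omega_k$ are strictly reflection positive $D_N$-invariant states lying over the same element of $\InvPhases$, but in preimages differing by angular momentum $k \not\equiv 0 \pmod N$. Since they map to the same invertible phase, for any $\ZZ/N$-invariant good conical cover $C$ there is a strictly $C$-local $\ZZ/N$-invariant automorphism $\alpha$ and a $\ZZ/N$-invariant inner automorphism $\Ad_v$ with $\omega_0 = \omega_k \alpha \,\Ad_v$ (after trivial extensions), where $\Ad_v$ has angular momentum $k$; this is exactly the structure described in the discussion preceding the theorem. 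I want to extract from the reflection positivity of \emph{both} states a constraint that forces $k = 0$, mirroring the finite-dimensional model of Proposition \ref{prop:AngularMomenumofRP1dspinchain} from the appendix: in the GNS picture, strictly reflection positive vectors lie in the interior of a positive self-dual cone, and two vectors of different $\ZZ/N$-angular momenta are orthogonal, so they cannot both be in the interior.

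The key technical step is to make sense of ``angular momentum'' intrinsically in the GNS representation of a common state. First I would pass to a single spin system carrying a spatial $D_N$ symmetry on which we can represent both $\omega_0$ and $\omega_k$ (up to quasi-equivalence, after stacking with an inverse so that the relevant phase is trivial and the underlying algebra has a $D_N$-action with the CRT-symmetry $j$). Using Lemma \ref{lma:psicanonicalpurification} and the canonical-purification machinery of the appendix, a strictly reflection positive state $\omega$ determines a cyclic GNS vector $\Omega_\omega$ fixed by the antiunitary $J$ implementing $j$, and $\Omega_\omega$ lies in the interior of the natural positive cone $\mathcal P^\natural$ attached to the factorization $\cstar A \cong \cstar A_{H_L}\otimes\cstar A_{H_R}$; the rotation $\ups$ is implemented by a unitary $U_\ups$ with $U_\ups^N = 1$, and $J U_\ups J = U_\ups^{-1}$. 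The $C$-local automorphism $\alpha$ relating the two states acts on each side locally, so — after purification and using Theorem \ref{thm:canonicalpurificationquasiequivalence} to control quasi-equivalence — the two GNS vectors $\Omega_{\omega_0}$ and $\Omega_{\omega_k}$ can be placed in a common Hilbert space, and the residual inner $\Ad_v$ of angular momentum $k$ translates into $U_\ups$ acting on the difference of the two cyclic vectors by the phase $e^{2\pi i k/N}$. Concretely, I expect to produce a unitary $w$ with $U_\ups w U_\ups^{-1} = e^{2\pi i k/N} w$ and $w\,\Omega_{\omega_0}$ proportional to $\Omega_{\omega_k}$ (or, after averaging over $\ZZ/N$, isotypic vectors of distinct weights both lying in $\mathcal P^\natural$).

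Once that setup is in place the contradiction is short: since $\mathcal P^\natural$ is self-dual, $\langle \xi,\eta\rangle \ge 0$ for all $\xi,\eta \in \mathcal P^\natural$, and if $\xi$ is in the interior of $\mathcal P^\natural$ then $\langle\xi,\eta\rangle = 0$ forces $\eta = 0$. But $\Omega_{\omega_0}$ and $\Omega_{\omega_k}$, being (limits of) vectors of $U_\ups$-weights $0$ and $k$ respectively with $k\not\equiv 0$, are orthogonal — spectral projections of the finite-order unitary $U_\ups$ onto distinct eigenspaces are orthogonal and $J$-compatible in the sense that $J$ swaps weight $m$ with weight $-m$, so the weight-$k$ and weight-$0$ parts genuinely pair to zero. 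Since both vectors are interior points of $\mathcal P^\natural$ (strict reflection positivity), orthogonality forces one of them to vanish, a contradiction. Hence $k \equiv 0 \pmod N$, i.e. the two representatives lie in the same preimage, proving uniqueness.

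I expect the main obstacle to be the passage from the abstract ``$C$-local automorphism of angular momentum $k$'' relating $\omega_0$ and $\omega_k$ to a genuine \emph{unitary} implementation on a common GNS space that interacts correctly with both the cone $\mathcal P^\natural$ and the rotation unitary $U_\ups$ — in particular, controlling how the strictly $C$-local piece $\alpha$ behaves under canonical purification (it is only \emph{strictly} $C$-local with respect to a fine reflection-invariant cover, which is why the cover must have cones small compared to $2\pi/N$) and ensuring that after purifying the left half-plane the two purified states remain quasi-equivalent so that Proposition \ref{prop:perronfrobenius}-type uniqueness of the interior cone vector applies. The cone-positivity endgame and the orthogonality of distinct angular-momentum sectors are comparatively routine once the single-Hilbert-space picture with a $D_N$-compatible cone is established, exactly as in the one-dimensional toy model of Proposition \ref{prop:AngularMomenumofRP1dspinchain}.
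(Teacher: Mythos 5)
Your overall architecture matches the paper's exactly: existence is Proposition \ref{prop:existenceofRPstate}, and uniqueness is to be extracted from the tension between strict reflection positivity (two strictly reflection positive GNS vectors in a common representation have strictly positive overlap, being interior points of a self-dual cone) and the orthogonality of vectors carrying distinct $\ZZ/N$-angular momenta. The paper also performs the same preliminary reduction you sketch, stacking with the conjugate state so that one only needs to show: a strictly reflection positive $D_N$-invariant state that is trivial in $\InvPhases$ is trivial in $\InvPhasesZN$. The endgame you describe is literally the last three lines of the paper's proof.

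However, the step you yourself flag as ``the main obstacle'' is precisely where the actual content of the proof lies, and your plan does not resolve it. The issue is not merely putting the two states on a common Hilbert space; it is producing, \emph{inside the GNS folium of the given strictly reflection positive state} $\psi$, a second state that is simultaneously (i) strictly reflection positive and (ii) certified to represent the trivial element of $\InvPhasesZN$. Your proposed device --- a unitary $w$ with $U_\ups w U_\ups^{-1}=e^{2\pi i k/N}w$ carrying $\Omega_{\omega_0}$ to $\Omega_{\omega_k}$ --- presupposes that one already knows which eigenspace of $U_\ups$ is the ``reference'' one, but as the paper stresses there is no canonical zero of angular momentum for an infinite system; only relative angular momenta are meaningful, so you need an explicit reference vector. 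The paper manufactures it in three steps: it replaces $\psi$ by $\phi=\psi\Ad_u$ so that $\phi\beta$ is a product state for a strictly $C$-local $D_N$-invariant $\beta$, canonically purifies $\phi$ across the $N$ reflection lines to obtain an exact reflection positive state $\tilde\phi$ that is unitarily equivalent to $\psi$ (Theorem \ref{thm:canonicalpurificationquasiequivalence}) and manifestly trivial in $\InvPhasesZN$ (it is carried to a product state by a strictly $C$-local $D_N$-invariant automorphism), and then upgrades $\tilde\phi$ to a \emph{strictly} reflection positive state in the same folium by conjugating with a strictly reflection positive $D_N$-invariant observable. This last upgrade rests on Lemma \ref{lma:perturbationisstrictlyRP} and on the nontrivial existence statement Lemma \ref{lma:existenceofSRPZNinvObservable} (proved via the transverse-field Ising chain computation in Lemma \ref{lma:IsingChain}); neither ingredient appears in your plan, and without them the claim that some interior cone vector has the reference angular momentum is unsupported. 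So the proposal is a correct strategy with the decisive construction missing.
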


\begin{proof}
By Proposition \ref{prop:existenceofRPstate}, there exists a pure state $\psi$ representing a given invertible phase on a spin system with a spatial $D_{N}$ symmetry which is strictly reflection positive. Suppose there is another state like this $\phi$, and that $\phi$ and $\psi$ represent different elements of $\InvPhasesZN$. Then the state $\omega = \psi \otimes \bar{\phi}$ represents a trivial element of $\InvPhases$, but a nontrivial element of $\InvPhasesZN$. Thus, it is enough to show that any strictly reflection positive state in a trivial phase in $\InvPhases$ is also in a trivial phase in $\InvPhasesZN$.

Let $\psi$ be a state in a trivial phase on $\RR^2$ which is strictly reflection positive with respect to a spatial $D_N$ symmetry. Let us choose a $D_N$-invariant good conical cover $C$ with cones having angles sufficiently small compared to $2 \pi/N$. Let $\beta$ be a strictly $C$-local automorphism such that $\psi \Ad_u \beta$ is a trivial state for some $D_N$-invariant inner automorphism $\Ad_u$. We let $\phi = \psi \Ad_{u}$ which is unitarily equivalent to $\psi$.

We apply the same procedure to the state $\phi$ as in the proof of Proposition \ref{prop:existenceofRPstate} to get a state $\tilde{\phi}$, which is exact and reflection positive. By Theorem \ref{thm:canonicalpurificationquasiequivalence}, $\tilde{\phi}$ is unitarily equivalent to $\psi$. There exists a strictly $C$-local $D_N$-invariant automorphism that transforms $\tilde{\phi}$ into a trivial state. Therefore, $\tilde{\phi}$ represents a trivial element in $\InvPhasesZN$.

By Lemma \ref{lma:existenceofSRPZNinvObservable}, there exists a strictly reflection positive $D_N$-invariant observable $a$. Therefore, by Lemma \ref{lma:perturbationisstrictlyRP}, the state $\omega$ defined by $ \omega (x) = \tilde{\phi}(a^* x a)/\tilde{\phi}(a^* a)$ is strictly reflection positive and represents a trivial element in $\InvPhasesZN$. 

Let $\Omega$, $\Psi$ be vectors representing the states $\omega$, $\psi$ in the GNS Hilbert space associated with $\omega$. Since $\omega$ and $\psi$ are strictly reflection positive, we have $\lal \Omega, \Psi\ral > 0$. Thus, $\psi$ also represents a trivial element in $\InvPhasesZN$, because otherwise $\lal \Omega, \Psi \ral = 0$.

\end{proof}

\section{A refined index for 2d invertible phases}   \label{sec:index}

In \cite{sopenko2024}, a $U(1)$-valued index $\omega_N$ for invertible phases has been defined, where $N$ is an arbitrary positive integer. It was explained that this index is essentially given by an SPT index\footnote{An SPT index for 2d invertible states for an arbitrary on-site finite group action has been defined in \cite{sopenko2021,ogata2021h}. But it was not pointed out that this index can be used to define an invariant of invertible phases in the absence of any symmetry.} for an $S_N$ symmetric group action on a stack of $N$ copies of a given invertible state. The latter interpretation leads to the quantization condition $(\omega_N)^{12} = 1$. Based on the analogy with the anomalies of conformal field theories where a similar invariant appears \cite{johnson2019moonshine}, it was conjectured that this index provides a microscopic definition of the chiral central charge mod $24$. It was also shown in \cite{sopenko2024}, that for fermionic systems one can define an analogous index that allows us to distinguish phases represented by quasi-free invertible states with Chern number $\nu \bmod 48 \neq 0$. That gives an additional evidence for the relation between $\omega_N$ and $c_- \bmod 24$. 

The construction of \cite{sopenko2024}, however, was designed with an intent to define a finer invariant than an SPT index. As was argued earlier for the special class of Laughlin's states \cite{gromov2016geometric}, the braiding properties of twist defects (which are also called genons in \cite{barkeshli2013twist}) in an $N$-layered system are sensitive to the chiral central charge $c_-$. Since twist defects can be defined microscopically for any invertible quantum many-body system and since their statistics characterized by "topological spin" $\theta_N$ is robust against arbitrary local deformations, one could hope to formalize $\theta_N$ as an invariant of the phase that is defined without an assumption of conformal symmetry. Since it detects $c_-$ when conformal symmetry emerges, it should provide a microscopic alternative for $c_-$.

It was shown in \cite{sopenko2024}, that $\omega_N = (\theta_N)^N$ indeed provides an invariant of the phase. It was also pointed out (see Remark 3.2 and Remark 1.2 in \cite{sopenko2024}) that $\theta_N$ {\it would be} an invariant of the phase if it was not for the ambiguity in the choice of a $\ZZ/N$-charge for a twist defect state that changes $\theta_N$ by an $N$-th root of unity. This ambiguity is closely related to an ambiguity in the angular momentum of an invertible state invariant under $\ZZ/N$-rotational symmetry (see Section \ref{ssec:rotationalinvariance}). 

With the canonical choice of the angular momentum from Section \ref{sec:AngularMomentum} provided by reflection-positivity, we are now in a position to define a refined index $\theta_N \in U(1)$ that satisfies $(\theta_N)^N = \omega_N$. In this section, we basically repeat the construction of $\omega_N$ from \cite{sopenko2024} with an adjustment that allows to define a refined invariant $\theta_N$.

\begin{remark} \label{rmk:thetaNandcminus}
When conformal symmetry emerges on the boundary of a material, twist defect states in the bulk correspond to sectors of twist field operators in the conformal field theory describing the edge modes. There is a well-known relation between the chiral central charge $c_-$ and the conformal spin $\theta^{\text{CFT}}_{N}$ of twist field operators \cite{knizhnik1987analytic}
$$
\theta^{\text{CFT}}_{N} = e^{2 \pi i \frac{c_-}{24} \l N - \frac{1}{N} \r}.
$$
Moreover, for holomorphic conformal field theories which correspond to invertible states in the bulk, we have a quantization condition $c_- \in 8 \ZZ$ that implies $(\theta^{\text{CFT}}_N)^{3N} = 1$. 

It would be interesting to determine whether the same dependence on $N$ and the same quantization condition holds for the index $\theta_N$ we define in this paper. If true, $\theta_N$ provides a microscopic definition of $c_-$ for invertible phases.
\end{remark}

\begin{remark}
We believe that the same idea can be used to define a $U(1)$-valued invariant of a more general class of 2d topological phases that exhibit anyons, once a proper mathematical framework characterizing such phases is developed. The main physical ingredients that make this invariant well-defined are the existence of twist defects and their mobility, which allows us to define the exchange phase factor. 

Under the hypothesis that such phases are described by a modular tensor category $\mathcal{C}$ and a chiral central charge $c_-$, the expected characterization of twist defects and their statistics can be deduced from the data of an $S_N$-crossed braided tensor category $\mathcal{C}^{\boxtimes N}$ together with a "twist" given by an element in $H^3(S_N,U(1))$ (see \cite{barkeshli2019symmetry}). We emphasize that the information about the twist is not contained in $\mathcal{C}$ but is an additional piece of data fixed by $c_-$. This is consistent with the fact that $\mathcal{C}$ can detect only $c_- \bmod 8$.
\end{remark}

\subsection{The index}

\begin{figure}
\centering
\begin{tikzpicture}[scale=.5]
\filldraw[color=orange!10, fill=orange!10, ultra thick] (0,0) -- (2,-4) -- (-2,-4) -- cycle;
\filldraw[color=orange!10, fill=orange!10, ultra thick] (0,0) -- (3.4641-1,2+3.4641/2) -- (3.4641+1,2-3.4641/2) -- cycle;
\filldraw[color=orange!10, fill=orange!10, ultra thick] (0,0) -- (-3.4641+1,2+3.4641/2) -- (-3.4641-1,2-3.4641/2) -- cycle;
\draw[orange, very thick] (0,0) -- (3.4641+1,2-3.4641/2);
\draw[orange, very thick] (0,0) -- (3.4641-1,2+3.4641/2);
\draw[orange, very thick] (0,0) -- (-3.4641-1,2-3.4641/2);
\draw[orange, very thick] (0,0) -- (-3.4641+1,2+3.4641/2);
\draw[orange, very thick] (0,0) -- (-2,-4);
\draw[orange, very thick] (0,0) -- (2,-4);

\draw[gray, very thick] (0,0) -- (3.4641,2);
\draw[gray, very thick] (0,0) -- (-3.4641,2);
\draw[gray, very thick] (0,0) -- (0,-4);

\node  at (0,-1.5*4) {$B_{12}$}; 
\node  at (1.5*1.7321*4/2,1.5*4/2) {$B_{01}$};
\node  at (-1.5*1.7321*4/2,1.5*4/2) {$B_{20}$};
\node  at (0,1.5*3) {$A_0$}; 
\node  at (-1.5*1.7321*3/2,-1.5*3/2) {$A_2$};
\node  at (1.5*1.7321*3/2,-1.5*3/2) {$A_1$};

\filldraw [gray] (0,0) circle (5pt);

\end{tikzpicture}
\caption{The cones $A_0, A_1, A_2$ are separated by gray lines. The orange lines correspond to the boundaries of the cones $B_{01}, B_{12}, B_{20}$.}
\label{fig:cones}
\end{figure}
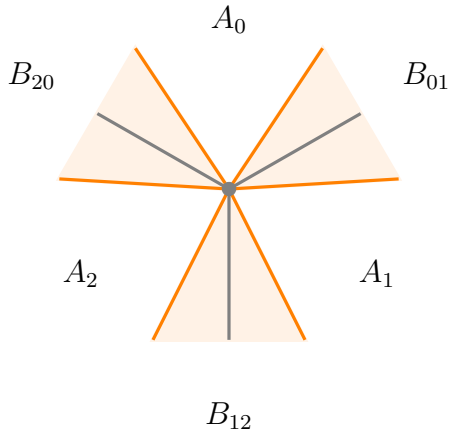

We consider the same setup as in Section \ref{ssec:twistdefect}. Let $\omega$ be an invertible state on $\RR^2$, and let $N$ be a positive integer. Let $\cstar{A}$ be a stack of $N$ copies of the system and let $\psi = \omega^{\otimes N}$ be the corresponding state. The symmetric group $S_N$ acts on $\cstar{A}$ via on-site automorphisms in a natural way and preserves the state $\psi$. Let $\rho$ be the generator of $\ZZ/N \subset S_N$ subgroup that corresponds to the cyclic permutation $\{1,2,...,N\} \to \{N,1,2,...,N-1\}$.

In Section \ref{sec:AngularMomentum}, we explained how to associate a $\ZZ/N$-rotationally invariant state with any twist defect state. We say that a twist defect state $\varphi$ has a {\it canonical $\ZZ/N$-charge} if the corresponding $\ZZ/N$-rotationally invariant state has a canonical angular momentum. 

Let $A$ be a cone having a half-line $l_+$ as one of its sides, and let $\varphi$ be a twist defect state associated with $l_+$. Then if $\varphi$ has a canonical $\ZZ/N$-charge, then so is the twist defects state $\varphi \rho_{A}$ (which is associated with another side of $A$), because these two states give the same $\ZZ/N$-rotationally invariant state up to isometry of $\RR^2$.

Let us choose a good conical cover $\{A_0, A_1, A_2, B_{01}, B_{12}, B_{20}\}$ as depicted on Fig. \ref{fig:cones} with the apex and the boundaries of the cones not intersecting the locations of the sites of the lattice. We define $\rho_a := \rho_{A_a}$ which is the restriction of the automorphism $\rho$ to the cone $A_a$. We let $l_{20}$, $l_{01}$, $l_{12}$ be the half-lines between the pairs of cones $A_0$, $A_1$, $A_2$. Possibly after trivial extension of $\omega$, we can choose $\rho$-invariant automorphisms $\kappa_{20}$, $\kappa_{01}$, $\kappa_{12}$ on $B_{20}$, $B_{01}$, $B_{12}$, respectively, such that twist defect states $\psi \kappa_{a(a+1)}^{-1}$ associated with $l_{a(a+1)}$ have a canonical $\ZZ/N$-charge and the automorphisms $w_a = \kappa_{(a-1)a}^{-1} \rho_a \kappa_{a(a+1)}$ preserve the unitary equivalence class of $\psi$. Note that $w_a$ commute with each other.

Let $P$, $W_a$ be the unitary operators implementing $\rho$, $w_a$, respectively, in the GNS representation associated with $\psi$ so that $P^N = 1$. Because the twist defect states $\psi \kappa_{a(a+1)}^{-1}$ have a canonical $\ZZ/N$-charge, we have $P W_a P^{-1} = W_a$. Because $w_a$ commute with each other, we have
$$
W_0 W_1 = \theta_N W_1 W_0
$$
for some $\theta_N \in U(1)$.

\begin{prop} \label{prop:ThetaIndex}
For a given invertible phase, $\theta_N \in U(1)$ constructed above does not depend on the choice of a representative $\omega$, a good conical cover $\{A_0, A_1, A_2, B_{01}, B_{12}, B_{20}\}$, automorphisms $\kappa_{a(a+1)}$, and unitary operators $W_a$. It defines a homomorphism $\theta_N: \InvPhases \to U(1)$.
\end{prop}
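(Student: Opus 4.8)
The plan is to establish well-definedness in stages, peeling off one source of ambiguity at a time, and then to check homomorphy by a stacking argument. The key technical point that makes everything work is that, once we insist on the canonical $\ZZ/N$-charge, the ambiguities in $\kappa_{a(a+1)}$ reduce to $\rho$-invariant automorphisms supported on cones that do not overlap the half-lines $l_{a(a+1)}$, together with $\rho$-invariant inner automorphisms; the former do not affect $W_0, W_1$ up to phases that cancel in the commutator, and the latter change each $W_a$ by a scalar, which also cancels in $W_0 W_1 (W_1 W_0)^{-1}$. I would phrase this as the following reductions, carried out in order.

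First I would fix a representative $\omega$ and show independence of the choice of the $W_a$ and of the GNS vector: any two unitaries implementing the same automorphism $w_a$ in an irreducible representation differ by a scalar, so $\theta_N$ is unchanged; and since $\psi$ is pure its GNS representation is determined up to unitary equivalence, under which the commutation relation is preserved. Second, independence of the $\kappa_{a(a+1)}$: by the uniqueness lemma for twist defect states (the lemma proved in Section \ref{ssec:twistdefect}), any two choices of $\rho$-invariant automorphisms producing twist defect states with the canonical $\ZZ/N$-charge along the same $l_{a(a+1)}$ differ, up to $\rho$-invariant inner automorphisms, by a $\rho$-invariant automorphism supported away from $l_{a(a+1)}$ — and the constraint that the associated $\ZZ/N$-rotationally invariant state has \emph{canonical} angular momentum (Theorem \ref{thm:RPfixesAngularMomentum}) pins down the angular momentum of the residual inner part, so the leftover inner automorphism contributes only a phase to $W_a$. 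Tracking how $W_0, W_1$ change and invoking that they commute with $P$ (which again forces the relevant scalars), one sees $\theta_N$ is unchanged. Third, independence of the good conical cover $\{A_0,A_1,A_2,B_{01},B_{12},B_{20}\}$: as in Section \ref{ssec:twistdefect}, deforming one cone at a time is absorbed into a change of the $\kappa_{a(a+1)}$, reducing to the previous step; here one should be slightly careful that the deformations are compatible with keeping the cover refinable to a finer one, so that the algebra of $C$-local automorphisms is unaffected, exactly as in Lemma \ref{lma:movingconicalcover}.

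Fourth, independence of the representative $\omega$ within its phase: if $\omega_1, \omega_2$ are in the same phase, then for a sufficiently fine $\ZZ/N$-invariant cover they are related, after trivial extension, by a $C$-local automorphism; stacking $N$ copies and choosing the cover with cones thin compared to the sectors $A_a$, one transports the whole configuration $(\rho_a, \kappa_{a(a+1)}, w_a)$ from one state to the other, and the canonicity of the $\ZZ/N$-charge is preserved precisely because canonical angular momentum is a phase invariant (Theorem \ref{thm:RPfixesAngularMomentum}). The commutation relation $W_0 W_1 = \theta_N W_1 W_0$ is unitarily transported, so $\theta_N$ agrees. Finally, the homomorphism property: for a stack $\omega = \omega' \otimes \omega''$, the stack of $N$ copies splits as a tensor product, one may choose $\rho$, the cones, and the $\kappa_{a(a+1)}$ diagonally, and the $W_a$ factor as $W_a' \otimes W_a''$; then the commutator $W_0 W_1 (W_1 W_0)^{-1}$ is the product of the two factorwise commutators, so $\theta_N(\omega) = \theta_N(\omega')\,\theta_N(\omega'')$, and $\theta_N$ of a trivial state is $1$ since there $\rho$ acts trivially on a product state and the $w_a$ can be taken to be the identity.

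I expect the main obstacle to be the second step — showing that the canonical-$\ZZ/N$-charge constraint genuinely removes the $N$-fold ambiguity in the residual inner automorphism, rather than just the continuous part. This is where Theorem \ref{thm:RPfixesAngularMomentum} does the essential work, but one must argue carefully that the correspondence between twist defect states and $\ZZ/N$-rotationally invariant states (via the branched-cover map of Section \ref{ssec:rotationalinvariance}) is compatible with reading off angular momenta, so that "canonical $\ZZ/N$-charge" is an unambiguous, deformation-stable property; once that is in place the rest is bookkeeping with scalars in an irreducible representation.
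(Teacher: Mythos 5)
Your overall outline (first remove the scalar ambiguity in the $W_a$, then the ambiguity in $\kappa_{a(a+1)}$, then the cover, then the representative, then check homomorphy by tensor factorization) matches the structure of the paper's proof. The problem is in your second step, which you yourself identify as the crux but resolve with a mechanism that is not correct. You claim that the residual ambiguity in $\kappa_{a(a+1)}$ "changes each $W_a$ by a scalar." It does not. If $\tilde{\kappa}_{a(a+1)}$ is another admissible choice, then $\kappa_{a(a+1)}^{-1}\tilde{\kappa}_{a(a+1)}$ preserves the unitary equivalence class of $\psi$ and is therefore implemented by a genuine, generally non-scalar, unitary $K_{a(a+1)}$ in the GNS representation, and the new implementers are $\tilde{W}_a = K_{(a-1)a}^{-1} W_a K_{a(a+1)}$. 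Composing an automorphism with an inner automorphism changes its implementing unitary by multiplication by the image of the implementing element, not by a phase; only the choice of implementer for a \emph{fixed} automorphism in an irreducible representation is scalar-ambiguous. So the cancellation of the $K$'s from $\theta_N$ must be proved, not asserted, and your proposed decomposition of the ambiguity into "an automorphism supported away from $l_{a(a+1)}$ plus an inner automorphism" is neither established nor actually needed.

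The paper closes this gap with three specific inputs: canonicity of the $\ZZ/N$-charge for both twist defect states gives $P K_{a(a+1)} P^{-1} = K_{a(a+1)}$; a locality statement (Lemma 2.2 of \cite{sopenko2024}) gives that $K_{a(a+1)}$ commutes with $W_{a+2}$; and the conjugation identity $W_0 W_1 K_{01} W_0^{-1} W_1^{-1} = P K_{01} P^{-1} = K_{01}$ then permits a direct computation showing $\tilde{W}_1 \tilde{W}_0 = \theta_N^{-1}\,\tilde{W}_0 \tilde{W}_1$. Without the commutation of $K_{a(a+1)}$ with $W_{a+2}$ and this conjugation identity, your argument does not close; "invoking that they commute with $P$" only gives the first of the three inputs. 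The remaining steps of your proposal — cover independence by deforming one (asymptotic) cone at a time and absorbing the change into $\kappa_{a(a+1)}$, representative independence via $w_a' = \alpha^{-1} w_a \alpha$ for the $C$-local automorphism $\alpha$ relating the two states, and homomorphy via factorization of the $W_a$ under stacking — agree with the paper and are fine.
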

\begin{lemma} \label{lma:supercommutingimplementers}
Let $\omega$ be an invertible state. Let $B_L$, $B_R$ be disjoint cones with the same apex, and let $\beta_L$, $\beta_R$ be automorphisms on $B_L$, $B_R$, respectively, which preserve the unitary equivalence class of $\omega$. Let $U_L$, $U_R$ be the unitary operators implementing $\beta_L$, $\beta_R$ in the GNS representation of $\omega$. Then $U_L U_R = U_R U_L$.
\end{lemma}

\begin{proof}
Since we can always stack $\omega$ with its inverse, without loss of generality we can assume that $\omega$ is in a trivial phase. 

Let $C$ be a sufficiently fine good conical cover and let $\alpha$ be a strictly $C$-local automorphism, such that $\omega \alpha$ is unitarily equivalent to a trivial state. Then $\alpha^{-1} \beta_L \alpha$, $\alpha^{-1} \beta_R \alpha$ are automorphisms on slightly bigger disjoint cones which preserve the unitary equivalence class of a trivial state. The unitary operators implementing these automorphisms commute in the GNS representation of a trivial state, that implies the statement of the lemma.


\end{proof}
\begin{proof}[Proof of Proposition \ref{prop:ThetaIndex}]
The operators $W_a$ only have phase factor ambiguity which clearly does not affect $\theta_N$. 

Suppose $\tilde{\kappa}_{a(a+1)}$ is a different choice for $\kappa_{a(a+1)}$. Then $(\kappa^{-1}_{a(a+1)} \tilde{\kappa}_{a(a+1)})$ preserves the unitary equivalence class of $\psi$, and therefore can be implemented by a unitary operator $K_{20}$ in the GNS representation. Since both states $\psi \kappa^{-1}_{a(a+1)}$, $\psi \tilde{\kappa}^{-1}_{a(a+1)}$ have a canonical $\ZZ/N$-charge, we have $P K_{a(a+1)} P^{-1} = K_{a(a+1)}$. By Lemma \ref{lma:supercommutingimplementers}, $K_{a(a+1)}$ commutes with $W_{a+2}$. For a new choice of $W_a$, we have $\tilde{W}_a = K_{(a-1)a}^{-1} W_a K_{a(a+1)}$. Then
$$
\tilde{W}_1 \tilde{W}_0 = K_{01}^{-1} W_1 K_{12} K_{20}^{-1} W_0 K_{01} = K_{20}^{-1} K_{01}^{-1} W_1  W_0 K_{01} K_{12} = K_{20}^{-1}  W_1  W_0 K_{12} = \theta_{N}^{-1} \tilde{W}_0 \tilde{W}_1,
$$
where we have used $W_0 W_1 K_{01} W_0^{-1} W_1^{-1} = P K_{01} P^{-1} = K_{01}$. Thus, $\theta_N$ is independent of the choice of $\kappa_{a(a+1)}$.

To show independence of the good conical cover, we note that we can replace the cones $A_a$, $B_{a(a+1)}$ in the construction by more general regions $\tilde{A}_a = A_a \setminus Y_a$, $\tilde{B}_{a(a+1)} = B_a \setminus Y_{a(a+1)}$ for some bounded regions $Y_a$, $Y_{a(a+1)}$ which we can call "asymptotic cones". A different choice of a single asymptotic cone $\tilde{A}_a$ or $\tilde{B}_{a(a+1)}$ (without changing the other asymptotic cones) can be equivalently described as a different choice of $\kappa_{a(a+1)}$, and therefore, gives the same $\theta_N$. By changing asymptotic cones step by step we can relate any two good conical covers, therefore $\theta_N$ is unambiguous.

Let us choose $C$ with the same apex as the cones $A_a$, $B_{a(a+1)}$ and with the cones having sufficiently small bases (compared to the bases of $A_a$, $B_{a(a+1)}$). Suppose $\omega'$ is another invertible state in the same phase as $\omega$. Trivial extension of the states does not affect the computation of $\theta_N$. We can assume without loss of generality that $\omega$ and $\omega'$ are related by a $C$-local automorphism. Let $\alpha$ be the corresponding $C$-local automorphism relating $\psi$ and $\psi' = (\omega')^{\otimes N}$. Then, to compute $\theta_N$ for $\psi'$, we can use automorphisms $w'_a = \alpha^{-1} w_a \alpha$, which produce the same commutations relations for the unitary operators implementing them. Thus, $\theta_N$ is an invariant of invertible phases. It is also manifest from the construction that the assignment $\theta_N: \InvPhases \to U(1)$ is a homomorphism.
\end{proof}

It is clear from the construction of $\omega_N$ from \cite{sopenko2024}, that $(\theta_N)^N = \omega_N$. It was shown that $\omega_N$ is related to the SPT index for $S_N$-symmetry taking values in $H^3(S_{N \geq 6},U(1)) = \ZZ/3 \times \ZZ/4 \times \ZZ/2 \times \ZZ/2$ that leads to the quantization condition $(\omega_N)^{12} = 1$. Therefore, we have 
$$
(\theta_N)^{12 N} = 1.    
$$
Note that this quantization condition is weaker than the one expected from conformal field theory $(\theta_N)^{3N}=1$ (see Remark \ref{rmk:thetaNandcminus}). We leave it as an open question whether the stronger quantization condition holds.

Suppose $\omega_N = 1$. In addition to the canonical angular momentum for a $\ZZ/N$-rotationally invariant state, there is an angular momentum for which twist defect states have a trivial topological spin. The invariant $\theta_N$ is given by the relative $\ZZ/N$-charge of the corresponding twist defects. Let $N' = m N$. A $\ZZ/N'$-rotationally invariant state with the canonical angular momentum would also have a canonical angular momentum for $\ZZ/N$-rotational symmetry. Similarly, a state that corresponds to a twist defect state for $\ZZ/N'$-symmetry with a trivial topological spin would also correspond to such a state with $\ZZ/N$-symmetry. It follows that we have the following relation $\theta_N = (\theta_{mN})^{m}.$ Sequences $\{\theta_N\}_{N \in \NN}$, $\theta_1 = 1$ satisfying this relation are in one-to-one correspondence with pro-finite integers $\hat{\ZZ} := \text{Hom}(\mathbb{Q}/\ZZ, U(1))$ which contain integers $\ZZ \subset \hat{\ZZ}$ via $n \to e^{2 \pi i n (\,\cdot \,)}$, $n \in \ZZ$. Thus, we can define an index $\mu \in \hat{\ZZ}$ for invertible phases by taking twelve copies of a given invertible state so that $\omega_N=1$. Conjecturally, $\mu \in 4 \ZZ \subset \hat{\ZZ}$.

\appendix

\section{Reflection positivity and canonical purification} \label{app:canonicalpurification}

Let $\cstar{B} = M_n(\CCC)$ be a matrix $C^*$-algebra. For $x \in \cstar{B}$, we denote the complex conjugate matrix by $\bar{x}$. Let $\cstar{A} = \cstar{B} \otimes \cstar{B}$. For any state $\psi$ on $\cstar{B}$ described by a density matrix $\rho$ (i.e. $\psi(x) = \Tr (\rho x)$), there are many pure states $\omega$ on $\cstar{A}$, satisfying $\omega(x \otimes 1) = \psi(x)$. Such states are called {\it purifications} of $\psi$. If in addition we require that $\omega(x \otimes \bar{x}) \geq 0$ for any $x \in \cstar{B}$, then such $\omega$ is unique and is called a {\it canonical purification}. It is canonical because we have fixed an anti-linear automorphism $a \to \bar{a}$ of $\cstar{B}$. If $\rho = \sum_{i=1}^{n} \lambda_i \xi_i \otimes \xi^*_i$, $\lambda_i \geq 0$, $\xi_i \in \CCC^n$, $\lal \xi_i, \xi_j \ral = \delta_{i,j}$\footnote{In this paper, we denote the inner product in a Hilbert space $\hilb{H}$ by $\lal\,\cdot\,,\,\cdot\,\ral$. Our convention is that the inner product is anti-linear in the first argument: $\lal b y, a x \ral = a \bar{b} \lal y,x \ral$, $a,b \in \CCC$, $x,y \in \hilb{H}$.}, then the vector in $\CCC^d \otimes \CCC^d$ representing the canonical purification is given by $\sum_{i=1}^{n} \sqrt{\lambda_i} \xi_i \otimes \bar{\xi}_i \in \CCC^n \otimes \CCC^n$. Such vectors form a positive self-dual cone in $\CCC^n \otimes \CCC^n$, and we have a bijection between vectors in this cone and states on $\cstar{B}$.

This construction has a generalization to infinite-dimensional unital $C^*$-algebras as explained in \cite{woronowicz1972purification, woronowicz1973purification}. We review this construction in this appendix. We then analyze the situation when we have an additional symmetry of $\cstar{A}$ that corresponds to $\ZZ/N$ rotations of a quantum system in various physical applications. As a byproduct, we obtain some statements about 1d quantum spin chains satisfying reflection positivity (see Section \ref{sapp:1dspinchain}).

\subsection{Canonical purification}

Let $\cstar{B}$ be a unital $C^*$-algebra. The opposite $C^*$-algebra $\cstar{B}^{op}$ has the same space of elements and $*$-operation as $\cstar{B}$ but with the multiplication being given by $a \circ b := ba$. We denote the element in $\cstar{B}^{op}$ corresponding to $a^* \in \cstar{B}$ by $\bar{a}$. Note that $\bar{a} \circ \bar{b} = \overline{ab}$.

Let $\cstar{A} = \cstar{B} \otimes \cstar{B}^{op}$, and let $j$ be the anti-linear automorphism of $\cstar{A}$ defined by $j(a \otimes \bar{b}) = b \otimes \bar{a}$. A state $\omega$ on $\cstar{A}$ is called {\it $j$-positive} if for any $x \in \cstar{B}$ we have $\omega(x \otimes \bar{x}) \geq 0$. 
\begin{remark}
If $\omega$ is $j$-positive, then it is $j$-invariant, i.e. $\omega(x) = \overline{\omega(j(x))}$. Indeed, since $\omega((\bar{a}+\bar{b})\otimes (a + b)) \geq 0$ and $\omega((\bar{a} - i \bar{b})\otimes (a + i b)) \geq 0$, the real part of $-\omega(\bar{b} \otimes a) + \omega(\bar{a} \otimes b)$ and the imaginary part of $\omega(\bar{b} \otimes a) + \omega(\bar{a} \otimes b)$ vanish. Therefore, $\omega(\bar{b} \otimes a) = \overline{\omega(\bar{a} \otimes b)}$.
\end{remark}
A state $\omega$ on $\cstar{A}$ is called {\it exact}\footnote{We follow the terminology of \cite{woronowicz1972purification}. The same condition is often called Haag duality.} if the von Neumann algebras $\cstar{R} = \pi_{\omega}(\cstar{B})''$, $\cstar{R}^{op} = \pi_{\omega}(\cstar{B}^{op})''$ generated by $\cstar{B}$, $\cstar{B}^{op}$ in the GNS representation $(\pi_{\omega},\hilb{H}_{\omega})$ corresponding to $\omega$ are commutants of each other: $\cstar{R}' = \cstar{R}^{op}$. A pure state $\omega$ on $\cstar{A}$ is called a {\it purification} of a state $\psi$ on $\cstar{B}$ if for any $x \in \cstar{B}$, we have $\psi(x) = \omega(x \otimes \bar{1})$. A state $\psi$ on $\cstar{B}$ is called {\it factorial} if the von Neumann algebra $\pi_{\psi}(\cstar{B})''$ in the GNS representation $(\pi_{\psi},\hilb{H}_{\psi})$ corresponding to $\psi$ has a trivial center.

\begin{theorem}[Theorem 1.1 \cite{woronowicz1972purification} together with Theorem 1.1 \cite{woronowicz1973purification}] \label{thm:canonicalpurification}
Let $\psi$ be a factorial state on $\cstar{B}$. Then there exists a unique purification $\omega$ of $\psi$ which is exact and $j$-positive. 
\end{theorem}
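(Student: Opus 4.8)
The plan is to reduce the statement, which concerns a factorial (rather than merely finite-dimensional) state, to a situation where Woronowicz's existence theory applies, and to establish uniqueness separately using the self-duality coming from $j$-positivity. First I would recall the GNS construction for $\psi$ on $\cstar{B}$, yielding $(\pi_\psi,\hilb{H}_\psi,\Omega_\psi)$, and form the von Neumann algebra $\cstar{M}=\pi_\psi(\cstar{B})''$, which by factoriality has trivial center. A canonical purification of $\psi$ should be realized on the standard form of $\cstar{M}$: $\Omega_\psi$ is cyclic for $\cstar{M}$, and one passes to its support projection to get a cyclic and separating vector for a (possibly different) normal state $\psi'$ on $\cstar{M}$ that restricts back to $\psi$ on $\cstar{B}$; here Tomita--Takesaki theory supplies a modular conjugation $J$ with $J\cstar{M}J=\cstar{M}'$ and a self-dual positive cone $\mathcal{P}^\natural$. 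The representation $x\otimes\bar{y}\mapsto \pi_\psi(x)\,J\pi_\psi(y)^*J$ of $\cstar{A}=\cstar{B}\otimes\cstar{B}^{op}$ on $\hilb{H}_\psi$, evaluated in the state given by the unique vector of $\mathcal{P}^\natural$ representing $\psi'$, is then the candidate purification $\omega$.

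The key steps in order: (1) verify that $\omega$ so defined satisfies $\omega(x\otimes\bar 1)=\psi(x)$, i.e.\ it is a purification — this is immediate from $\langle \Xi,\pi_\psi(x)\Xi\rangle=\psi'(\pi_\psi(x))=\psi(x)$ where $\Xi\in\mathcal{P}^\natural$ represents $\psi'$; (2) verify $j$-positivity: $\omega(x\otimes\bar x)=\langle\Xi,\pi_\psi(x)J\pi_\psi(x)^*J\,\Xi\rangle$, and since $J\Xi=\Xi$ and $\pi_\psi(x)J\pi_\psi(x)^*J=\pi_\psi(x)J\pi_\psi(x)^*J$ preserves $\mathcal{P}^\natural$ (the standard fact that $aJaJ\mathcal{P}^\natural\subseteq\mathcal{P}^\natural$ for $a\in\cstar{M}$), the vector $\pi_\psi(x)J\pi_\psi(x)^*J\,\Xi$ lies in $\mathcal{P}^\natural$, hence has nonnegative inner product with $\Xi\in\mathcal{P}^\natural$ by self-duality; (3) verify exactness (Haag duality): $\pi_\omega(\cstar{B})''=\cstar{M}$ and $\pi_\omega(\cstar{B}^{op})''=J\cstar{M}J=\cstar{M}'$, which is exactly the commutant relation; (4) verify purity of $\omega$: the GNS representation of $\cstar{A}$ at $\omega$ is irreducible because $\cstar{M}\vee\cstar{M}'=\mathcal{B}(\hilb{H}_\psi)$ acts irreducibly and $\Xi$ is cyclic for it (cyclic and separating for $\cstar{M}$ implies cyclic for $\cstar{M}\vee\cstar{M}'$); (5) prove uniqueness — if $\omega'$ is another exact $j$-positive purification, build from its GNS data an isomorphism of the two representations intertwining the $\cstar{A}$-actions and sending one cyclic vector to the other, using that both cyclic vectors sit in the respective self-dual cones and both restrict to $\psi$, so they must be identified by the canonical isomorphism of standard forms (standard forms of a von Neumann algebra are unique up to a unitary carrying cone to cone and fixing the state).

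I would organize the writeup so that steps (1)--(4) are packaged as "the standard-form construction produces an exact $j$-positive purification" and cite the Tomita--Takesaki machinery (existence of $J$ and $\mathcal{P}^\natural$, the relations $J\cstar{M}J=\cstar{M}'$, $a\,JaJ\,\mathcal{P}^\natural\subseteq\mathcal{P}^\natural$, self-duality of $\mathcal{P}^\natural$, and uniqueness of the standard form) rather than reproving them; the factoriality hypothesis is used only to ensure $\cstar{M}$ has trivial center, which is what makes the purification genuinely pure (an irreducible $\cstar{A}$-representation) rather than merely a cyclic representation. The main obstacle I anticipate is the bookkeeping around the support projection: $\Omega_\psi$ need not be separating for $\cstar{M}$, so one must first cut down by the central support (trivial, by factoriality) and the support of $\psi$ in $\cstar{M}$, replacing $\psi$ by a faithful normal state $\psi'$ on a corner, apply Tomita--Takesaki there, and then check that the resulting purification of $\psi'$ is still a purification of the original $\psi$ on $\cstar{B}$ and that purity is not lost in the reduction; this is where the argument is genuinely delicate rather than formal, and it is essentially the content of Woronowicz's two papers, so in practice I would present a streamlined version and defer to \cite{woronowicz1972purification, woronowicz1973purification} for the technical lemmas about reduction to the faithful case.
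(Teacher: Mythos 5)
The paper does not actually prove this statement: it is imported verbatim from Woronowicz (Theorem 1.1 of each of the two cited papers) and stated without proof, so there is no internal argument to compare yours against. Judged on its own terms, your existence argument (steps 1--4) is essentially correct and is the standard modern route: represent $\cstar{A}=\cstar{B}\otimes\cstar{B}^{op}$ on the standard form $L^2(\cstar{M})$ of $\cstar{M}=\pi_\psi(\cstar{B})''$, take the unique vector $\xi$ in the natural cone inducing the normal extension of $\psi$, and read off $j$-positivity from self-duality of the cone together with $a\,JaJ\,\mathcal{P}\subseteq\mathcal{P}$. Two corrections, though. First, with the paper's convention that $\bar y$ corresponds to $y^*$, the opposite-algebra action must be $\bar y\mapsto J\pi_\psi(y)J$, not $J\pi_\psi(y)^*J$; your formula sends $x\otimes\bar x$ to $\pi_\psi(x)J\pi_\psi(x)^*J$, for which the cone-preservation fact does not apply. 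Second, the support-projection bookkeeping you single out as the main obstacle is actually a non-issue: factoriality gives $\cstar{M}\vee\cstar{M}'=B(L^2(\cstar{M}))$, so the representation of $\cstar{A}$ is irreducible and \emph{any} nonzero vector --- in particular the cone representative of a non-faithful $\psi$ --- is automatically cyclic for $\cstar{A}$, yields a pure state, and gives an exact GNS triple; no reduction to a corner is needed.

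The genuine gap is in step (5), uniqueness. To invoke uniqueness of standard forms you must first establish that the GNS data $(\pi',\hilb{H}',\Omega')$ of an \emph{arbitrary} exact $j$-positive purification $\omega'$ is itself a standard form of $\cstar{N}=\pi'(\cstar{B})''$ with $\Omega'$ in the natural positive cone. This requires showing that $\Omega'$ is separating for $\cstar{N}$ and that the antiunitary $J'$ implementing $j$ is the modular conjugation of $(\cstar{N},\Omega')$. Neither follows formally from exactness, $j$-positivity and purity: for instance, $x\Omega'=0$ for $x\in\cstar{N}$ only forces $x$ to vanish on $\overline{\cstar{N}'\Omega'}$, which need not be all of $\hilb{H}'$. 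Supplying this implication is precisely the content of Woronowicz's two papers (the 1973 note exists to remove a hypothesis left over in 1972), so deferring to them is defensible and puts you at the same level of rigor as the paper itself --- but you locate the delicacy in the wrong place (the existence reduction rather than uniqueness), and as written step (5) is circular: it presupposes the standard-form structure on the competing purification, which is exactly what has to be proved.
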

We call the purification $\omega$ from Theorem \ref{thm:canonicalpurification} the {\it canonical purification} of $\psi$. The usefulness of this notion is clear from
\begin{theorem}[Theorem 1.2 \cite{woronowicz1972purification}] \label{thm:canonicalpurificationquasiequivalence}
Two factorial states $\psi_1, \psi_2$ on a $C^*$-algebra $\cstar{B}$ are quasi-equivalent if and only if their canonical purifications $\omega_1, \omega_2$ are unitarily equivalent. 
\end{theorem}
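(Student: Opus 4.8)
The plan is to translate both sides of the asserted equivalence into purely von Neumann algebraic data attached to the generated factor, and then play off the uniqueness half of Theorem~\ref{thm:canonicalpurification} against standard-form theory.

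First I would fix notation. For $i=1,2$ let $(\hilb{H}_i,\pi_i,\Omega_i)$ be the GNS triple of $\omega_i$ and put $\cstar{R}_i:=\pi_i(\cstar{B})''$, $\cstar{R}_i^{op}:=\pi_i(\cstar{B}^{op})''$. Since $\psi_i$ is factorial, $\cstar{R}_i$ is a factor; since $\omega_i$ is exact, $\cstar{R}_i^{op}=\cstar{R}_i'$; and since $\omega_i$ is pure, $\pi_i$ is irreducible. The key bookkeeping step is: compression to the cyclic subspace $\hilb{K}_i:=\overline{\pi_i(\cstar{B})\Omega_i}$ identifies $\pi_{\psi_i}(\cstar{B})''$ with $\cstar{R}_i$ compatibly with the $\cstar{B}$-action. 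Indeed $q_i:=[\hilb{K}_i]\in\cstar{R}_i'$; the map $\cstar{R}_i\to\cstar{R}_iq_i$, $a\mapsto aq_i$, is a normal surjective $*$-homomorphism whose kernel is a weak-$*$-closed two-sided ideal of the factor $\cstar{R}_i$, hence $0$ since $q_i\neq 0$; and $(\pi_i|_{\hilb{K}_i},\hilb{K}_i,\Omega_i)$ is the GNS triple of $\psi_i$ because $\lal\Omega_i,\pi_i(b)\Omega_i\ral=\omega_i(b\otimes\bar 1)=\psi_i(b)$. Consequently ``$\psi_1,\psi_2$ quasi-equivalent'' is literally ``$\exists$ normal $*$-isomorphism $\Theta:\cstar{R}_1\to\cstar{R}_2$ with $\Theta\pi_1(b)=\pi_2(b)$ for all $b\in\cstar{B}$'', while ``$\omega_1,\omega_2$ unitarily equivalent'' (which, as the $\omega_i$ are pure, means exactly that each is a vector state of the other's GNS representation) is ``$\exists$ unitary $U:\hilb{H}_1\to\hilb{H}_2$ with $U\pi_1(a)U^*=\pi_2(a)$ for all $a\in\cstar{A}$''.

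The direction ``unitarily equivalent $\Rightarrow$ quasi-equivalent'' is then immediate: restrict $\Ad_U$ to $\cstar{R}_1$. For the converse, let $J_i$ be the anti-unitary on $\hilb{H}_i$ implementing $j$ in the GNS representation, so $J_i^2=1$, $J_i\Omega_i=\Omega_i$ (using that $j$-positivity forces $j$-invariance of $\omega_i$), and $J_i\cstar{R}_iJ_i=\cstar{R}_i^{op}=\cstar{R}_i'$. Given $\Theta$, transport the normal state $\hat\psi_2:=\lal\Omega_2,\,\cdot\,\Omega_2\ral$ on $\cstar{R}_2$ to the normal state $\hat\psi_2\circ\Theta$ on $\cstar{R}_1$, which still restricts to $\psi_2$ on $\pi_1(\cstar{B})$. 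I claim there is a vector $\xi\in\hilb{H}_1$ in the positive cone $\mathcal{P}_1:=\overline{\{aJ_1aJ_1\Omega_1:a\in\cstar{R}_1\}}$ inducing $\hat\psi_2\circ\Theta$ on $\cstar{R}_1$. Granting this, the vector state $\omega:=\lal\xi,\pi_1(\,\cdot\,)\xi\ral$ on $\cstar{A}$ is pure with GNS representation $\pi_1$ (any nonzero vector in an irreducible representation is cyclic), purifies $\psi_2$ by construction, is exact since the von Neumann algebras generated by $\cstar{B}$ and $\cstar{B}^{op}$ in $\pi_1$ are $\cstar{R}_1$ and $\cstar{R}_1'$, and is $j$-positive since $\omega(x\otimes\bar x)=\lal\xi,aJ_1aJ_1\xi\ral\geq 0$ for $a=\pi_1(x\otimes\bar 1)$ because $\xi\in\mathcal{P}_1$ and $\mathcal{P}_1$ is self-dual and invariant under $a(\,\cdot\,)J_1aJ_1$. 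By the uniqueness in Theorem~\ref{thm:canonicalpurification}, $\omega=\omega_2$; hence $\omega_2$ is a vector state of $\pi_1$, so $\pi_{\omega_2}\cong\pi_1=\pi_{\omega_1}$ and $\omega_1,\omega_2$ are unitarily equivalent.

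The main obstacle — and the real content shared with Woronowicz's argument — is the claimed existence of $\xi\in\mathcal{P}_1$ representing a prescribed normal state of $\cstar{R}_1$, together with self-duality and $aJaJ$-invariance of $\mathcal{P}_1$. When $\psi_1$ is faithful, $\Omega_1$ is cyclic and separating for $\cstar{R}_1$, the inequality $\lal\Omega_1,aJ_1aJ_1\Omega_1\ral\geq 0$ identifies $J_1$ with the modular conjugation of $(\cstar{R}_1,\Omega_1)$ and $\mathcal{P}_1$ with Haagerup's natural cone, so every normal state of $\cstar{R}_1$ is represented by a unique vector of $\mathcal{P}_1$; equivalently, the whole converse can be packaged as ``the GNS representation of a canonical purification is the standard representation of the generated factor, and standard representations of isomorphic von Neumann algebras are unitarily equivalent via a unique cone- and $J$-preserving unitary''. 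The general factorial case, where $\Omega_i$ need be neither cyclic nor separating for $\cstar{R}_i$ but exactness still forces $\hilb{H}_i$ to be the standard Hilbert space of $\cstar{R}_i$, requires a reduction to the faithful case via support projections, or else a direct analysis of self-dual cones in $\hilb{H}_i$ as in \cite{woronowicz1972purification, woronowicz1973purification}. I expect this identification with the standard form to carry the bulk of the work; everything else is bookkeeping.
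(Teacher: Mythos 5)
First, a point of reference: the paper does not prove this statement at all --- it is imported verbatim as Theorem~1.2 of Woronowicz's purification paper, so there is no in-paper argument to compare against. Judged on its own terms, your reduction to standard-form theory is the right conceptual frame, and the direction ``unitarily equivalent $\Rightarrow$ quasi-equivalent'' together with the compression bookkeeping is fine. (One small repair: factoriality of $\cstar{R}_i$ should be derived from purity plus exactness, via $Z(\cstar{R}_i)=\cstar{R}_i\cap\cstar{R}_i'\subseteq(\cstar{R}_i\cup\cstar{R}_i')'=\pi_i(\cstar{A})'=\CCC 1$, rather than from factoriality of $\psi_i$; as written you use factoriality of $\cstar{R}_i$ to show the compression $a\mapsto aq_i$ is injective, and injectivity of that compression to transfer factoriality from $\pi_{\psi_i}(\cstar{B})''$, which is circular.)

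The genuine gap is in the converse, exactly where you flag it, and it is not merely a deferral to known results: the cone $\mathcal{P}_1=\overline{\{aJ_1aJ_1\Omega_1:a\in\cstar{R}_1\}}$ is in general \emph{not} the natural cone of $\cstar{R}_1$ when $\Omega_1$ fails to be separating, and the central claim --- that every normal state of $\cstar{R}_1$, in particular $\hat\psi_2\circ\Theta$, is induced by some $\xi\in\mathcal{P}_1$ --- is then false as stated. Concretely, take $\psi_1$ a pure normal state of $\cstar{R}_1\cong B(\hilb{K})$ and $\psi_2$ quasi-equivalent to it with density matrix of rank at least two: then $\hilb{H}_1\cong\hilb{K}\otimes\bar{\hilb{K}}$, $\Omega_1=\xi_1\otimes\bar{\xi}_1$, and $\{aJ_1aJ_1\Omega_1\}=\{a\xi_1\otimes\overline{a\xi_1}\}$ has norm closure consisting only of the rank-one positive vectors, which represent only the pure normal states of $\cstar{R}_1$. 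To proceed one must at least replace $\mathcal{P}_1$ by the closed convex cone it generates and then re-establish self-duality, stability under $a(\,\cdot\,)J_1aJ_1$, and surjectivity onto normal states --- or reduce to the faithful case via support projections. That is precisely the analytic content of Woronowicz's Theorems~1.1 and~1.2, and your proposal lands on it (``I expect this identification with the standard form to carry the bulk of the work'') without crossing it. So what you have is a correct road map whose hardest leg is asserted rather than traversed, and whose cone, as literally defined, does not support the assertion.
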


Suppose $\omega$ is an exact $j$-positive state on $\cstar{A}$ with the corresponding GNS representation $(\pi_{\omega},\hilb{H}_{\omega}, \Omega)$. We will say that it is {\it strictly $j$-positive} if for any non-zero $x \in \cstar{R}$ we have $\lal \Omega, x J x J \Omega \ral > 0$, where $J$ is the anti-unitary operator implementing $j$ in $\hilb{H}_{\omega}$, i.e. $\pi_{\omega}(j(a)) = J \pi_{\omega}(a) J$ for any $a \in \cstar{A}$ and $J \Omega = \Omega$. If $\omega$ is pure, it is equivalent to requiring that $\Omega$ is a cyclic separating vector for $\cstar{R}$ as evident from Tomita-Takesaki theory. In this case, reflection positive states correspond to vectors in the natural positive cone $\hilb{P}$ that is given by the closure of the set $\{x J x J \Omega\}_{x \in \cstar{R}}$. For any $\Psi \in \hilb{P}$ representing a reflection positive state, we have $\lal \Psi,\Omega \ral > 0$.

\subsection{Krein-Rutman theorem}

Let $\cstar{R} \subset B(\hilb{H})$ be a von Neumann algebra acting on a separable Hilbert space that admits a cyclic separating vector $\Omega \in \hilb{H}$. We let $J$ be the modular conjugation. We say that a positive trace-class operator $\rho$ on $\hilb{H}$ is reflection positive if $\Tr (\rho x J x J) \geq 0$ for any $x \in \cstar{R}$. We say that it is strictly reflection positive if $\Tr (\rho x J x J) > 0$ for any non-zero $x \in \cstar{R}$.

\begin{prop} \label{prop:perronfrobenius}
Let $\rho \in B(\hilb{H})$ be a density matrix operator (i.e., a positive trace-class operator with unit trace) which is strictly reflection positive. Then the spectral radius $r(\rho)$ of $\rho$ is an eigenvalue of $\rho$ with an eigenvector $\xi$ that is cyclic and separating for $\cstar{R}$. In particular, the projection to $\xi$ is strictly reflection positive.
\end{prop}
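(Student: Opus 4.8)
The plan is to adapt the classical Krein–Rutman / Perron–Frobenius argument to the self-dual cone structure provided by Tomita–Takesaki theory. First I would recall that, since $\cstar{R}$ admits the cyclic separating vector $\Omega$, the natural positive cone $\hilb{P} = \overline{\{x J x J\, \Omega : x \in \cstar{R}\}}$ is a self-dual closed cone in $\hilb{H}$: $\Psi \in \hilb{P}$ if and only if $\lal \Psi, \Phi \ral \geq 0$ for all $\Phi \in \hilb{P}$. Strict reflection positivity of $\rho$ says precisely that $\Tr(\rho\, x J x J) > 0$ for every nonzero $x \in \cstar{R}$; writing $\rho$ in its spectral decomposition $\rho = \sum_k \mu_k |\eta_k\ral\lal\eta_k|$ with $\mu_k > 0$, one checks that this condition forces $\rho$ (viewed as a quadratic form) to be strictly positive on the dense subcone $\{xJxJ\,\Omega\}$, hence that $\rho$ maps $\hilb{P}\setminus\{0\}$ into the interior of $\hilb{P}$ — the analogue of a strictly positive (irreducible, primitive) operator. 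Here I would use the self-duality: for $\Psi \in \hilb{P}\setminus\{0\}$ one has $\lal \Phi, \rho \Psi\ral = \lal \rho^{1/2}\Phi, \rho^{1/2}\Psi\ral$, and I need to argue this is strictly positive whenever $\Phi \in \hilb{P}\setminus\{0\}$; since the vectors of the form $xJxJ\,\Omega$ are total in $\hilb{P}$ and the form $(x,y)\mapsto \Tr(\rho\, xJxJ)$ has been assumed definite on $\cstar{R}$, a polarization/density argument upgrades this to strict positivity of $\lal\Phi,\rho\Psi\ral$ on all of $\hilb{P}\setminus\{0\}$, i.e. $\rho\Psi$ lies in the interior of $\hilb{P}$.

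With $\rho$ a positive compact operator that sends $\hilb{P}\setminus\{0\}$ into $\mathrm{int}\,\hilb{P}$, the Krein–Rutman theorem applies: the spectral radius $r(\rho)$ is a strictly positive eigenvalue, it is simple, and the corresponding eigenvector $\xi$ can be chosen in $\hilb{P}$, in fact in $\mathrm{int}\,\hilb{P}$ (since $\rho\xi = r(\rho)\xi$ and $r(\rho)>0$, so $\xi = r(\rho)^{-1}\rho\xi \in \mathrm{int}\,\hilb{P}$). Alternatively, since $\rho$ is already self-adjoint and positive, $r(\rho) = \|\rho\|$ is automatically the top eigenvalue with a trace-class operator, so the only real content is that the top eigenvector lies in the interior of the cone and is simple, which is exactly what strict positivity on the cone gives via a standard maximum-principle argument (if $\rho\xi = \|\rho\|\xi$ and we write $\xi = \xi_+ - \xi_-$ in the lattice structure, then $\lal\xi_+,\rho\xi_-\ral > 0$ unless one of them vanishes, forcing $\xi \in \pm\hilb{P}$; simplicity then follows because two orthogonal eigenvectors cannot both lie in a self-dual cone — a pair in $\hilb{P}$ with zero overlap is impossible unless one is zero).

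Finally I would translate "$\xi \in \mathrm{int}\,\hilb{P}$'' back into the statement that $\xi$ is cyclic and separating for $\cstar{R}$: this is a standard fact about the natural cone — vectors in the interior of $\hilb{P}$ are exactly the cyclic and separating vectors inducing faithful normal states (see Tomita–Takesaki theory; Bratteli–Robinson, or Haag's book). Given cyclicity and separability of $\xi$, the rank-one projection $|\xi\ral\lal\xi|$ has, with respect to the modular conjugation of $\cstar{R}$ associated to $\xi$ — which agrees with $J$ up to the identification of natural cones — the property $\Tr(|\xi\ral\lal\xi|\, xJxJ) = \lal\xi, xJxJ\,\xi\ral = \|xJ\xi\|^2$... more carefully, using $J\xi=\xi$ and $J\cstar{R}J = \cstar{R}'$, one gets $\lal \xi, x J x J \xi\ral = \lal x^* \xi, J x J \xi \ral = \lal x^*\xi, J x \xi\ral$, which is positive because $\xi \in \hilb{P}$ and $x^*\xi$, up to the self-duality, pairs positively with the cone; the strictness (nonzero $x$ gives a strictly positive value) is exactly the separating property of $\xi$. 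The main obstacle I anticipate is the second paragraph: carefully justifying that strict reflection positivity of the quadratic form on $\cstar{R}$ genuinely forces $\rho$ to map the \emph{whole} cone $\hilb{P}$ (not just the dense subcone of vectors $xJxJ\,\Omega$) into its interior, since $\hilb{P}$ is typically infinite-dimensional and closed but not locally compact, so one must be careful with the density/closure argument and with the fact that "interior'' of a cone in an infinite-dimensional Hilbert space can be empty — one should work instead with the notion of a quasi-interior point or an order unit, or exploit compactness of $\rho$ to reduce to a finite-dimensional spectral subspace where the classical Perron–Frobenius argument is literally available.
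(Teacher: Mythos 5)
Your overall strategy --- running Krein--Rutman on the natural self-dual cone $\hilb{P}$ --- is the same as the paper's, and you correctly identify the two danger points; but you do not resolve them, and as written both are genuine gaps. First, the passage from strict positivity of $\Tr(\rho\, xJxJ)$ on the dense subcone $\{xJxJ\,\Omega\}_{x\in\cstar{R}}$ to strict positivity of $\lal\Phi,\rho\Psi\ral$ for all nonzero $\Phi,\Psi\in\hilb{P}$ cannot be achieved by a ``polarization/density argument'': strict inequalities do not survive passage to limits, so density only yields $\geq 0$. The paper obtains the strict statement by purifying $\rho$ to a vector $\Psi=\sum_i\sqrt{\lambda_i}\,\xi_i\otimes\bar{\xi}_i\in\hilb{H}\otimes\bar{\hilb{H}}$, noting that strict reflection positivity makes $\Psi$ cyclic and separating for the algebra generated by $\cstar{R}$ and $\bar{\cstar{R}}$, and then using that such a vector has strictly positive overlap with every nonzero element $v\otimes\bar{w}$ (with $v,w\in\hilb{P}$) of the associated natural cone.

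Second, and more importantly, the version of Krein--Rutman you invoke (eigenvector in the interior, simplicity) requires the cone to have nonempty interior, which $\hilb{P}$ does not have in infinite dimensions --- you flag this yourself but leave it unrepaired, and your suggested reduction to a finite-dimensional spectral subspace does not obviously interact with the cone (spectral projections of $\rho$ need not preserve $\hilb{P}$). The paper's fix is a simple reordering of the argument: it applies only the \emph{weak} Krein--Rutman theorem (which needs just $\rho\hilb{P}\subseteq\hilb{P}$, totality of $\hilb{P}$ in $\hilb{H}_{\mathrm{sa}}$, compactness of $\rho$, and $r(\rho)>0$) to produce an eigenvector $v\in\hilb{P}$ with eigenvalue $r(\rho)$, and only \emph{then} uses strict positivity a single time, via $\lal w,v\ral=\lal w,\rho v\ral/r(\rho)>0$ for all nonzero $w\in\hilb{P}$, to conclude that $v$ is a quasi-interior point of the self-dual cone --- which for the natural cone is exactly the statement that $v$ is cyclic and separating. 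Your closing computation of $\lal\xi,xJxJ\xi\ral$ is also more involved than needed: once $\xi\in\hilb{P}$ pairs strictly positively with $\hilb{P}\setminus\{0\}$, strict reflection positivity of the projection onto $\xi$ is immediate, since $xJxJ\,\xi$ is a nonzero element of $\hilb{P}$ whenever $x\neq 0$.
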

\begin{proof}
The proof is an application of the weak version of the Krein-Rutman theorem (e.g., see Theorem 19.2 in \cite{deimling2013nonlinear}).

Let $\hilb{H}_{\text{sa}} \subset \hilb{H}$ be the real Hilbert space of vectors $v\in\hilb{H}$ satisfying $J v = v$. We denote the natural positive cone for $\cstar{R}$ by $\mathcal{P}$. The cone $\mathcal{P}$ is total $\hilb{H}_{\text{sa}} = \mathcal{P}-\mathcal{P}$. 

Let $\bar{\hilb{H}}$ be the complex conjugate Hilbert space and let $\bar{\cstar{R}}$ be the corresponding to $\cstar{R}$ von Neumann algebra acting on it. For a Schmidt decomposition $\rho = \sum_{i} \lambda_i \xi_i \otimes \xi_i^*$, we have $\Tr(\rho a) = \lal \Psi, (a \otimes 1) \Psi\ral$, $a \in B(\hilb{H})$ where $\Psi = \sum_{i} \sqrt{\lambda_i} \xi_i \otimes \bar{\xi}_i \in \hilb{H} \otimes \bar{\hilb{H}}$. The vector $\Psi$ is cyclic and separating for the algebra generated by $\cstar{R}$ and $\bar{\cstar{R}}$. Hence, $\lal v \otimes \bar{w}, \Psi \ral = \lal v, \rho w \ral > 0$ for any $v,w \in \hilb{P}$. It follows that $\rho \hilb{P} \subset \hilb{P}$ and therefore, by Krein-Rutman theorem, $\rho$ has an eigenvector $v$ with the eigenvalue $r(\rho)$. Since for any $w \in \hilb{P}$ we have $\lal w, v \ral = \lal w, \rho v \ral/r(\rho) >0$, the vector $v$ is cyclic and separating for $\cstar{R}$.
\end{proof}

\subsection{Reflection positive observables}

In the following, we only consider the case when $\cstar{B} = \bigotimes_{i=1}^{\infty} M_{n_i}(\CCC)$, $n_i \in \NN$ is a uniformly hyperfinite algebra. In this case, we can identify $\cstar{B}^{op}$ with $\cstar{B}$ and encode this identification into the data of the anti-linear automorphism $j$ of $\cstar{A} = \cstar{B} \otimes \cstar{B}$. In particular, we can let $j$ act via $j(a \otimes b) = \bar{b} \otimes \bar{a}$, where $\bar{a}, \bar{b} \in \cstar{B}$ are complex conjugate observables. We also often say {\it (strictly) reflection positive} instead of (strictly) $j$-positive when the anti-linear automorphism $j$ is clear from the context. We denote the inner product on $\cstar{A}$ induced by the unique tracial state $\tau$ on $\cstar{A}$ by $\lal a,b \ral_{\tau} := \tau(a^* b)$.

Following \cite{jaffe2017reflection}, we introduce the cone $\cstar{K}_+$ which is the norm closure of the positive linear span of observables $\{x \otimes \bar{x}\}_{x \in \cstar{B}}$. Elements of $\cstar{K}_+$ are called {\it reflection positive observables}. The element $a \in \cstar{K}_+$ is called {\it strictly reflection positive} if $\tau(a^* b) > 0$ for any non-zero $b \in \cstar{K}_+$. Note that if $a,b \in \cstar{K}_+$, then $a^*,b^*,ab \in \cstar{K}_+$. Also note that if $a \in \cstar{K}_+$ is strictly reflection positive, then so are $a^*$ and $ab$ for any $b \in \cstar{K}_+$.

Suppose $a \in \cstar{K}_+$ is strictly reflection positive. Since $\tau(a^* (x \otimes \bar{x})))>0$ for $x \in \cstar{B}$, there exists $\lambda > 0$, such that $a - \lambda (x \otimes \bar{x})$ is reflection positive. In particular, for some $\lambda > 0$, $(a-\lambda)$ is reflection positive, that implies $\omega(a)>0$ for any reflection positive state $\omega$ on $\cstar{A}$.

\begin{lemma} \label{lma:perturbationisstrictlyRP}
Let $\omega$ be a pure exact reflection positive state on $\cstar{A}$, and let $a \in \cstar{K}_+$ be a strictly reflection positive observable. Then the state $\omega_a$ defined by $\omega_a(b) := \omega(a^* b a)/\omega(a^* a)$, $b \in \cstar{A}$ is strictly reflection positive.
\end{lemma}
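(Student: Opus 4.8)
The plan is to check that $\omega_a$ is a well-defined, pure, exact, reflection positive state whose GNS data coincides with that of $\omega$, and then to reduce the assertion to a single substantive point: that the perturbed GNS vector is cyclic and separating for the associated von Neumann algebra.

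\emph{Step 1: well-definedness, purity, exactness.} Since $a$ is strictly reflection positive, so is $a^{*}$, hence so is $a^{*}a$; in particular $a^{*}a-\lambda$ is reflection positive for some $\lambda>0$, so $\omega(a^{*}a)\geq\lambda>0$ and $\omega_a$ is defined. Writing $(\pi_{\omega},\hilb{H}_{\omega},\Omega)$ for the GNS triple of $\omega$, we have $\omega_a(b)=\lal\Psi,\pi_{\omega}(b)\Psi\ral$ with $\Psi:=\pi_{\omega}(a)\Omega/\|\pi_{\omega}(a)\Omega\|$. Because $\omega$ is pure, $\pi_{\omega}$ is irreducible, so every nonzero vector — in particular $\Psi$ — is cyclic for $\pi_{\omega}(\cstar{A})$; hence $\omega_a$ is pure and $(\pi_{\omega},\Psi)$ is its GNS representation. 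Consequently the von Neumann algebras attached to $\omega_a$ are the same $\cstar{R}=\pi_{\omega}(\cstar{B}\otimes 1)''$ and $\cstar{R}^{op}=\pi_{\omega}(1\otimes\cstar{B})''$, and since $\omega$ is exact ($\cstar{R}'=\cstar{R}^{op}$), so is $\omega_a$.

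\emph{Step 2: $j$ fixes $a$, and the reduction.} The anti-linear automorphism $j$ satisfies $j(x\otimes\bar{x})=x\otimes\bar{x}$ for every $x\in\cstar{B}$, and since the coefficients in the positive span defining $\cstar{K}_+$ are non-negative reals, anti-linearity of $j$ is harmless, so $j$ fixes $\cstar{K}_+$ pointwise; thus $j(a)=a$. Hence $J\Psi=\pi_{\omega}(j(a))\Omega/\|\pi_{\omega}(a)\Omega\|=\Psi$, where $J$ is the anti-unitary implementing $j$ in $\pi_{\omega}$; being an anti-unitary implementing $j$ and fixing $\Psi$, this same $J$ is the one attached to the GNS representation of $\omega_a$ (uniqueness is forced by irreducibility). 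Moreover $\omega_a$ is reflection positive, since $\omega_a(x\otimes\bar{x})=\omega\bigl(a^{*}(x\otimes\bar{x})a\bigr)/\omega(a^{*}a)\geq 0$ because $a^{*}(x\otimes\bar{x})a\in\cstar{K}_+$. By the characterization recalled before the lemma (for a pure state, strict $j$-positivity is equivalent to the GNS vector being cyclic and separating for $\cstar{R}$), the claim thus reduces to showing that $\Psi$ is cyclic and separating for $\cstar{R}$.

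\emph{Step 3: the crux.} Since $J\Psi=\Psi$ and $J\cstar{R}J=\cstar{R}'$, the conditions ``$\Psi$ cyclic (resp.\ separating) for $\cstar{R}$ (resp.\ $\cstar{R}'$)'' are all equivalent, and each is equivalent to faithfulness of the restriction $\psi_a:=\omega_a|_{\cstar{B}\otimes 1}$, which is a factorial state (restriction of a pure state). Now $\omega_a$ is a pure, exact, $j$-positive purification of $\psi_a$, hence by Theorem \ref{thm:canonicalpurification} it is the canonical purification of $\psi_a$; and a canonical purification is strictly $j$-positive precisely when the purified state is faithful — in the finite-dimensional model this is the statement that $\sum_i\sqrt{\lambda_i}\,\xi_i\otimes\bar{\xi}_i$ is cyclic and separating exactly when all $\lambda_i>0$. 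So the entire lemma comes down to: \textbf{$\psi_a$ is faithful.} This is the step where strict reflection positivity of $a$ is used decisively: $a$ plays the role of an interior (``full-rank'') element of the cone $\cstar{K}_+$ — in the finite-dimensional picture, strict reflection positivity of $a$ says exactly that the completely positive map associated to $a$ has positive-definite Choi matrix, hence sends every nonzero positive operator to a positive-definite one, so that the density matrix $\propto\Phi_a(\rho_{\Omega})^{2}$ of $\psi_a$ is invertible. In the $C^{*}$-setting I would extract faithfulness as follows: strict reflection positivity of $a$ gives, for each $x\in\cstar{B}$, some $\lambda_x>0$ with $a-\lambda_x(x\otimes\bar{x})\in\cstar{K}_+$, so that $\Psi=\pi_{\omega}(a)\Omega$ dominates $\lambda_x\,\pi_{\omega}(x\otimes\bar{x})\Omega$ inside the natural positive cone of the standard form of $\pi_{\omega}$ (which contains $\Omega$); this domination, holding for every $x$, is incompatible with the existence of a nonzero projection $g\in\cstar{R}$ with $g\Psi=0$. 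I expect this promotion of the ``interior'' nature of $a$ to faithfulness of $\psi_a$ — to be made rigorous using the Tomita--Takesaki natural cone and Woronowicz's description of canonical purifications — to be the main obstacle; Steps 1 and 2 are bookkeeping.
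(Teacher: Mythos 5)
Your Steps 1 and 2 match the paper's proof: well-definedness from strict reflection positivity of $a^*a$, purity and exactness because $\omega_a$ is a vector state in the irreducible GNS representation of $\omega$, reflection positivity because $a^*(x\otimes\bar x)a\in\cstar{K}_+$, and the reduction (via $j(a)=a$, $J\Psi=\Psi$) to showing that $\Psi=\pi_\omega(a)\Omega/\|\pi_\omega(a)\Omega\|$ is cyclic and separating for $\cstar{R}$. You have also located the correct mechanism for the crux --- for every $x$ there is $\lambda_x>0$ with $a-\lambda_x(x\otimes\bar x)\in\cstar{K}_+$, combined with density of the linear span of $\{\pi_\omega(x\otimes\bar x)\Omega\}$ --- which is exactly what the paper uses. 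But you stop there: the sentence ``this domination \dots is incompatible with the existence of a nonzero projection $g\in\cstar{R}$ with $g\Psi=0$'' is asserted, not proved, and you yourself flag it as the expected main obstacle. That assertion is the entire content of the lemma, so as written the proof has a genuine gap.

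To close it you need two further ingredients, both implicit in the paper's short argument. First, cone monotonicity: $c\mapsto\pi_\omega(c)\Omega$ sends $\cstar{K}_+$ into the natural positive cone $\hilb{P}$ of $(\cstar{R},\Omega)$, the operator $yJyJ$ (for $y\in\cstar{R}$) preserves $\hilb{P}$, and $\hilb{P}$ is self-dual; hence in the expansion of $\lal\pi_\omega(a)\Omega,\,yJyJ\,\pi_\omega(a)\Omega\ral$ along $a=\lambda b+(a-\lambda b)$ with $b=x\otimes\bar x$ every term is nonnegative, so the vanishing of the left-hand side forces $\lal\pi_\omega(b)\Omega,\,yJyJ\,\pi_\omega(b)\Omega\ral=0$ for every such $b$, and running the expansion pairwise kills the off-diagonal terms as well; density then gives $yJyJ=0$. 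Second, factoriality: $yJyJ=0$ alone does not give $y=0$ (two commuting operators, e.g.\ projections with orthogonal ranges, can have vanishing product); you must use that $\cstar{R}$ is a factor --- which follows from purity plus exactness, since $\cstar{R}\cap\cstar{R}'=\cstar{R}\cap\cstar{R}^{op}$ lies in the center of $\pi_\omega(\cstar{A})''=B(\hilb{H}_\omega)$ --- so that the product of a nonzero element of $\cstar{R}$ with a nonzero element of $\cstar{R}'$ cannot vanish. With these two points supplied your argument coincides with the paper's; without them it does not close. (Your detour through canonical purification and faithfulness of $\psi_a$ is correct but unnecessary: the paper verifies the strict positivity $\lal\Psi,yJyJ\Psi\ral>0$ directly for $y\in\cstar{R}$.)
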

\begin{proof}
The observable $a^* a$ is strictly reflection positive. Hence, $\omega(a^* a) >0$  and the state $\omega_a$ is well-defined.

Let us choose a GNS representation $(\pi_{\omega}, \hilb{H}_{\omega},\Omega)$ for $\omega$. The state $\omega_a$ is a vector state in this representation, and therefore is pure. Since $a^* b a$ is reflection positive for any reflection positive $b$, $\omega_a$ is reflection positive. Since $\omega_a$ is unitarily equivalent to an exact state, it is also exact.

Suppose there exists $y \in \cstar{R}$, $\cstar{R}=\pi_{\omega}(\cstar{B})''$ such that $\lal \Omega, \pi_{\omega}(a^*) y J y J \pi_{\omega}(a) \Omega \ral = 0$. Since for any $b = x \otimes \bar{x}$, $x \in \cstar{B}$ we can find $\lambda>0$, such that $a-\lambda b$ is reflection positive, we have $\lal \Omega, \pi_{\omega}(b^*) y J y J \pi_{\omega}(b) \Omega \ral = 0$. Since the linear span of vectors $\{\pi_{\omega}(x \otimes \bar{x}) \Omega\}_{x \in \cstar{B}}$ is dense in $\hilb{H}_{\omega}$, we have $y=0$.

\end{proof}

\subsection{Rotationally invariant reflection positive states}

Let $\cstar{B} = \bigotimes_{i=1}^{\infty} M_{n_i}$, $n_i \in \NN$ be a uniformly hyperfinite algebra, and let $\cstar{A} = \cstar{B}^{\otimes 2N}$, $N \in \NN$. We let $\ups$ be the "shift by two sites" automorphism $\ups(a_1 \otimes a_2 \otimes ... \otimes a_{2N}) = a_{2N-1} \otimes a_{2N} \otimes a_1 \otimes ... \otimes a_{2N-2}$ of $\cstar{A}$ and $j$ be a reflection anti-linear automorphism $j(a_1 \otimes a_2 \otimes ... \otimes a_{2N}) = \bar{a}_{2N} \otimes \bar{a}_{2N-1} \otimes ... \otimes \bar{a}_{1}$.

Let us show that strictly positive $\ups$-invariant observables exist.
\begin{lemma} \label{lma:IsingChain}
Let $\cstar{B} \cong M_{2}(\CCC)$ and let $X_i, Y_i, Z_i \in \cstar{A}$ be Pauli operators acting on the $i$-th factor in $\cstar{B}^{\otimes 2N} = \cstar{A}$. We set $X_{N+1} := X_1$, $Z_{N+1} := Z_1$. The projector to the ground state vector (i.e., the eigenvector with the lowest eigenvalue) of a 1d transverse field Ising model Hamiltonian $H = - \sum_{i=1}^{2N} X_i X_{i+1} - \sum_{i=1}^{2N} Z_i$ defines a strictly reflection positive element of $\cstar{A}$ .
\end{lemma}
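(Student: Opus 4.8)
\subsection*{Proof proposal}

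The plan is to run the classical reflection‑positivity argument for the quantum transverse‑field Ising chain, adapted to the link reflection implemented by $j$, and then upgrade to strictness by a Perron--Frobenius / Tomita argument. Identify $\cstar{A}=\cstar{B}^{\otimes 2N}$ with $\cstar{B}_L\otimes\cstar{B}_R$, where $\cstar{B}_L$ comprises the factors on sites $1,\dots,N$ and $\cstar{B}_R$ those on $N{+}1,\dots,2N$; the anti‑unitary $J$ implementing $j$ on $\hilb{H}=(\CCC^2)^{\otimes 2N}$ is complex conjugation in the $Z$‑basis followed by the permutation reversing the $2N$ tensor factors, so $J\cstar{B}_L J=\cstar{B}_R$ and $J|s\rangle=|s^{R}\rangle$ on $X$‑basis (and $Z$‑basis) product states. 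The only bonds of $H$ crossing the reflection axis are $X_NX_{N+1}$ and $X_{2N}X_1$, so with $H_L:=-\sum_{i=1}^{N-1}X_iX_{i+1}-\sum_{i=1}^{N}Z_i\in\cstar{B}_L$ one has
$$H=H_L+j(H_L)-X_N\,j(X_N)-X_1\,j(X_1),$$
since $j(X_N)=X_{N+1}$ and $j(X_1)=X_{2N}$; in particular $j(H)=H$, so $J$ commutes with $H$.

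The first step is Perron--Frobenius. In the simultaneous eigenbasis of $X_1,\dots,X_{2N}$ the operator $-H$ is a real diagonal matrix (from $\sum_i X_iX_{i+1}$) plus the matrix of $\sum_i Z_i$, which is entrywise non‑negative and irreducible (its adjacency graph is the connected hypercube on $\{0,1\}^{2N}$); hence $e^{-tH}$ has strictly positive entries in this basis for every $t>0$, its largest eigenvalue $e^{-tE_0}$ is simple, and the ground state $\Omega$ has strictly positive coordinates. Since $\prod_i Z_i$ commutes with $H$, permutes the $X$‑basis states and must send $\Omega$ to $\pm\Omega$ by simplicity, positivity forces $(\prod_i Z_i)\Omega=\Omega$; the same positivity argument gives $J\Omega=\Omega$. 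Next, for reflection positivity of $P:=|\Omega\rangle\langle\Omega|$ I would use the Trotter product formula
$$e^{-tH}=\lim_{n\to\infty}\Bigl(e^{-\frac tn H_L}\,j\!\bigl(e^{-\frac tn H_L}\bigr)\;e^{\frac tn X_N j(X_N)}\;e^{\frac tn X_1 j(X_1)}\Bigr)^{n}.$$
Each factor lies in $\cstar{K}_+$: the first equals $y\cdot j(y)$ with $y=e^{-\frac tn H_L}\in\cstar{B}_L$, and since $(X_i j(X_i))^2=1$ one has $e^{sX_i j(X_i)}=\cosh(s)\,1+\sinh(s)\,X_i\,j(X_i)$, a positive combination of $1=1\cdot j(1)$ and $X_i\cdot j(X_i)$ for $s>0$. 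As $\cstar{K}_+$ is a norm‑closed cone closed under products, $e^{-tH}\in\cstar{K}_+$; letting $t\to\infty$ gives $P=\lim_t e^{tE_0}e^{-tH}\in\cstar{K}_+$, so $P$ is a reflection‑positive observable, and it is $\ups$‑invariant because $\ups$ commutes with $H$.

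It remains to prove strictness, i.e.\ $\tau(P^*b)>0$ for every nonzero $b\in\cstar{K}_+$. Using $\tau(P^*b)=2^{-2N}\langle\Omega,b\Omega\rangle$ and $J\Omega=\Omega$, for $b=\lim_\alpha\sum_k y_{\alpha,k}\otimes\bar y_{\alpha,k}\in\cstar{K}_+$ one gets $\langle\Omega,b\Omega\rangle=\lim_\alpha\sum_k\|\Delta^{1/4}y_{\alpha,k}^{*}\Omega\|^{2}$ provided $\Omega$ is a cyclic and separating vector for $\cstar{B}_L$ (Tomita--Takesaki, $\Delta$ the modular operator); since in finite dimensions $\|\Delta^{1/4}y^*\Omega\|^2\ge c\|y\|^2$ on $\cstar{B}_L$ and $\|\sum_k y_{\alpha,k}\otimes\bar y_{\alpha,k}\|\le\sum_k\|y_{\alpha,k}\|^2$, this forces $\langle\Omega,b\Omega\rangle\ge c\|b\|>0$. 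So everything reduces to: \emph{$\Omega$ is cyclic and separating for $\cstar{B}_L$}, equivalently the ground state has full Schmidt rank across the cut. I would obtain this from the Jordan--Wigner solution: in the $\prod_iZ_i=+1$ sector $H$ is a quadratic fermion Hamiltonian with antiperiodic boundary conditions, single‑particle energies $\epsilon(k)=4|\sin(k/2)|$ over the antiperiodic momenta (none equal to $0$), hence a unique gapped ground state whose restriction to $L$ is a Gaussian density matrix with covariance spectrum strictly inside $(0,1)$, so $\rho_L=\mathrm{Tr}_R P$ is invertible. (Alternatively, show directly that $e^{-tH}$ is strictly reflection positive --- equivalently that the positive‑semidefinite form $(x,y)\mapsto\Tr(e^{-tH}\,x\otimes\bar y)$ on $\cstar{B}_L$ is non‑degenerate --- and apply Proposition~\ref{prop:perronfrobenius} to $\rho=e^{-tH}/\Tr(e^{-tH})$; this non‑degeneracy is the same input, and it fails without the $\sum_iZ_i$ term, since then the Trotter expansion of $e^{-tH}$ only ever produces operators commuting with $\prod_iZ_i$.) This full‑rank / non‑degeneracy statement is the one nontrivial point; the reflection‑positivity part above is purely formal.
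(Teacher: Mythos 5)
Your overall architecture matches the paper's: reflection positivity of the ground-state projector via the decomposition $H=H_L+j(H_L)-X_Nj(X_N)-X_1j(X_1)$ (the paper gets this by citing Proposition~\ref{prop:jaffeRP} and Proposition~\ref{prop:PerronFrob1dspinchains} rather than by a Trotter argument, but your version is fine), and then a reduction of \emph{strict} reflection positivity to faithfulness of the restriction of the ground state to the half-chain $\cstar{B}^{\otimes N}$. The problem is that you never prove this last faithfulness statement, and it is the entire mathematical content of the lemma. You assert that the Jordan--Wigner ground state is Gaussian ``with covariance spectrum strictly inside $(0,1)$, so $\rho_L$ is invertible,'' and you yourself flag this as ``the one nontrivial point.'' It does not follow from the spectral gap or from uniqueness of the ground state: a gapped quasi-free pure state can perfectly well restrict to a non-faithful state on a subsystem (e.g.\ decoupled dimers confined to each half give restricted covariance eigenvalues at $\pm 1$), so some property of the actual correlation matrix must be used. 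Nor can you fall back on your Perron--Frobenius step: strict positivity of all components of $\Omega$ in the $X$-basis does not imply full Schmidt rank across the cut (a product state has all-positive coefficients and Schmidt rank one).

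The paper's proof spends essentially all of its effort on exactly this point. It computes the Majorana two-point function $\psi(c_jc_k)=\delta_{j,k}+B_{j,k}$ explicitly for the critical chain, forms the ``reflected'' correlation matrix $A_{j,k}=-iB_{j,4N+1-k}$, and proves $A>0$ by rewriting $\tfrac{1}{t_j+t_k}$ as $\int_0^\infty e^{-u(t_j+t_k)}\,du$ (a Cauchy-matrix positivity argument), from which strict reflection positivity and faithfulness of the half-chain restriction follow. To complete your proposal you would need to supply an argument of comparable substance — either this explicit computation, or a genuine proof that the restricted covariance matrix of this particular ground state has no eigenvalues at $\pm 1$, or a direct proof that the form $(x,y)\mapsto\Tr(e^{-tH}\,x\otimes\bar y)$ on $\cstar{B}_L$ is non-degenerate. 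As written, the proposal establishes reflection positivity but not strictness.
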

\begin{proof}
The fact that the state is reflection positive follows from Proposition \ref{prop:jaffeRP} and Proposition \ref{prop:PerronFrob1dspinchains}. To show strict reflection positivity, it is enough to show the faithfulness of the restriction of the state to a half-chain $\cstar{B}^{\otimes N}$.

We can use the Jordan-Wigner transformation to relate the ground state of $H$ to the ground state $\psi$ of the fermionic system with the Hamiltonian $H_{Majorana} = i \sum_{i=1}^{4 N-1} c_i c_{i+1} - i c_{4 N} c_{1}$, where $c_{2j-1} = Z_1 ... Z_{j-1} X_j$, $c_{2j} = Z_1 ... Z_{j-1} Y_j$, $j = 1,...,2N$ are Majorana operators generating the corresponding CAR algebra with relations $\{c_j, c_k\} = 2\delta_{j,k}$, $j,k = 1,...,4N$. Since the Hamiltonian is quadratic in fermionic operators $\{c_j\}$, the ground state is quasi-free and we can explicitly compute $\psi(c_j c_k) = \delta_{j,k} + B_{j,k}$, where
$$
B_{j,k} = i \frac{1-(-1)^{k-j}}{4N \sin \l \pi \frac{k-j}{4 N} \r}
$$
for $j \neq k$ and $B_{j,j} = 0$. Let $A_{j,k} = -i B_{j,4N+1-k}$, $j,k=1,...,2N$, and let $f:\{1,...,2N\} \to \CCC$ be an arbitrary non-zero function. We have
\begin{multline*}
\sum_{j=1}^{2N} \sum_{k=1}^{2N} \bar{f_j} A_{j,k} f_k = \sum_{j=1}^{2N} \sum_{k=1}^{2N} \bar{f_j} f_k \frac{1+(-1)^{j+k}}{4 N \sin\l \frac{\theta_j+\theta_k}{2} \r} = \sum_{j=1}^{2N} \sum_{k=1}^{2N} \bar{g_j} g_k \frac{1+(-1)^{j+k}}{4 N (t_j + t_k)} = \\ = \frac{1}{4N} \int_{0}^{\infty} d u \sum_{j=1}^{2N} \sum_{k=1}^{2N} (1+(-1)^{j+k}) \overline{g_j e^{-u t_j}} g_k e^{-u t_k} = \\ = \frac{1}{4N} \int_{0}^{\infty} d u \l \left| \sum_{l=1}^{N} g_{2l-1} e^{-u t_{2l-1}} \right|^2 + \left| \sum_{l=1}^{N} g_{2l} e^{-u t_{2l}} \right|^2 \r > 0,
\end{multline*}
where $\theta_j = \pi(j-1/2)/2N$, $t_j = \tan(\theta_j/2)$, $g_j = f_j \sqrt{1+t_j^2}$. It follows that the matrix $A$ is positive definite. Therefore, $\psi$ is strictly reflection positive and the restriction of the state $\psi$ to the half-chain is faithful.
\end{proof}

\begin{lemma} \label{lma:existenceofSRPZNinvObservable}
There exists a strictly reflection positive $\ups$-invariant observable in $\cstar{A}$.
\end{lemma}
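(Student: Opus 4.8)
The plan is to reduce the statement to a single finite-dimensional ``column'', to handle that case by imitating Lemma~\ref{lma:IsingChain}, and then to reassemble the infinitely many columns into a norm-convergent infinite tensor product.

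\emph{Reduction to a finite column.} Writing $\cstar{B} = \bigotimes_{k \geq 1} M_{n_k}(\CCC)$, I would regroup $\cstar{A} = \cstar{B}^{\otimes 2N} \cong \bigotimes_{k \geq 1}\bigl( M_{n_k}(\CCC)^{\otimes 2N}\bigr)$, under which $\ups = \bigotimes_{k}\ups^{(k)}$ and $j = \bigotimes_{k} j^{(k)}$ with $\ups^{(k)}, j^{(k)}$ the shift-by-two and the reflection on the finite algebra $\cstar{A}^{(k)} := M_{n_k}(\CCC)^{\otimes 2N}$; a computation with generators shows that the cone $\cstar{K}_+$ of $\cstar{A}$ contains every norm limit of products $\prod_k a^{(k)}$ with $a^{(k)} \in \cstar{K}_+^{(k)}$ the cone of the $k$-th column. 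Thus it suffices to produce, for each $k$, a $\ups^{(k)}$- and $j^{(k)}$-invariant strictly reflection positive observable $P_k \in \cstar{A}^{(k)}$ with $\|P_k\| \leq 1$ and $\|P_k - 1\| \leq 1$.

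\emph{A single finite column.} For fixed $m$ this is a mild generalization of Lemma~\ref{lma:IsingChain}: I would take on the $2N$-site ring $M_m(\CCC)^{\otimes 2N}$ a translation- and reflection-invariant Hamiltonian $H_m = -\sum_{i,a} S_a^{(i)} S_a^{(i+1)} - g \sum_i D^{(i)}$ with $S_a = S_a^{*} = \overline{S_a}$ and $D = D^{*} = \overline{D}$, at a critical point (for $m = 2$ this is the critical transverse field Ising chain of Lemma~\ref{lma:IsingChain}; in general the self-dual quantum Potts chain will do). Its couplings across the two cuts of $j^{(k)}$ have the reflection-positivity-preserving form $-\sum_a S_a \overline{S_a}$, so by Proposition~\ref{prop:jaffeRP} $e^{-\beta H_m}$ is reflection positive; since $\cstar{A}^{(k)}$ is finite-dimensional, letting $\beta \to \infty$ yields the projector $P_m$ onto the ground state, which is therefore reflection positive, is $j^{(k)}$-invariant (the Hamiltonian is real symmetric) and is $\ups^{(k)}$-invariant (the Hamiltonian is translation invariant with nondegenerate ground state). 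Strict reflection positivity of $P_m$ reduces, exactly as in Lemma~\ref{lma:IsingChain}, to faithfulness of the ground state on the half-ring $M_m(\CCC)^{\otimes N}$ (equivalently, by the Perron--Frobenius argument of Proposition~\ref{prop:perronfrobenius}, to the top eigenvector of $e^{-\beta H_m}$ being cyclic and separating). I would then set $a^{(k)} := (1 - 2^{-k})\, 1 + 2^{-k}\, P_{n_k}$; this lies in $\cstar{K}_+^{(k)}$ as a positive combination of two of its elements, is $\ups^{(k)}$- and $j^{(k)}$-invariant, is strictly reflection positive since $\tau(a^{(k)*} b) = (1 - 2^{-k})\tau(b) + 2^{-k}\tau(P_{n_k} b) > 0$ for nonzero $b \in \cstar{K}_+^{(k)}$ (using $\tau(b) \geq 0$ and $\tau(P_{n_k} b) > 0$), and satisfies $\|a^{(k)} - 1\| = 2^{-k}\|P_{n_k} - 1\| \leq 2^{-k}$.

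\emph{Assembling the columns.} Since $\|a^{(k)} - 1\| \leq 2^{-k}$ and $\|a^{(k)}\| \leq 1$, the partial products $\bigotimes_{k \leq n} a^{(k)}$ are Cauchy in operator norm; let $a \in \cstar{A}$ be their limit. Then $a$ is $\ups$-invariant (and $j$-invariant) because each factor is, and $a \in \cstar{K}_+$ by the first step. For the final point, strict reflection positivity of $a$, I would take $0 \neq b \in \cstar{K}_+$, write $\tau(a^{*} b) = \lim_n \tau\bigl((\bigotimes_{k \leq n} a^{(k)})\, b\bigr)$, and argue that, because $\tau$ factorizes over columns, each $\bigotimes_{k \leq n} a^{(k)}$ is strictly reflection positive on $\cstar{A}_{\leq n} := \bigotimes_{k \leq n}\cstar{A}^{(k)}$ and $\tau^{(k)}(a^{(k)}) = (1 - 2^{-k}) + 2^{-k} n_k^{-2N} > 0$, the limit is a convergent product of strictly positive numbers, hence positive. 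The main obstacle will be this last verification: keeping track of the $\cstar{K}_+$-structure of $b$ under approximation by observables supported on finitely many columns, and proving that a product of strictly reflection positive observables on disjoint tensor factors is again strictly reflection positive. Adapting the faithfulness computation of Lemma~\ref{lma:IsingChain} to arbitrary local dimension in the second step is the other technical point, but a routine one.
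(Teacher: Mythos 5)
Your proposal shares the paper's two-stage architecture: produce a strictly reflection positive $\ups$-invariant observable on each finite ``column'' $M_{n_k}(\CCC)^{\otimes 2N}$, then assemble an infinite norm-convergent product of perturbations of the identity. The assembly step is essentially identical to the paper's (the paper takes $a=\lim_i\prod_{m=1}^i(1+b_m/m^2)$, you take $\prod_k((1-2^{-k})+2^{-k}P_{n_k})$), and the obstacle you flag at the end --- strict positivity of the limit against cone elements not supported on finitely many columns --- is one the paper also leaves implicit; it can be closed by noting that the trace-preserving conditional expectations $E_n$ onto finitely many columns map $\cstar{K}_+$ into the corresponding finite cone (since $E_n(x\otimes\bar x)=E_n^L(x)\otimes\overline{E_n^L(x)}$) and satisfy $E_n(b)\to b$ in norm, together with the self-duality of $\cstar{K}_+$ with respect to $\tau$, so this is not where your argument fails.

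The genuine gap is in the single-column step. The paper establishes strict reflection positivity from scratch exactly once, for the spin-$1/2$ critical Ising chain, and the proof of Lemma \ref{lma:IsingChain} is an explicit free-fermion computation: Jordan--Wigner, the exact correlation matrix $B_{j,k}$, and an integral representation showing positive definiteness of $A_{j,k}$. Arbitrary on-site dimension $d$ is then handled not by building a new critical model but by taking $d\le 2^k$, compressing the $M_{2^k}(\CCC)^{\otimes 2N}$ Ising observable with $p^{\otimes 2N}$ for a real projection $p$ onto $\CCC^d\subset\CCC^{2^k}$ (which stays in the cone because $p=\bar p$, so $p^{\otimes 2N}=p^{\otimes N}\otimes\overline{p^{\otimes N}}\in\cstar{K}_+$, and preserves strictness), and normalizing. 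Your replacement --- a self-dual quantum Potts chain for each $m$, with the faithfulness/strict-positivity verification declared ``routine'' --- does not go through as stated: Proposition \ref{prop:jaffeRP} and the Krein--Rutman argument give you ordinary reflection positivity of the ground state projector for free, but the \emph{strict} positivity (equivalently, that the half-ring restriction is faithful, i.e.\ that the ground vector is cyclic and separating for $\cstar{A}_L$) is precisely the hard content, and the only technique the paper has for it is the free-fermion solvability that Potts lacks for $m\ge 3$ (for $m\ge 5$ the self-dual point is not even a continuous transition). Since this is the one place in the whole lemma where strictness is actually created, deferring it is not a routine technicality; the paper's compression trick exists to avoid exactly this.
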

\begin{proof}
Suppose $\cstar{B} = M_{d}(\CCC)$. By Lemma \ref{lma:IsingChain}, there is a strictly reflection positive $\ups$-invariant observable when $d=2^k$ for some $k \in \NN$. If $d<2^k$, let $p$ be the projection in $M_{2^k}(\CCC)$ to $\CCC^d \subset \CCC^{2^k}$. Then projecting a strictly reflection positive $\ups$-invariant state on $M_{2^k}(\CCC)^{\otimes 2N}$ using $p^{\otimes 2N}$, we get a strictly reflection positive $\ups$-invariant positive linear functional on $\cstar{A}$. After normalization, we get a strictly reflection positive $\ups$-invariant state, and the projector to this state defines a strictly reflection positive $\ups$-invariant observable. 

Suppose now $\cstar{B} = \bigotimes_{i=1}^{\infty} M_{n_i}(\CCC)$, $n_i \in \NN$ is a uniformly hyperfinite $C^*$-algebra. Let us choose a sequence of strictly reflection positive $\ups$-invariant observables $\{b_i\}_{i \in \NN}$, $\|b_i\|=1$ for $M_{n_i}(\CCC)^{\otimes 2N}$. Then an observable $a$ that is a limit of $\{ a_i = \prod_{m=1}^i (1 + b_m/m^2) \}$ is strictly reflection positive and $\ups$-invariant.
\end{proof}

Combined with Lemma \ref{lma:perturbationisstrictlyRP}, we get the following

\begin{prop} \label{prop:makingstrictlyRP}
Let $\psi$ be an $\ups$-invariant exact reflection positive  state on $\cstar{A}$. Then there exists a unitarily equivalent strictly reflection positive $\ups$-invariant state on $\cstar{A}$.
\end{prop}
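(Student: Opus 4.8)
The plan is to obtain the desired state simply by perturbing $\psi$ with a well-chosen observable, which is exactly why this proposition is placed right after Lemmas~\ref{lma:existenceofSRPZNinvObservable} and~\ref{lma:perturbationisstrictlyRP}. First I would invoke Lemma~\ref{lma:existenceofSRPZNinvObservable} to fix a strictly reflection positive $\ups$-invariant observable $a \in \cstar{K}_+$. Since $a^* a \in \cstar{K}_+$ is then again strictly reflection positive and $\psi$ is reflection positive, we have $\psi(a^* a) > 0$, so
\[
\psi_a(b) := \frac{\psi(a^* b a)}{\psi(a^* a)}, \qquad b \in \cstar{A},
\]
defines a state on $\cstar{A}$, and the claim is that $\psi_a$ does the job.

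Three things then have to be verified. First, \emph{strict reflection positivity} of $\psi_a$ is precisely the conclusion of Lemma~\ref{lma:perturbationisstrictlyRP} applied with $\omega = \psi$; here one uses that $\psi$ is pure and exact so that the hypotheses of that lemma are met, and its proof moreover shows $\psi_a$ stays exact (though exactness of the output is not part of the claim). Second, \emph{$\ups$-invariance} of $\psi_a$: since $\ups(a) = a$ forces $\ups(a^*) = a^*$ and $\ups^{-1}(a) = a$, the $\ups$-invariance of $\psi$ gives
\[
\psi_a(\ups(b)) = \frac{\psi\bigl(a^*\,\ups(b)\,a\bigr)}{\psi(a^* a)} = \frac{\psi\bigl(\ups(\ups^{-1}(a^*)\, b\, \ups^{-1}(a))\bigr)}{\psi(a^* a)} = \frac{\psi(a^* b a)}{\psi(a^* a)} = \psi_a(b).
\]
Third, \emph{unitary equivalence} with $\psi$: by construction $\psi_a$ is the vector state of the unit vector $\pi_\psi(a)\Omega_\psi / \lVert \pi_\psi(a)\Omega_\psi\rVert$ in the GNS representation $(\pi_\psi,\hilb{H}_\psi,\Omega_\psi)$ of $\psi$, which is irreducible because $\psi$ is pure, so by Kadison transitivity $\psi_a$ and $\psi$ are related by $\Ad_u$ for a unitary $u \in \cstar{A}$.

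I do not anticipate a real obstacle at this stage: the substantive content has already been absorbed into Lemma~\ref{lma:existenceofSRPZNinvObservable} (the concrete $\ups$-invariant strictly reflection positive observable built from the transverse-field Ising ground state and the Jordan--Wigner transformation) and into Lemma~\ref{lma:perturbationisstrictlyRP}. The only delicate point is that perturbing by $a$ must not destroy $\ups$-invariance — the short computation above — and this is exactly why it was important that Lemma~\ref{lma:existenceofSRPZNinvObservable} supplied an observable that is itself $\ups$-invariant, not merely strictly reflection positive; one also has to keep the purity and exactness of $\psi$ in play throughout so that Lemma~\ref{lma:perturbationisstrictlyRP} genuinely applies.
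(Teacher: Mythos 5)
Your proposal is correct and is essentially the paper's own argument: the paper derives this proposition by directly combining Lemma~\ref{lma:existenceofSRPZNinvObservable} with Lemma~\ref{lma:perturbationisstrictlyRP}, exactly as you do. Your explicit verification of $\ups$-invariance and of unitary equivalence via Kadison transitivity fills in details the paper leaves implicit, and your observation that purity of $\psi$ is needed to invoke Lemma~\ref{lma:perturbationisstrictlyRP} (though not written into the proposition's hypotheses) is a fair caveat that applies equally to the paper's version, where the proposition is only ever applied to pure (canonically purified) states.
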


\subsection{Applications for 1d quantum spin systems}

\label{sapp:1dspinchain}
Let us mention some elementary applications of the proven statements for finite one-dimensional quantum spin systems. 

Let $\hilb{H} = (\CCC^d)^{\otimes 2N}$ be the Hilbert space of a 1d quantum spin chain of length $2N$ with on-site Hilbert spaces $\CCC^{d}$. The Hilbert space of the first (the last) $N$ sites is denoted $\hilb{H}_{L}$ ($\hilb{H}_R$), so that $\hilb{H} = \hilb{H}_L \otimes \hilb{H}_R$. The algebras $\cstar{A}$, $\cstar{A}_L$, $\cstar{A}_R$ are the algebras of operators on $\hilb{H}$, $\hilb{H}_L$, $\hilb{H}_R$, respectively. We let $J$ be the anti-unitary operator on $\hilb{H}$ defined by $J(v_1 \otimes v_2 \otimes ... \otimes v_{2N}) = \bar{v}_{2N} \otimes \bar{v}_{2N-1} \otimes ... \otimes \bar{v}_{1}$. It corresponds to a $CRT$-symmetry of the system. 

A state described by a density matrix $\rho$ is reflection positive if $\Tr(\rho x J x J) \geq 0$ for any $x \in \cstar{A}_L$. It is strictly reflection positive if the inequality is strict $\Tr(\rho x J x J) > 0$ for any non-zero $x \in \cstar{A}_L$. A vector $\Omega \in \hilb{H}$ is (strictly) reflection positive if the projector to $\Omega$ is (strictly) reflection positive and $J \Omega = \Omega$.

In \cite{jaffe2017reflection}, the following characterization of reflection positive density matrices has been given
\begin{prop}[Theorem IV.10 in \cite{jaffe2017reflection}] \label{prop:jaffeRP}
Let $H \in \cstar{A}$ be a self-adjoint observable that corresponds to the Hamiltonian of the system. Suppose $H = H_L + H_0 + H_R$ for some $H_{L} \in \cstar{A}_L$, $H_{R} = J H_L J \in \cstar{A}_R$, and a reflection positive $(-H_0) \in \cstar{A}$. Let $\rho = Z^{-1} e^{-\beta H}$, $Z = \Tr e^{-\beta H}$, $\beta \in (0,\infty)$ be the density matrix of the Gibbs state at inverse temperature $\beta$. Then $\rho$ is reflection positive.   
\end{prop}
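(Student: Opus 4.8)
The plan is to reduce reflection positivity of the Gibbs state to the single statement $e^{-\beta H} \in \cstar{K}_+$, where $\cstar{K}_+$ is the cone of reflection positive observables, and then to build $e^{-\beta H}$ as a norm limit of products of three kinds of elements of $\cstar{K}_+$. Throughout I would identify $\cstar{A}_R$ with $\cstar{B} := \cstar{A}_L$ via the site-reflection, so that $\cstar{A} = \cstar{B} \otimes \cstar{B}$, the anti-unitary $J$ implements $j(a \otimes b) = \bar b \otimes \bar a$, and $\cstar{K}_+$ is the norm closure of the positive span of $\{x \otimes \bar x : x \in \cstar{B}\}$. Recall the structural facts already recorded in the preceding subsection: $\cstar{K}_+$ is a norm-closed convex cone, it contains $1 = 1 \otimes \bar 1$, it is closed under multiplication, and it is fixed pointwise by $j$ (indeed $j(x \otimes \bar x) = x \otimes \bar x$).

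The reduction goes as follows. If $a, b \in \cstar{K}_+$ then $\Tr(ab) \geq 0$, because on generators $\Tr\big((x \otimes \bar x)(y \otimes \bar y)\big) = \Tr(xy)\,\overline{\Tr(xy)} = |\Tr(xy)|^{2} \geq 0$, and this passes to the closed positive span. Hence it suffices to show $e^{-\beta H} \in \cstar{K}_+$: then $\rho = Z^{-1} e^{-\beta H} \in \cstar{K}_+$ since $Z = \Tr e^{-\beta H} > 0$, so $\Tr(\rho\, x J x J) = \Tr\big(\rho\,(x \otimes \bar x)\big) \geq 0$ for every $x \in \cstar{A}_L$, which is exactly reflection positivity of $\rho$ (and $\rho$ is automatically $J$-invariant, since $J H_0 J = H_0$ follows from $(-H_0) \in \cstar{K}_+$ being $j$-fixed, while $J H_L J = H_R$).

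Next I would assemble $e^{-\beta H}$ from cone elements. \emph{(i)} For $t \geq 0$ one has $e^{-t H_L}\, e^{-t H_R} \in \cstar{K}_+$: since $H_R = J H_L J$, $e^{-t H_R} = J e^{-t H_L} J$, which under the chosen identification is $\overline{e^{-t H_L}}$, so the commuting product equals $x \otimes \bar x$ with $x = e^{-t H_L}$. \emph{(ii)} For $t \geq 0$ one has $e^{-t H_0} \in \cstar{K}_+$: since $(-H_0) \in \cstar{K}_+$, the series $e^{-t H_0} = \sum_{k \geq 0} \frac{t^{k}}{k!}(-H_0)^{k}$ is a norm-convergent sum, with nonnegative coefficients, of powers of an element of $\cstar{K}_+$, hence it lies in $\cstar{K}_+$ by the facts quoted above. \emph{(iii)} Finally, by the Trotter product formula (all operators are finite matrices, so convergence is in operator norm),
\[
e^{-\beta H} \;=\; \lim_{n \to \infty} \Big( e^{-\frac{\beta}{n}(H_L + H_R)}\, e^{-\frac{\beta}{n} H_0} \Big)^{n},
\]
and since $e^{-\frac{\beta}{n}(H_L+H_R)} = e^{-\frac{\beta}{n}H_L}\, e^{-\frac{\beta}{n}H_R} \in \cstar{K}_+$ by (i) and $e^{-\frac{\beta}{n} H_0} \in \cstar{K}_+$ by (ii), every term of the sequence is a product of elements of $\cstar{K}_+$, hence in $\cstar{K}_+$; as $\cstar{K}_+$ is norm-closed, so is the limit $e^{-\beta H}$.

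The one genuinely non-formal point is step (iii): $H_0$ need not commute with $H_L + H_R$, which is why $e^{-\beta H}$ cannot simply be written as a product of three factors and one must pass to a limit. A Duhamel/Dyson expansion $e^{-\beta H} = \sum_{k \geq 0} \int_{0 \leq s_1 \leq \dots \leq s_k \leq \beta} e^{-(\beta - s_k)(H_L + H_R)}(-H_0)\cdots(-H_0)\, e^{-s_1 (H_L + H_R)}\, ds$ would serve equally well, being a norm-convergent sum of integrals of $\cstar{K}_+$-valued continuous functions, since integrating a continuous map into the closed cone $\cstar{K}_+$ stays in $\cstar{K}_+$. Everything else in the argument is bookkeeping about the cone $\cstar{K}_+$.
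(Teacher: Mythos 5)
Your argument is correct, but note that the paper does not actually prove Proposition \ref{prop:jaffeRP}: it is imported verbatim as Theorem IV.10 of \cite{jaffe2017reflection}, so there is no in-paper proof to compare against. What you have written is the standard argument (and essentially the one in the cited reference): reduce reflection positivity of $\rho$ to the membership $e^{-\beta H}\in\cstar{K}_+$ via the observation that $\Tr(ab)\geq 0$ for $a,b\in\cstar{K}_+$ (on generators this is $|\Tr(xy)|^2\geq 0$), and then realize $e^{-\beta H}$ as a norm limit of products of the cone elements $e^{-tH_L}e^{-tH_R}=x\otimes\bar x$ and $e^{-tH_0}=\sum_{k}t^k(-H_0)^k/k!$ through the Lie--Trotter formula. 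All the cone facts you invoke --- $1\in\cstar{K}_+$, closure under products, positive combinations, and norm limits, and the identification $xJxJ = x\otimes\bar x$ for $x\in\cstar{A}_L$ --- are exactly those recorded in the paper's appendix, and since everything is finite-dimensional all limits are in operator norm, so there are no analytic subtleties. One small point worth making explicit: step (i) uses that $x\otimes\bar x$ generates $\cstar{K}_+$ for \emph{arbitrary} $x\in\cstar{B}$, not only self-adjoint $x$, which is needed because $e^{-tH_L}$ need not be self-adjoint if $H_L$ is not; this is consistent with the paper's definition of $\cstar{K}_+$, so no gap arises.
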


Let us show that reflection positivity of the Gibbs states implies reflection positivity of the ground state.

\begin{prop} \label{prop:PerronFrob1dspinchains}
Let $H \in \cstar{A}$ be a self-adjoint observable that corresponds to the Hamiltonian of the system. Let $\rho = Z^{-1} e^{-\beta H}$, $Z = \Tr e^{-\beta H}$, $\beta \in (0,\infty)$ be the density matrix of the Gibbs state at inverse temperature $\beta$. If $\rho$ is reflection positive, then $H$ has a ground state that is also reflection positive. If $\rho$ is strictly reflection positive, then $H$ has a unique ground state that is also strictly reflection positive. Moreover, there are no other reflection positive eigenstates except for the ground state.
\end{prop}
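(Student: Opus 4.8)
The plan is to reduce the statement to the cone structure of $\rho = Z^{-1}e^{-\beta H}$ as a reflection positive density matrix and then apply Proposition~\ref{prop:perronfrobenius} together with a short cone-theoretic argument. Since $\beta>0$, the operator $\rho$ is positive with eigenvalues $Z^{-1}e^{-\beta E}$ as $E$ ranges over the eigenvalues of $H$; hence its spectral radius is $r(\rho) = Z^{-1}e^{-\beta E_0}$, where $E_0$ is the lowest eigenvalue of $H$, and the eigenspace of $\rho$ for the eigenvalue $r(\rho)$ is exactly the ground eigenspace $V_0 := \ker(H - E_0)$. Therefore a (strictly) reflection positive eigenvector of $\rho$ for the eigenvalue $r(\rho)$ is precisely a (strictly) reflection positive ground state of $H$, and it suffices to analyze the top eigenspace of $\rho$.

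First I would treat the non-strict case. As in the proof of Proposition~\ref{prop:perronfrobenius}, reflection positivity of $\rho$ gives $\rho\hilb{P}\subseteq\hilb{P}$, where $\hilb{P}\subset\hilb{H}_{\mathrm{sa}}$ is the natural positive cone of $\cstar{A}_L$ with respect to the modular conjugation $J$; for the non-strict case one can see this by approximating $\rho$ with the strictly reflection positive density matrices $(1-\varepsilon)\rho + \varepsilon\,|\Omega_0\rangle\langle\Omega_0|$, where $\Omega_0$ is the maximally entangled vector (which is strictly reflection positive, since $\langle\Omega_0, xJxJ\Omega_0\rangle = \tau(x^{*}x) > 0$ for $0\neq x\in\cstar{A}_L$), applying the argument of Proposition~\ref{prop:perronfrobenius} to these, and letting $\varepsilon\to 0$ using that $\hilb{P}$ is closed. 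Since $r(\rho)>0$ and $\hilb{P}$ is a total cone in $\hilb{H}_{\mathrm{sa}}$, the weak Krein--Rutman theorem produces an eigenvector $\xi\in\hilb{P}$ of $\rho$ with eigenvalue $r(\rho)$; by the reduction above, $\xi$ is a reflection positive ground state of $H$. If moreover $\rho$ is strictly reflection positive, Proposition~\ref{prop:perronfrobenius} gives such a $\xi$ which is in addition cyclic and separating for $\cstar{A}_L$ --- equivalently, interior to $\hilb{P}$ --- and whose rank-one projection is strictly reflection positive; so a strictly reflection positive ground state exists.

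It remains, in the strict case, to prove that the ground state is unique and that it is the only reflection positive eigenstate; this is the part not supplied by Proposition~\ref{prop:perronfrobenius}, and the cone-boundary argument for uniqueness is the step I expect to be the main obstacle. Strict reflection positivity upgrades $\rho\hilb{P}\subseteq\hilb{P}$ to $\rho(\hilb{P}\setminus\{0\})\subseteq\operatorname{int}\hilb{P}$ and forces $J\rho J = \rho$, so $V_0$ is $J$-invariant and $V_0\cap\hilb{H}_{\mathrm{sa}}$ is a real subspace of dimension $\dim_{\CCC}V_0$ containing $\xi$. If $\dim V_0\ge 2$, pick $\eta\in V_0\cap\hilb{H}_{\mathrm{sa}}$ not proportional to $\xi$. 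Because $\hilb{P}$ is a pointed closed cone, the rays $\{\xi\pm t\eta : t\ge 0\}$ cannot both lie in $\hilb{P}$ (otherwise $\pm\eta\in\hilb{P}$, so $\hilb{P}$ would contain the line $\RR\eta$); replacing $\eta$ by $-\eta$ if needed, there is a finite largest $t_0>0$ with $\zeta := \xi - t_0\eta\in\hilb{P}$, whence $\zeta\in\partial\hilb{P}$ and $\zeta\neq 0$ (else $\eta\in\RR\xi$). But $\rho\zeta = r(\rho)\zeta$, so a positive multiple of $\zeta$ lies in $\rho(\hilb{P}\setminus\{0\})\subseteq\operatorname{int}\hilb{P}$, contradicting $\zeta\in\partial\hilb{P}$; hence $\dim V_0 = 1$. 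Finally, if $\eta$ is any reflection positive eigenstate of $H$, then $\eta\in\hilb{P}\setminus\{0\}$ while $\xi$ is interior, so $\langle\xi,\eta\rangle > 0$; since eigenvectors of the self-adjoint $H$ for distinct eigenvalues are orthogonal, $\eta$ must have eigenvalue $E_0$, so $\eta\in V_0 = \CCC\xi$, and being $J$-invariant and in $\hilb{P}$ it is a positive multiple of $\xi$. Thus the ground state is the unique reflection positive eigenstate.
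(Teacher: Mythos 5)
Your proof is correct and follows essentially the same route as the paper: both arguments reduce the statement to the invariance properties of the natural positive cone $\hilb{P}$ under $\rho$ (namely $\rho\hilb{P}\subseteq\hilb{P}$, upgraded to $\rho(\hilb{P}\setminus\{0\})\subseteq\operatorname{int}\hilb{P}$ in the strict case) and then apply Krein--Rutman. The only difference is that where the paper simply cites the strong Krein--Rutman theorem for the simplicity of the top eigenvalue and the absence of other positive eigenvectors, you prove exactly that portion by hand via the cone-boundary argument, and you also spell out the (correct but left implicit in the paper) reduction from the top eigenspace of $\rho$ to the ground eigenspace of $H$.
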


\begin{proof}
The proof is an application of the weak (for a reflection positive $\rho$) and the strong (for a strictly reflection positive $\rho$) version of the Krein-Rutman theorem (e.g. see Theorem 19.2 and Theorem 19.3 in \cite{deimling2013nonlinear}).

Let $\hilb{H}_{\text{sa}} \subset \hilb{H}$ be the real subspace of vectors $J v = v$, and let $\Omega$ be a strictly positive vector. We denote the positive cone of the convex hull of vectors $\{x J x J \Omega\}_{x \in \cstar{A}_L}$ by $\hilb{P}$. It is a total self-dual cone with non-empty interior. 

If $\rho$ is reflection positive, then for any reflection positive observable $x$, the operator $\rho x$ is reflection positive. Hence, $\rho \hilb{P} \subseteq \hilb{P}$. If $\rho$ is strictly reflection positive, then the operator $\rho x$ is strictly reflection positive, and $\rho \hilb{P}$ belongs to the interior of $\hilb{P}$. Therefore, in both cases $\rho$ satisfies the criteria of the Krein-Rutman theorems that imply the statements of the proposition.
\end{proof}

Suppose now we have a periodic 1d quantum spin chain of length $2N$. We let $U$ be the "shift by two sites" unitary operator on $\hilb{H}$ defined by $U(v_1 \otimes v_2 \otimes ... \otimes v_{2N}) = v_{2N-1} \otimes v_{2N} \otimes v_1 \otimes ... \otimes v_{2N-2}$. The operator $U$ satisfies $U^N = 1$ and generates $\ZZ/N$-rotational symmetry.

\begin{prop} \label{prop:AngularMomenumofRP1dspinchain}
Any reflection positive $\ZZ/N$-rotationally invariant pure state on $\cstar{A}$ represented by a vector $\Omega \in \hilb{H}$ has vanishing angular momentum, i.e. $U \Omega = \Omega$.
\end{prop}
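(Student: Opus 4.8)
The plan is to compare $\Omega$ with a fixed \emph{strictly} reflection positive reference state of known angular momentum, using the self-dual cone of reflection positive vectors constructed in Appendix~\ref{app:canonicalpurification}.

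First I would normalize the phase of $\Omega$. Since the pure state $\omega$ represented by $\Omega$ is reflection positive, the discussion in Appendix~\ref{app:canonicalpurification} (a $j$-positive state is $j$-invariant) shows that $J\Omega$ and $\Omega$ represent the same pure state, so after rephasing we may assume $J\Omega=\Omega$; then $\Omega$ is a reflection positive vector and lies in the natural positive cone $\hilb{P}$. This rephasing multiplies $U\Omega$ by the same phase and so does not change its eigenvalue, and purity of $\omega$ together with $U$-invariance gives $U\Omega=e^{2\pi i k/N}\Omega$ for some $k\in\ZZ/N$; the assertion $U\Omega=\Omega$ is exactly $k=0$.

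Next I would produce the reference. By Lemma~\ref{lma:IsingChain} — together with the projection argument in the proof of Lemma~\ref{lma:existenceofSRPZNinvObservable} when the on-site dimension is not a power of $2$, which preserves purity, $\ups$-invariance and the relation $J\Xi=\Xi$ — there is a pure state $\omega_{\mathrm{ref}}$ on $\cstar{A}$, represented by a vector $\Xi$ with $J\Xi=\Xi$, that is strictly reflection positive and $\ups$-invariant. By Appendix~\ref{app:canonicalpurification}, strict reflection positivity is equivalent to $\Xi$ being cyclic and separating for $\cstar{A}_L$, so $\Xi$ lies in the interior of $\hilb{P}$, and consequently $\lal\Omega,\Xi\ral>0$ for our cone vector $\Omega$. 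The one step that is not purely formal is that $\omega_{\mathrm{ref}}$ has vanishing angular momentum, $U\Xi=\Xi$: the Ising Hamiltonian $H=-\sum_i X_iX_{i+1}-\sum_i Z_i$ commutes with the lattice shift and, along the family $-\sum_i X_iX_{i+1}-h\sum_i Z_i$, $h\in[1,\infty)$, has for every finite $N$ a unique ground state, which is therefore a $U$-eigenvector and is connected to the trivial product state $\bigotimes_i\ket{\uparrow}$; the angular momentum is a discrete invariant along this path, hence equals that of the product state, namely $0$, and this is unaffected by the projection $p^{\otimes 2N}$, which can be taken to commute with $U$. If preferred, Proposition~\ref{prop:makingstrictlyRP} supplies the strictly reflection positive perturbation without leaving the zero-momentum sector.

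Finally I would combine these: $\Xi$ and $\Omega$ are eigenvectors of the unitary $U$ with eigenvalues $1$ and $e^{2\pi i k/N}$, so $\lal\Omega,\Xi\ral=\lal U\Omega,U\Xi\ral=e^{-2\pi i k/N}\lal\Omega,\Xi\ral$; since $\lal\Omega,\Xi\ral>0$ this forces $e^{2\pi i k/N}=1$, i.e.\ $U\Omega=\Omega$. The identification of reflection positive vectors with the self-dual cone $\hilb{P}$ and the positivity estimate $\lal\Omega,\Xi\ral>0$ are already available in Appendix~\ref{app:canonicalpurification}, so those parts are routine; the only genuine obstacle is pinning the angular momentum of the reference state to $0$, which relies on the explicit solvable model of Lemma~\ref{lma:IsingChain}, and where one must verify that passing to a general on-site dimension, and perturbing to make the state strictly reflection positive, preserves both reflection positivity and the vanishing of the momentum.
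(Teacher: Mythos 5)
Your proof is correct and follows the same overall skeleton as the paper's: normalize $\Omega$ into the self-dual cone, produce a strictly reflection positive $\ups$-invariant reference, use $\lal\Omega,\Xi\ral>0$ to force equality of $U$-eigenvalues. Where you diverge is in the one step you correctly flag as the crux, pinning the reference's angular momentum to $0$. You do this by adiabatically continuing the transverse-field Ising ground state along $-\sum_i X_iX_{i+1}-h\sum_i Z_i$ to the product state at $h\to\infty$; this works (ground-state uniqueness along the path holds by Perron--Frobenius in the $X$-basis, the eigenvalue of $U$ is a continuous function into a discrete set, and the projection $p^{\otimes 2N}$ and the perturbation of Lemma~\ref{lma:perturbationisstrictlyRP} both commute with $U$), but it costs you a model-specific spectral argument. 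The paper avoids this entirely with a shorter trick: it introduces a \emph{second} comparison vector, the product vector $\Omega_0=v^{\otimes 2N}$, which is reflection positive (hence in the cone) and manifestly satisfies $U\Omega_0=\Omega_0$; strict reflection positivity of the reference $\Psi$ then gives $\lal\Psi,\Omega_0\ral>0$, which forces $\mu=1$ by the same eigenvalue argument you use at the end, with no need to know anything about how $\Psi$ was built. In other words, the paper uses the cone-overlap argument twice (once against $\Omega_0$ to fix $\mu$, once against $\Omega$ to fix $\lambda$), whereas you use it once and replace the first application by an explicit computation in the solvable model. Both are valid; the paper's version is more economical and does not depend on the particular construction of the strictly reflection positive state, only on its existence via Proposition~\ref{prop:makingstrictlyRP}.
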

\begin{proof}
Suppose $U \Omega = \lambda \Omega$ for some $\lambda\in \CCC$, $\lambda^N = 1$. We can choose a reflection positive product vector $\Omega_0 = v^{\otimes 2N}$, $v \in \CCC^{d}$ satisfying $U \Omega_0 = \Omega_0$. By Proposition \ref{prop:makingstrictlyRP}, there exists a pure strictly reflection positive rotationally invariant state on $\cstar{A}$. Let $\Psi \in \hilb{H}$ be the corresponding strictly reflection positive vector which satisfies $U \Psi = \mu \Psi$ for some $\mu \in \CCC$, $\mu^N =1$. We have $\lal\Psi, \Omega \ral >0 $ and $\lal\Psi, \Omega_0 \ral >0 $. Therefore, $\lambda = \mu = 1$.
\end{proof}

\printbibliography

\end{document}